\documentclass[format=acmsmall, review=false]{acmart}
\usepackage{acm-ec-25}
\usepackage{booktabs} % For formal tables
\setcitestyle{acmnumeric}
\usepackage{cleveref}
\usepackage[most]{tcolorbox} 
\usepackage{threeparttable}
\usepackage{tabularx, booktabs} 
\newcounter{tbox}  
 \usepackage{pdfpages}

%unique to this paper
%%%%
\usepackage{threeparttable}
\usepackage{tabularx, booktabs} 
\usepackage{multirow}
\usepackage{color, colortbl}
\definecolor{Gray}{gray}{0.9}

\makeatletter
\newcommand{\thickhline}{%
    \noalign {\ifnum 0=`}\fi \hrule height 1pt
    \futurelet \reserved@a \@xhline
}
\newcolumntype{"}{@{\hskip\tabcolsep\vrule width 1pt\hskip\tabcolsep}}
\makeatother
%%%

\newcommand{\tbx}{\ensuremath{\tilde{\mathbf{x}}}\xspace}
\newcommand{\ty}{\ensuremath{\tilde{y}}\xspace}
\newcommand{\tx}{\ensuremath{\tilde{x}}\xspace}
\newcommand{\df}{\ensuremath{\dot{f}}\xspace}

\newcommand{\om}{\ensuremath{\mathsf{OM\mbox{-}SR}}\xspace}
\newcommand{\Exp}{\ensuremath{\mathrm{Exp}}\xspace} 
\newcommand{\sto}{\ensuremath{\mathsf{ST\mbox{-}BA}}\xspace}
\newcommand{\tell}{\ensuremath{\tilde{\ell}}\xspace}

\newcommand{\bfq}{\ensuremath{\mathbf{q}}\xspace}
\newcommand{\cQ}{\ensuremath{\mathbf{Q}}\xspace}

\newcommand{\tpsi}{\ensuremath{\tilde{\psi}}\xspace}

\usepackage{macros_pan}
\newcommand{\anhai}[1]{{\color{black} #1}}
\setlength{\parskip}{0.2em}

\title{Bounding the Optimal Performance of  Online Randomized Primal-Dual Methods}

%Bounding the Optimal Performance of  Online Randomized Primal-Dual Methods
%Stochastic Balance Revisited: Tighter Competitive Bounds for Online Matching with Stochastic Rewards

\author{Pan Xu}
\authornote{\texttt{panxu0@gmail.com}}
%\authornote{Department of Computer Science, New Jersey Institute of Technology, Newark, NJ 07102, \texttt{pxu@njit.edu}}  

\begin{abstract}
The online randomized primal-dual method has broad applications in the design and analysis of online algorithms. In most cases, directly optimizing over a general monotone function space is technically challenging. Thus, a central challenge when using this method is identifying an appropriate function space, $\cF$, from which to select an optimal updating function $f \in \cF$ that yields the best possible lower bound on an algorithm's competitive ratio. The choice of $\cF$ must balance two competing goals: on the one hand, it should impose enough simplifying constraints on $f$ to facilitate worst-case analysis and derive valid bounds; on the other hand, it should remain sufficiently general to offer a wide range of candidate functions. Adding constraints to simplify analysis typically restricts flexibility, potentially weakening the resulting lower bound.

To address this issue, we propose an auxiliary-LP-based framework to systematically approximate the best competitive ratios achievable by the randomized primal-dual method across various function spaces. Specifically, we revisit the analysis framework introduced by Huang and Zhang (SICOMP 2024) for the Stochastic Balance algorithm in the context of vertex-weighted online matching with stochastic rewards. Our approach provides both \emph{lower} and \emph{upper bounds} on the optimal competitive ratios attainable within different choices of $\cF$. Notably, we show that Stochastic Balance achieves a competitive ratio of at least $0.5796$ (assuming equal vanishing probabilities), improving upon the previous best-known bound of $0.576$ by Huang and Zhang (SICOMP 2024). Additionally, our analysis establishes an upper bound of $0.5810$ within a function space strictly larger than that considered previously. Given the broad applicability of the randomized primal–dual method, we expect our framework to yield refined competitive bounds for other online algorithms.\end{abstract}

\begin{document}
\begin{titlepage}
\maketitle
% Optionally include a table of contents
\vspace{-0.01cm}
\setcounter{tocdepth}{1} % adjust to 1 if desired
\tableofcontents

\end{titlepage}

\newpage

\setcounter{page}{1}

\section{Introduction}
The online randomized primal-dual framework has seen wide applications in algorithm design and analysis for online-matching-related models. The following are a few notable examples. \cite{devanur2013randomized} presented an elegant proof based on the randomized primal-dual method, which recovers the optimal competitiveness of $1-1/e$ for the classical Ranking algorithm for (unweighted) online matching.\footnote{Throughout this paper, all online matching models are assumed under adversarial arrival order by default.}  \cite{buchbinder2007online} provided a primal-dual algorithm that achieves an optimal $1-1/e$ competitiveness for the Adwords problem under the small-bid assumption. The algorithm there shares the essence of that proposed in \cite{mehta2007adwords}. More related applications of the randomized primal-dual methods can be found in the following classical survey books \cite{huang2024online,buchbinder2009design,mehta2012online}.  

In general, the key component of the online randomized primal-dual framework is designing an updating rule for dual variables such that their feasibility in the dual problem is maintained in expectation throughout the entire online process.  

Consider a generic online matching model (with a maximization objective) and a generic online algorithm $\alg$. Suppose that at some moment, an online vertex $v$ arrives, and algorithm $\alg$ matches $v$ to an available offline vertex $u$, thus increasing the primal objective by an amount $w_{u,v} > 0$. To apply the randomized primal-dual approach, we need to develop a \emph{gain-sharing} function $f$ that splits the gain $w_{u,v}$ into two parts, updating the corresponding dual variables, denoted by $\alp_u$ and $\beta_v$, as follows:
\begin{align}\label{eqn:3-2-a}
\alp_u [f] &\gets \alp_u [f] + w_{u,v} \cdot f(\theta_u, \theta_v), \quad
\beta_v[f] \gets w_{u,v} \cdot \bp{1 - f(\theta_u, \theta_v)},
\end{align}
where $\alp_u[f]$ and $\beta_v[f]$ are viewed as functionals of $f$ and are both initialized to zero.\footnote{We borrow the term \emph{functional} from functional analysis and variational analysis literature, as we believe it is more appropriate than simply referring to it as a ``function.''} Here, $\theta_u$ and $\theta_v$ are random variables encoding stochastic information about the offline vertex $u$ and the online vertex $v$, respectively.\footnote{For example, $\theta_u$ might represent a uniformly sampled order for vertex $u$ in online matching, or the fraction of the budget used in the Adwords problem. Meanwhile, $\theta_v$ could encode the arrival time or order of vertex $v$.}

The updating rule in~\eqref{eqn:3-2-a} ensures that the total gain in the dual objective matches that in the primal program during each update.\footnote{Here we assume that every dual variable has a coefficient of one in the dual objective function.} Let $\balp[f] = (\alp_u)_u$ and $\bbeta [f]= (\beta_v)_v$ be the vectors of dual random variables, and suppose the set of non-boundary constraints in the dual program is expressed as $C_\lam(\balp, \bbeta) \geq c_\lam$ for all $\lam \in \Lam$, where each $C_\lam$ is a linear function over $\balp$ and $\bbeta$, and each $c_\lam$ is a positive constant. The final competitiveness of $\alg$ is then determined as  
\begin{align}\label{eqn:3-2-b}
\cL[f]:= \min_{\lam \in \Lam} \frac{\E[C_\lam(\balp, \bbeta)]}{c_\lam},
\end{align}
where the factor $1/\cL[f]$ is exactly the minimum scaling factor that must be applied to all dual variables to ensure feasibility in the dual program in expectation. Here, we interpret the right-hand side expression of~\eqref{eqn:3-2-b} as a functional of $f$. In other words, once $f$ is fixed, satisfying certain contextual conditions, the value of the right-hand side expression is determined accordingly.  

As a result, a natural question that arises is how to identify an appropriate function $f$ that maximizes the resulting value of $\cL[f]$ as defined in~\eqref{eqn:3-2-b}. This question is formalized as follows:  
\begin{align}\label{ques}
\max_{f \in \cF} \bP{\cL[f]= \min_{\lam \in \Lam} \frac{\E[C_\lam(\balp, \bbeta)]}{c_\lam}}.
\end{align}

\xhdr{Remarks on the Function Space $\cF$ in Program~\eqref{ques}}.  \\  
Note that the function space $\cF$, over which the maximization is taken, varies across different scenarios, depending on the specific algorithm and model being analyzed.\footnote{That being said, there may be some universal constraints to be imposed on $f$. For example, since the dual random variables $\alp_u$ and $\beta_v$ are required to be non-negative, every function $f$ must have a range within $[0,1]$ (this condition can be imposed without loss of generality by appropriate scaling).} In general, obtaining an analytical form of $\cL[f]$ is challenging, as it requires solving the inner minimization program on which $\cL[f]$ is defined. For ease of analysis, a common practice is to impose additional conditions on $f$ to simplify the process, though this may come at the cost of obtaining a compromised solution.  

\emph{We emphasize that the choice of $\cF$ directly affects the complexity of solving Program~\eqref{ques} exactly.} Generally, this complexity depends primarily on the \emph{shape} of the constraints. In practice, achieving a ``good'' or tractable shape often involves introducing a sufficient number of carefully chosen constraints. Thus, we must strike a careful balance: on the one hand, we should impose enough simplifying conditions to ensure that the resulting structure of $\cF$ allows us to find an optimal or near-optimal function efficiently; on the other hand, we should avoid overly restrictive constraints that significantly limit solution quality by excessively narrowing the function space, thus lowering the objective value of Program~\eqref{ques}.

\xhdr{Why Upper Bounding the Optimal Value of Program~\eqref{ques} Matters?}  
Thus far, most studies involving online randomized primal-dual have focused on obtaining lower bounds for Program~\eqref{ques}. A typical approach proceeds as follows: researchers first propose a specific function space $\cF$ that may incorporate certain simplifying conditions on $f$, and then present a feasible (or optimal) choice of $f \in \cF$ along with a target value $\Gamma$. They then demonstrate that $\E[C_\lam(\balp, \bbeta)]/c_\lam \geq \Gamma$ for all $\lam \in \Lam$. This immediately establishes a lower bound (or an optimal value) of $\Gamma$ for Program~\eqref{ques} under $\cF$, which also serves as a lower bound on the competitiveness achieved by the target algorithm.

In many scenarios, we encounter the following questions. Consider the two cases below.

\textbf{Scenario A}. Suppose we are able to solve Program~\eqref{ques} under $\cF$ approximately by identifying a feasible choice of $f$, which leads to a lower bound $\Gamma$. A natural question is: how good is the resulting lower bound $\Gamma$? What is the gap between $\Gamma$ and $\Gamma(\cF)$, where the latter represents the optimal value of Program~\eqref{ques} under $\cF$?  As expected, any bound on the gap between $\Gamma$ and $\Gamma(\cF)$ provides valuable insights into the quality of the obtained lower bound $\Gamma$.

\textbf{Scenario B}. Suppose we are able to solve Program~\eqref{ques} under $\cF$ optimally and obtain the optimal value $\Gamma(\cF)$. Another important question concerns the quality of our choice of $\cF$. As explained above, $\cF$ may include simplifying conditions on $f$, which could lead to a compromised solution (though still optimal under the imposed assumptions). How much compromise is incurred by these simplifying conditions added to $\cF$?  Specifically, let $ \cF^* \supset \cF$ denote the most general function space derived directly from the model and algorithm, without any simplifying conditions added. It would be highly beneficial to assess the gap between $\Gamma(\cF)$, the optimal value obtained, and $\Gamma(\cF^*)$, the optimal value of Program~\eqref{ques} under $\cF^*$, which is expected to be larger than $\Gamma(\cF)$.\footnote{As expected, the gap between $\Gamma(\cF)$ and $\Gamma(\cF^*)$ offers valuable insights into whether we should stop at the current choice $\cF$ or explore another function space $\cF' \supset \cF$, which is strictly larger than the current one. Intuitively, we are more motivated to pursue the latter when the gap is large, as a significant gap signals that excessive compromise is incurred by the simplifying conditions defining the current choice $\cF$.}

To address the above questions in both scenarios, we must develop techniques for deriving upper bounds on the optimal value of Program~\eqref{ques}. In this paper, we propose an auxiliary-LP-based framework that effectively approximates the optimal value of Program~\eqref{ques} with provably bounded optimality gaps. Specifically, we apply our framework to the randomized primal-dual analysis of \emph{Stochastic Balance} for the \emph{Online Matching with Stochastic Rewards} problem, introduced by~\cite{huang2020online}. We demonstrate the effectiveness of our approach by obtaining both \emph{lower} and \emph{upper bounds} on the optimal value of Program~\eqref{ques} under different choices of the function space $\cF$.

%%%%%

\section{Preliminaries}

\xhdr{Online Matching with Stochastic Rewards} (\om). Consider a bipartite graph $G = (U, V, E)$, where $U$ and $V$ denote the sets of offline and online agents,\footnote{Throughout this paper, the two terms of ``agent'' and ``vertex'' are used interchangeably.} respectively. An edge $e = (u, v) \in E$ represents a valid assignment between offline agent $u$ and online agent $v$, indicating mutual interest or feasibility. 
During each round, an online agent $v \in V$ arrives, and the set of its offline neighbors $N_v$ is revealed. At this point, any algorithm (\alg) must assign $v$ to an offline neighbor $u \in N_v$, provided that $N_v \neq \emptyset$. Each assignment of $e = (u, v)$ is followed by an independent Bernoulli trial, where $e$ is realized (i.e., present) with probability $p$ and absent otherwise. If $e$ is realized, we say that $e$ (and $u$) are \emph{matched}, and $u$ is no longer available (assuming each offline agent has a unit matching capacity).\footnote{Throughout this paper, we distinguish between two cases: an edge $e = (u,v)$ is \emph{assigned} if $v$ is assigned to $u$ (regardless of whether $e$ is realized), while $e$ is \emph{matched} if it is both assigned and realized.} If $e = (u,v)$ is assigned but not matched, then $u$ remains available, and \alg may continue assigning subsequent arriving agents $v'$ such that $u \in N_{v'}$ until $u$ is matched.

Throughout this paper, by default, we consider the following settings of \om:  
\begin{enumerate}
    \item \emph{Adversarial arrival order.} An oblivious adversary determines the arrival order of online agents in advance, without access to the algorithm.  
    \item \emph{Equal vanishing probabilities.} Each edge exists independently with the same probability $p \in [0,1]$, where $p \to 0$.  
    \item \emph{(Offline-sided) vertex-weighted.} Each offline vertex $u$ is associated with a positive weight $w_u$, and the objective is to design an online algorithm that maximizes the expected total weight sum over all matched offline vertices.
\end{enumerate}

%\pan{Add unweighted and vertex-weighted settings later, if needed.} Our goal is to design an online matching algorithm that maximizes the expected total number of matches (in the unweighted case).  

\xhdr{An Alternative Formulation of \om}. As shown in~\cite{mehta2012online}, the model \om with vanishing probabilities can be equivalently formulated as follows. Each offline agent $u \in U$ has an independent threshold $\Theta_u \sim \mathrm{Exp}(1)$, which follows an exponential distribution with \anhai{a} mean \anhai{of} one. The equivalent online process is as follows: Each offline agent $u$ can accept assignments from arriving online neighbors $v$ such that $u \in N_v$ until the total load reaches $\Theta_u$, \emph{where the load is defined as the cumulative sum of the existence probabilities of all edges assigned to $u$}. It can be verified that the final matching probability of $u$ equals the expected total load on $u$.\footnote{Observe that the aforementioned approach of restating \om applies to the case of unequal vanishing probabilities as well. In that case, each edge is allowed to have a distinct existence probability, and the load of an offline agent remains defined as the sum of the existence probabilities of all edges assigned to it.}

%For completeness, we add a brief explanation in Appendix~\ref{app:equiv}. \pan{perhaps no need to add it? no idea. for my references. add it would be better.}

%For any offline agent $u \in U$, we refer to the total load received as the sum of 
%
%such that it can accept assignments until the total load
%
%Any algorithm (\alg) is allowed to accept assignments of arriving online neighbors $v$ until the total load, defined as 

\xhdr{Stochastic Balance}. Stochastic Balance (\sto) was first introduced by~\cite{mehta2013online} for the unweighted \om and later generalized to the vertex-weighted \om by~\cite{huang2020online}. Consider a vertex-weighted \om instance. Using the alternative formulation, it can be represented as $\{G=(U,V,E), \{\Theta_u, w_u ~|~ u\in U\}, p\}$, where $\Theta_u \sim \Exp(1)$ is the random threshold for offline agent $u$, $w_u > 0$ is its weight, and $p$ is the uniform existence probability for every edge. {Note that under the adversarial arrival setting, any algorithm has no access to any information about the instance except for $\{\Theta_u, w_u| u\in U\}$.}  Stochastic Balance was generalized by ~\cite{huang2020online} for the vertex-weighted \om, which is formally stated in Algorithm~\ref{alg:sto}.

\begin{algorithm}[th!] 
\caption{STochastic BAlance (\sto) for the Vertex-Weighted \om~\cite{huang2020online}.}\label{alg:sto}
\DontPrintSemicolon
\tcc{\bluee{The input instance is a bipartite graph $G=(U,V,E)$, which is inaccessible to the algorithm. Each $u \in U$ has a weight $w_u > 0$ and a random load threshold $\Theta_u \sim \Exp(1)$. Online agents $v \in V$ and their neighbors are revealed  in an adversarial order.}}
\For{each online agent $v \in V$}
{assign $v$ to an available offline neighbor $u \in N_v$ that maximizes the value of $w_u \cdot p \cdot (1-f(\ell_u))$,\footnotemark ~where $f \in  \mathcal{C}_{\uparrow}\sbp{[0,\infty), [0,1]}$ (a non-decreasing function mapping from $[0,\infty)$ to $[0,1]$) is to be determined later, and $\ell_u$ represents the load of agent $u$ at that time. Break ties using any deterministic rule.}
\end{algorithm}

\footnotetext{In this context, an offline agent $u$ is considered \emph{available} if and only if the total cumulative load has not yet reached its threshold $\Theta_u$, \ie $\ell_u < \Theta_u$.}

\xhdr{Configuration LP and Online Randomized Primal-Dual}. For each $u \in U$, let $N_u \subseteq V$ denotes the set of online neighbors of $u$, and $p_u(S):=\min \sbp{1, |S| \, p}$ for each $S \subseteq N_u$.~\cite{huang2020online} introduced the following primal and dual configuration LPs for a vertex-weighted \om instance:
\begin{align*}
(\mbox{\tbf{P-ConfigLP}})~\max & \sum_{u \in U} \sum_{S \subseteq N_u} w_u \cdot p_u(S) \cdot x_{u, S} &&  \\
&  \sum_{S \subseteq N_u} x_{u, S} \le 1  && \forall u \in U; \\
&  \sum_{u \in U} \sum_{S \subseteq N_u: S \ni v} x_{u, S} \le 1  && \forall v \in V;\\
&x_{u, S} \ge 0 && \forall u \in U, \forall S \subseteq N_u.
\end{align*}
\begin{align*}
(\mbox{\textbf{D-ConfigLP}})~\min & \sum_{u \in U} \alp_u+\sum_{v \in V} \beta_v &&  \\
& \alp_u+\sum_{v \in S} \beta_v \ge w_u \cdot p_u(S)  && \forall u \in U, \forall S \subseteq N_u;\\
& \alp_u, \beta_v \ge 0 && \forall u \in U, \forall v \in V.
\end{align*}

%The randomized Primal-Dual framework was first introduced by XX, providing an elegant proof that the classical Ranking algorithm achieves a competitiveness of $1 - 1/e$. 

For the vertex-weighted \om,~\cite{huang2020online} proposed the following customized updating rule when $\sto$ assigns an arriving $v$ to an available offline neighbor $u \in N_v$:  
\begin{align}\label{box:pd}
\alp_u \gets \alp_u+\Del \alp_u=\alp_u+ w_u \cdot p \cdot f(\ell_u), \quad \beta_v \gets w_u \cdot p \cdot (1-f(\ell_u)),
\end{align}
which can be viewed as a special case of the general version defined in~\eqref{eqn:3-2-a}, where $w_{u,v}= w_u \cdot p$, $\theta_u=\ell_u$ represents the load level of agent $u$ at the time of assignment, and $f$ is a single-variable function that does not use any information from the arriving agent $v$.  

As a result, for any given function $f$, the resulting competitiveness of \sto as shown in~\eqref{eqn:3-2-b} \anhai{becomes}

\begin{align*}
\cL[f]&= \min_{\lam \in \Lam} \frac{\E[C_\lam(\balp, \bbeta)]}{c_\lam}=\min_{u \in U, S \subseteq N_u} \frac{\E\bb{\alp_u+\sum_{v \in S} \beta_v}}{w_u \cdot p_u(S)}.
\end{align*}

The natural question defined in~\eqref{ques} is updated as follows:
\begin{align}
{\max_{f \in \cF} \bP{ \cL[f]:=\min_{u \in U, S \subseteq N_u} \frac{\E\bb{\alp_u+\sum_{v \in S} \beta_v}}{w_u \cdot p_u(S)}}.}\label{eqn:Lf}
\end{align}
~\cite{huang2020online} proposed the following function space to simplify the process of finding a lower bound on the optimal value of Program~\eqref{eqn:Lf}:  
\begin{align}\label{eqn:3-4-a}
\cF_2= \mathcal{C}_{\uparrow}\sbp{[0,\infty), [0,1]} \cap \{f \mid f(z)=1-1/e, \forall z \ge 1\}=:\mathcal{C}_{\uparrow}^{1 - 1/e~\forall z \geq 1}([0,\infty), [0,1]),
\end{align}
which represents the collection of non-decreasing continuous functions mapping from $[0,\infty)$ to $[0,1]$ and satisfying $f(z) = 1 - 1/e$ for all $z \geq 1$.
Specifically, they arrived at a function that achieves an objective value of $\cL[f] \approx 0.576$, with $f$ taking a rather complex form.

\section{Main Contributions and Techniques}
In this paper, we focus on the vertex-weighted \om problem and the Stochastic Balance algorithm, as described in Algorithm~\ref{alg:sto}. We introduce an auxiliary-LP-based framework designed to systematically approximate the optimal value of Program~\eqref{ques}, providing rigorous bounds on its approximation accuracy. Our main results, summarized in Table~\ref{table:con}, refine existing numerical bounds obtained using the randomized primal-dual approach introduced by~\cite{huang2020online}. In particular, we provide important structural insights by precisely characterizing the adversary's optimal strategy for the inner minimization problem~\eqref{eqn:3-2-b}, thus enhancing our understanding of how the adversary optimally manipulates outcomes in the primal-dual framework.

\renewcommand{\arraystretch}{1.35}
\setlength{\tabcolsep}{5pt}

\begin{table}[th!]
\caption{Summary of main numerical results, where $\cF$ represents the function space in Program~\eqref{eqn:Lf}. Here, $\underline{\Gamma}(\cF)$ and $\overline{\Gamma}(\cF)$ denote lower and upper bounds on its optimal value, $\Gamma(\cF)$.}
\label{table:con}
\centering
\begin{tabularx}{\linewidth}{c!{\vrule width 1.1pt} >{$}X<{$} c c}
\thickhline
 & \cF & $\underline{\Gamma}(\cF)$ & $\overline{\Gamma}(\cF)$ \\[0.15cm]
\thickhline
~\cite{huang2020online} &
\cF_2:=\mathcal{C}_{\uparrow}^{1-1/e~\forall z\ge1}([0,\infty),[0,1])~
\sbp{\text{see~\eqref{eqn:3-4-a}}}
& 0.576 & -- \\[0.15cm]
\thickhline
\textbf{This} &
\cF_3:=\mathcal{C}_{\uparrow}^{1-1/e~\forall z\ge1}([0,\infty),[0,1])
\cap \{f \mid 1-f(z)\le e^{-z},~\forall z\in[0,1]\}
& 0.5796 & 0.5810 \\[0.08cm]
\textbf{Paper} &
\cF_1:=\mathcal{C}_{\uparrow}([0,\infty),[0,1])
\cap \{f \mid f(1)\le 1-1/e\}
& -- & 0.5810 \\[0.08cm]
 &
\cF_0:=\mathcal{C}_{\uparrow}([0,\infty),[0,1])
& -- & 0.5841 \\
\thickhline
\end{tabularx}
\end{table}

\xhdr{Remarks on Results in Table~\ref{table:con}}.

(1) Our lower bound of $0.5796$ shows that Stochastic Balance (Algorithm~\ref{alg:sto}) achieves at least $0.5796$-competitiveness for vertex-weighted \om, slightly improving upon the previous best-known bound of $0.576$ established by~\cite{huang2020online}.   Although our approach naturally extends to the larger function spaces $\cF = \cF_0$ and $\cF = \cF_1$, potentially providing even tighter lower bounds, we have not pursued these cases due to computational limitations. Nonetheless, our analysis clearly illustrates the effectiveness of the auxiliary-LP-based framework in systematically refining existing competitive bounds.

(2) We observe the relationships $\cF_3 \subset \cF_2 \subset \cF_1 \subset \cF_0$. Notably, $\cF_1$ strictly generalizes the function space $\cF_2$ considered by~\cite{huang2020online}. Our upper-bound result $\overline{\Gamma}(\cF_1) = 0.5810$ indicates that the potential for further improvement within $\cF_1$ is limited to a gap of at most $0.0014$. This result highlights the inherent limitations of the current primal-dual approach within this broader function space.

(3) Note that $\cF_0$ represents the class of all non-decreasing continuous functions mapping from $[0,\infty)$ to $[0,1]$, which is general enough to be imposed without loss of generality. The upper bound associated with $\cF_0$ implies that Stochastic Balance achieves a competitiveness of at most $0.5841$ relative to the configuration LP, improving upon the previously known upper bound of $0.588$ established by~\cite{mehta2013online} based on the standard LP.\footnote{The standard LP is formulated based on the fractional budgeted allocation problem and can be viewed as a simplified version of the primal configuration LP, where variables are introduced only for singleton subsets $S \subseteq N_u$ for each $u$.} 

Recently,~\cite{zhang2024online} successfully applied adversarial reinforcement learning to establish an upper bound of $0.597$ for any randomized algorithm, further improving upon the upper bound of $0.621$ obtained by~\cite{mehta2013online}. Both of these latter bounds are relative to the standard LP. Our results suggest that to achieve competitiveness close to or matching the state-of-the-art upper bound of $0.597$, we may need to move beyond the Stochastic Balance framework and explore alternative approaches.

(4) In Appendix~\ref{app:analy}, we examine an additional constrained space $\cF_4$:
\begin{align}\label{ass-2}
\cF_4= \mathcal{C}_{\uparrow}^{1 - 1/e~\forall z \geq 1}([0,\infty), [0,1])  \cap \{ f \mid \sfe^{-z} - (1 - f(z)) \ge \df(z) \ge 0,  \forall z \in [0,1] \}. 
\end{align}
We note that $\cF_4 \subset \cF_3$. For $\cF_4$, we derive an explicit analytical solution to Program~\eqref{eqn:Lf}, given by
\begin{align} \label{eqn:2-28-a}
f^*(z)= \begin{cases} 
1-1/e, & \text{if } z \geq 1, \\
1-\frac{e^{-z}+e^{z-2}}{2}, & \text{if } z  \in [0,1].
\end{cases} 
\end{align}
This yields the optimal value $\Gamma(\cF_4)=(1+\sfe^{-2})/2 \approx 0.5676$, recovering the competitiveness result from~\cite{mehta2013online} and clearly illustrating the trade-off involved in introducing simplifying assumptions.

\subsection{Overview of Main Challenges and Techniques}  
One technical challenge in bounding the optimal value $\Gamma(\cF)$ of Program~\eqref{eqn:Lf} lies in solving the inner minimization problem for a given $f \in \cF$, which defines the functional value $\cL[f]$.
To address this challenge, we provide a \emph{precise structural characterization} of the adversary’s optimal strategy in this minimization problem.
These structural insights explicitly reveal how the adversary strategically manipulates outcomes within the randomized primal–dual framework, thereby offering a deeper understanding of the underlying algorithmic dynamics—beyond what can be obtained through purely numerical improvements. \emph{Our approach differs from that of~\cite{huang2020online}, which derives only a lower bound on $\cL[f]$ by refining a structural lemma from~\cite{mehta2013online} through sharper alternating-path arguments}. This characterization is one of the ingredients leading to the improved lower bound on the competitive ratio achieved by {Stochastic Balance}.

Another significant challenge lies in addressing the outer maximization in Program~\eqref{ques} after deriving a tractable form of $\cL[f]$. To overcome this, we propose an auxiliary linear program (LP) designed to approximate the optimal value $\Gamma(\cF)$. Specifically, we introduce a finite discretization with $n$ variables to approximate any function $f \in \cF$, reformulating the original maximization problem into a finite-dimensional LP. We rigorously demonstrate that the optimal solution $\Gamma(\cF)$ corresponds exactly to the limit $\eta(\infty) := \lim_{n \to \infty} \eta(n)$, where $\eta(n)$ is the optimal solution to our approximate LP with $n$ discretization points. Crucially, we also establish a provable bound on the approximation gap between $\eta(n)$ and $\eta(\infty)$ for finite $n$. This enables us to precisely quantify lower and upper bounds on $\Gamma(\cF)$ for practical finite-dimensional approximations.

\xhdr{Comparison with Existing Discretization-Based Techniques}.  
Discretization-based linear programs have been widely used in previous works to derive lower and upper bounds for competitive ratios within randomized primal-dual analyses~\cite{derakhshan2025,huang2020fully, huang2019online, jin2021improved, huang2023online}. The fundamental idea in these existing approaches typically involves carefully constructed interpolation of the continuous domain: Given a continuous domain $[0,1]$, previous methods introduce $n$ discretization points, define $x_i := f(i/n)$ for each discrete point, and then interpolate these discrete values linearly to approximate a continuous function. To guarantee that solutions to these approximate LPs remain valid bounds on the original continuous problem, prior works incorporate \emph{strengthened} or \emph{relaxed} constraints—often intricate in form—to rigorously bridge between discrete approximations and continuous analyses.

By contrast, our auxiliary LP, denoted by $\widetilde{\LP}(n)$, adopts a simpler and more generic discretization strategy: we approximate integrals involving $f$ by assuming constant function values within each discretized segment, without relying on interpolation. Although the core conceptual insight behind discretization is standard in optimization literature, our method simplifies the analytical steps involved, resulting in clear and explicit bounds on the approximation error. This simplification allows us to transparently quantify the approximation gap and effectively highlights both the potential and inherent limitations of existing primal-dual approaches.

Below, we highlight the key differences between the existing $\LP(n)$ and our proposed $\widetilde{\LP}(n)$.

\xhdr{Purposes}. Each previously considered $\LP(n)$ is designed specifically for either lower bounding or upper bounding the optimal value of Program~\eqref{ques}. Additionally, constructing the constraints for these existing LPs typically involves intricate technical details to ensure their validity as rigorous bounds.

In contrast, our proposed $\widetilde{\LP}(n)$ serves as an auxiliary linear program, establishing a direct link between the optimal value of Program~\eqref{ques} and the limit of $\widetilde{\LP}(n)$ as $n \to \infty$. Specifically, our $\widetilde{\LP}(n)$ simultaneously serves dual purposes: it helps derive both lower and upper bounds on the optimal value of the factor-revealing program described in~\eqref{ques}. Notably, the optimal value of $\widetilde{\LP}(n)$ alone does not directly constitute either a valid lower or upper bound. Rather, additional adjustments accounting for potential gaps between the finite-dimensional approximation and its limit (as $n \to \infty$) are necessary to establish rigorous bounds.

\xhdr{Structures}. Traditional $\LP(n)$ approaches incorporate carefully designed, strengthened or relaxed objectives and constraints, explicitly constructed to control approximation errors. These methods ensure that, for every feasible $\x$, the LP objective value $\LP(\x)$ directly provides either a valid lower or upper bound on the continuous functional value $\cL[f(\x)]$, where $f(\x)$ is derived via linear interpolation from the discrete points.

In contrast, our $\widetilde{\LP}(n)$ features a notably simpler structure, avoiding special terms explicitly designed to manage approximation errors. Due to this simplicity, our approach is potentially more broadly applicable and straightforwardly extendable to additional settings. Further potential applications of this method, particularly in solving factor-revealing LPs, are discussed in Section~\ref{sec:reveal}.

%\begin{comment}%%%%%
\subsection{Potential Applications of Our Techniques in Solving Factor-Revealing LPs}\label{sec:reveal}  

\xhdr{Factor-Revealing LPs}. Factor-revealing LPs are a commonly used technique in analyzing online algorithms~\cite{mehta2007adwords} and approximation algorithms for facility location problems~\cite{jain2003greedy}. The framework typically follows these steps: we propose an LP with a minimization objective and $n$ variables, where $\zeta(n)$ represents the optimal value. We first establish that $\zeta(\infty) := \lim_{n \to \infty} \zeta(n)$ provides a valid lower bound on the competitiveness of a target online \anhai{algorithm} and then aim to determine the exact value of $\zeta(\infty)$ or a lower bound on it.  

For the latter task, a key challenge is that, in most cases, $\zeta(n)$ decreases as $n$ increases. Existing approaches include the \emph{primal-dual method}, which requires deriving an analytical solution that is feasible in the dual program and then taking the limit of the objective value of the dual program over this analytical solution, and the \emph{strongly factor-revealing LP method}~\cite{mahdian2011online}, which modifies certain constraints in the factor-revealing LP by replacing them with less restrictive ones so that the optimal values of the modified versions each provide a valid lower bound on $\zeta(\infty)$.  

\xhdr{Implications of Our Techniques in Solving Factor-Revealing LPs}.  
Recall that we propose an auxiliary LP, denoted by $\widetilde{\LP}(n)$, whose optimal value $\eta(n)$ converges to the optimal value $\Gamma(\cF)$ of Program~\eqref{ques}, \ie $\eta(\infty) = \lim_{n \to \infty} \eta(n) = \Gamma(\cF)$. As part of our challenges, we need to upper bound the value of $\eta(\infty)$, and we achieve this by upper bounding the absolute difference between $\eta(\infty)$ and $\eta(n)$ as a function of $n$.  

In our context, the auxiliary LP, $\widetilde{\LP}(n)$,  has a maximization objective, and its optimal value $\eta(n)$ is generally non-decreasing with $n$ (up to vanishing terms such as $O(1/n)$). Our challenge of upper bounding the value of $\eta(\infty)$ mirrors that in solving factor-revealing LPs, where the goal \anhai{is} to establish a lower bound for a sequence of minimization LPs with non-increasing optimal values. In essence, both challenges involve bounding the asymptotic behavior of a sequence of optimization programs, but from opposite directions: our case needs an upper bound on the limit of a non-decreasing sequence, whereas factor-revealing LPs \anhai{require} a lower bound on the limit of a non-increasing sequence.  Given this similarity, we believe our techniques can extend beyond their current applications, providing a valuable new tool for solving factor-revealing LPs.  
%\end{comment}%%%%%

\subsubsection{Other Related Work}\label{sec:related}
For \om,~\cite{goyal2023online} proposed a stochastic benchmark representing the best performance of a clairvoyant optimal algorithm (S-OPT) under additional constraints beyond those considered here. Specifically, they assume that any S-OPT must match online vertices in their arrival order, which is not required for the clairvoyant optimal algorithm considered in this paper.~\cite{goyal2023online} studied Stochastic Balance for (unweighted) \om with general (possibly non-uniform) existence probabilities and demonstrated that it achieves a competitiveness of at least $0.596$ against S-OPT. This result was later improved to $0.611$ by~\cite{huang2023online}, using the same benchmark.~\cite{albers2025online} considered \om with arbitrary non-vanishing existence probabilities and demonstrated that Stochastic Balance achieves the optimal competitiveness of $1 - 1/e$ when each offline agent has a sufficiently large matching capacity.

Regarding approaches for deriving lower and upper bounds on the optimal performance of randomized primal–dual methods, several works apply randomized primal–dual analysis to study the Ranking algorithm for vertex-weighted online matching under random arrival order~\cite{peng2025revisiting,jin2021improved} and on general graphs~\cite{derakhshan2025}. In these works, the authors construct discretization-based approximate LPs with carefully engineered \emph{relaxed} constraints and objectives, such that each LP yields a valid lower or upper bound on Program~\eqref{ques}.

\xhdr{Roadmap of This Paper}.  
In Section~\ref{sec:adv-str}, we discuss how to identify the worst-case scenario structured by an adversary for each given function \( f \in \mathcal{C}_{\uparrow}([0,\infty), [0,1]) \) in order to solve the inner minimization problem in Program~\eqref{eqn:Lf}, which defines the value \( \cL[f] \). In Sections~\ref{sec:lb} and~\ref{sec:ub}, we present our approach to establishing lower and upper bounds on the optimal value of Program~\eqref{eqn:Lf}, with numerical results provided and discussed in Section~\ref{sec:num}. Finally, we conclude the paper in Section~\ref{sec:con}, where we outline several directions for future research.

\section{Characterization of the Adversary's Strategy and Performance in Minimizing $\cL[f]$ for a Given $f \in \mathcal{C}_{\uparrow}([0,\infty), [0,1])$}\label{sec:adv-str}

Consider a given $f \in \mathcal{C}_{\uparrow}([0,\infty), [0,1])$. Our goal is to determine how the adversary solves the minimization problem in the definition of $\cL[f]$, as shown in~\eqref{eqn:Lf}. Without loss of generality (WLOG), assume that:  

(1) The total arriving load in $S$ does not exceed one, i.e., $\psi(S) := |S| \cdot p \leq 1$. Observe that the adversary has no incentive to select a subset $S \subseteq N_u$ such that $\psi(S) > 1$. This is because the denominator, $w_u \cdot p_u(S)$, remains equal to $w_u$ for any $\psi(S) > 1$, while the numerator, $\E\bb{\alp_u+\sum_{v \in S} \beta_v}$, is non-decreasing in $\psi(S)$.  

(2) $S$ is the last batch of arriving loads in $N_u$.   In other words, there does not exist any load $j \in N_u$ that arrives after $S$. This is because $\E[\alp_u]$ in the numerator $\E\bb{\alp_u+\sum_{v \in S} \beta_v}$ refers to the expected value at the end, after $\sto$ has processed all arriving loads $v \in V$. Note that $\E[\alp_u]$ will never decrease if the adversary arranges additional loads $j \in N_u$ to arrive after $S$.

Consider a given (target) offline node $u^*$, and let $w^* = w_{u^*}$. For a given $f \in \mathcal{C}_{\uparrow}([0,\infty), [0,1])$, the adversary needs to solve the following minimization program for $\cL[f]$:

\begin{align}
\cL[f]=\bP{\min_{S \subseteq N_{u^*}} \frac{\E\bb{\alp_{u^*}+\sum_{v \in S} \beta_v}}{w^* \cdot \psi}:~~\psi=|S| \cdot p \in [0,1].} \label{eqn:Lf-c}
\end{align}

%%which can be represented as a vector $\bfL_j$.\footnote{We assume WLOG that the pre-loads of any offline neighbors of an arriving $j$ are listed in a pre-fixed lexicographical order, the same order used in \sto, where the target node $u^*$ has the lowest priority.}  

\subsection{Two Worst-Case Scenario   Types of  Online Arriving Loads}

Observe that to solve Program~\eqref{eqn:Lf-c}, the adversary can manipulate the following elements in addition to $w^*$ and $|S|$:  
(1) A pre-load assigned to $u^*$ upon the arrival of the first load in $S$;\footnote{A pre-load $\ell \geq 0$ assigned to $u^*$ means that $u^*$ has a real-time load of $\min(\ell, \Theta_{u^*})$, which is available with probability $e^{-\ell}$ and unavailable with probability $1 - e^{-\ell}$.}  
(2) For each arriving $v \in S$, the pre-load on each of $v$'s offline neighbors, excluding $u^*$.  

Consider a generic setting, as illustrated in Figure~\eqref{fig:ger}, where $u^*$ is the target offline node with a pre-load of $\ell^*>0$ at some time when an online neighbor $v \in N_{u^*}$ arrives with $N_v = \{u \mid u \in U'=[n]\} \cup \{u^*\}$.  Suppose that upon the arrival of $v$,  each offline neighbor $u \in [n]$ of $v$ has a pre-load of $\ell_u \geq 0$. Assuming that $\ell^*$ is fixed, our goal is to determine how the adversary arranges the vector $(\ell_1, \ldots, \ell_n)$  to minimize the value $\E[\Del \alp_{u^*}+\beta_v]$, the expected update in the randomized primal-dual framework, as described in Box~\eqref{box:pd}.  

\begin{lemma}\label{lem:ws-a}
Let $u^*$ be a target offline node with weight $w^*$ with a pre-load $\ell^*>0$ at some time. Then, for any arriving online neighbor $v \in N_{u^*}$, we have  
\begin{align}\label{ineq:ws-a}
 \E[\Del \alp_{u^*}+\beta_v] \geq w^* \cdot p \cdot \min\bp{e^{-\ell^*}, 1 - f(\ell^*)},  
\end{align}
where $\E[\Del \alp_{u^*}+\beta_v]$ represents the updates in the randomized primal-dual framework, as described in Box~\eqref{box:pd}.
\end{lemma}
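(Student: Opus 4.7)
The plan is to reduce the desired bound to a linear minimization in a single scalar parameter that captures the adversary's probability of outbidding $u^*$ with a competitor. I would first set up the relevant events: let $A$ denote the event that $u^*$ is still available at the moment $v$ arrives, so $\Pr[A] = e^{-\ell^*}$ by the exponential-threshold formulation of \om. For each other offline neighbor $u_i \in N_v \setminus \{u^*\}$, with pre-load $\ell_i$ and weight $w_i$, introduce the availability-weighted priority $X_i := w_i(1 - f(\ell_i)) \cdot \mathbb{1}[\Theta_{u_i} > \ell_i]$ and let $M := \max_i X_i$ (with the convention $M := 0$ if $v$ has no other neighbor). Because the thresholds $\{\Theta_u\}_{u \in U}$ are mutually independent, $A$ and $M$ are independent. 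Write $T := w^*(1 - f(\ell^*))$ for $u^*$'s priority value under Stochastic Balance.

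Next, I would perform a four-way case analysis on SB's assignment rule. Under $A$, $v$ is routed to $u^*$ exactly when $M \le T$, contributing $\Del \alp_{u^*} + \beta_v = w^* p$; when $A$ holds but $M > T$, some $u \ne u^*$ is chosen and the sum equals $M p \ge T p$. Under $\bar A$, we have $\Del \alp_{u^*} = 0$ and $\beta_v = M p$, which is at least $T p$ when $M > T$ and at least $0$ otherwise. Consolidating the four cases yields the pointwise inequality
\begin{equation*}
\Del \alp_{u^*} + \beta_v \;\ge\; w^* p \cdot \mathbb{1}[A,\, M \le T] \;+\; T p \cdot \mathbb{1}[M > T],
\end{equation*}
which is the key coupling that produces the minimum of the two quantities in~\eqref{ineq:ws-a}.

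Taking expectations and invoking the independence $A \perp M$, the right-hand side becomes $w^* p \cdot e^{-\ell^*} (1-q) + T p q$, where $q := \Pr[M > T]$ lies in $[0,1]$ and is controlled by the adversary through its choice of $v$'s other neighbors and their pre-loads. This expression is affine in $q$ with slope $w^* p (1 - f(\ell^*) - e^{-\ell^*})$, so its minimum on $[0,1]$ is attained at an endpoint: $q = 0$ yields $w^* p \cdot e^{-\ell^*}$ when $1 - f(\ell^*) \ge e^{-\ell^*}$, and $q = 1$ yields $w^* p \cdot (1 - f(\ell^*))$ in the opposite case. Either way, the lower bound equals $w^* p \cdot \min(e^{-\ell^*}, 1 - f(\ell^*))$, establishing~\eqref{ineq:ws-a}.

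The main subtlety will be justifying the independence $A \perp M$ at the instant $v$ arrives. This hinges on interpreting the pre-load on any offline neighbor as the cumulative attempted load (rather than the realized load), so that conditional on the pre-loads the only remaining randomness is the independent exponential thresholds of the offline agents and $\Pr[A]$ is genuinely $e^{-\ell^*}$ rather than being forced to $1$. Once this is in hand, the rest of the argument is simply a linear optimization over $q$, whose two endpoint optima transparently correspond to the two extreme adversarial strategies of either making $u^*$ the unique neighbor of $v$ ($q = 0$) or planting a marginally stronger always-available competitor ($q = 1$).
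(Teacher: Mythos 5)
Your proof is correct and follows essentially the same route as the paper's: defining the availability event for $u^*$ (the paper's $E_1$, your $A$) and the maximum available-competitor priority (the paper's $L$, your $M$), invoking their independence, doing a case split on whether $u^*$ wins the assignment, and then minimizing the resulting affine function of $q=\Pr[M>T]$ over $[0,1]$. The only cosmetic difference is that you state the pointwise coupling explicitly before taking expectations, and you break ties in favor of $u^*$ whereas the paper (via footnote) breaks them against $u^*$; either convention yields the same expectation bound since the affine-in-$q$ argument is insensitive to which endpoint of the tie is included.
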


%can contribute at least $\tau(\ell^*)$ in expectation to the value $\E[\alp_{u^*}+\sum_{v \in S} \beta_v]$, where \tau(\ell^*):=
%Let $\cS:=\{ i \in [n] \mid \ell_u >\Theta_u\}$, which denotes the random set of \emph{available} offline neighbors of $v$ upon its arrival excluding the target agent $u^*$.
%which occurs with probability $e^{-\ell^*}$ and $1-e^{-\ell^*}$.
%\[
%p=\Pr[E_1]=e^{-\ell^*}, 1-p=\Pr[\neg E_2], q=\Pr[E_2], 1-q=\Pr[\neg E_2].
%\]
\begin{proof}
Consider a generic setting, as illustrated in Figure~\eqref{fig:ger}. Define  
\begin{align}\label{ineq:9-a}
L := \max_{i \in [n]: \ell_u > \Theta_u} w_u \cdot p \cdot (1 - f(\ell_u)),
\end{align}
which represents the maximum value of $w_u \cdot p \cdot (1 - f(\ell_u))$ among all \emph{available} offline neighbors $i$ of $j$ with $i \in [n]$, \emph{excluding} the target node $u^*$. If none of the nodes in $[n]$ is available, we assume $L = 0$. 

Define two random events as follows:  
- The first event is whether the target node $u^*$ is available ($E_1$) or not ($\neg E_1$).  
- The second event is whether $L < w^* \cdot p \cdot (1 - f(\ell^*))$ (denoted by $E_2$) or not ($\neg E_2$).  

Observe that $E_1$ is a random event determined jointly by the random realization of $\Theta_{u^*}$ and the pre-load $\ell$ on $u^*$. Specifically,  
\[
\Pr[E_1] = \Pr[\Theta_{u^*} > \ell] = e^{-\ell}. 
\]
In contrast, $E_2$ is a random event determined by the random realizations of $\{\Theta_u \mid u \in [n]\}$ among other pre-fixed values such as $\{\ell_u \mid u \in [n]\}$ and $\ell$. Thus, we claim that \emph{the two random events $E_1$ and $E_2$ are independent}.  

Let $q_1 = \Pr[E_1] = e^{-\ell}$ and $q_2 = \Pr[E_2]$.  Consider the following cases:  

 \tbf{Case 1:} $E_1$ and $E_2$ both occur.  
In this case, \sto assigns $v$ to $u^*$. As a result,  
\[
\E[\Del \alp_{u^*} + \beta_v \mid E_1 \wedge E_2] = w^* \cdot p := A_1.
\]

 \tbf{Case 2:} $E_1$ occurs but $E_2$ does not.  In this case, \sto assigns $v$ to some $u \in [n]$ instead of $u^*$.\footnote{For simplicity, we assume by default that the target node $u^*$ has the lowest priority when \sto breaks ties.}

\[
\E[\Del \alp_{u^*} + \beta_v \mid E_1 \wedge (\neg E_2)] \ge w^* \cdot p \cdot (1 - f(\ell^*)) := A_2.
\]

\tbf{Case 3:} Neither $E_1$ nor $E_2$ occurs.  
In this case, \sto assigns $v$ to some $u \in [n]$  instead of $u^*$, the same as in \tbf{Case 2}.  
\[
\E[\Del \alp_{u^*} + \beta_v \mid (\neg E_1) \wedge (\neg E_2)] \ge w^* \cdot p \cdot (1 - f(\ell^*)) = A_2.
\]

Summarizing the above three cases, we have  
\begin{align}
\E[\Del \alp_{u^*} + \beta_v] &\ge \Pr[E_1 \wedge E_2] \cdot \E[\Del \alp_{u^*} + \beta_v \mid E_1 \wedge E_2] \nonumber \\
&\quad + \Pr[E_1 \wedge (\neg E_2)] \cdot \E[\Del \alp_{u^*} + \beta_v \mid E_1 \wedge (\neg E_2)]  \nonumber\\
&\quad + \Pr[(\neg E_1) \wedge (\neg E_2)] \cdot \E[\Del \alp_{u^*} + \beta_v \mid (\neg E_1) \wedge (\neg E_2)]  \nonumber\\
&\ge q_1 \cdot q_2 \cdot A_1 + q_1 \cdot (1 - q_2) \cdot A_2 + (1 - q_1) \cdot (1 - q_2) \cdot A_2  \nonumber\\
&= q_1 \cdot q_2 \cdot A_1 + (1 - q_2) \cdot A_2  \nonumber \\
&= e^{-\ell} \cdot (w^* \cdot p) \cdot q_2 + \left( w^* \cdot p \cdot (1 - f(\ell^*)) \right) \cdot (1 - q_2) \label{ineq:8-a}\\ %\nonumber \\ %
&= (w^* \cdot p) \cdot \left( e^{-\ell} \cdot q_2 + (1 - f(\ell^*)) \cdot (1 - q_2) \right)  \nonumber \\
&\ge (w^* \cdot p) \cdot \min\bp{e^{-\ell}, 1 - f(\ell^*)}.  \nonumber
\end{align}

Thus, we establish our claim.  
\end{proof}

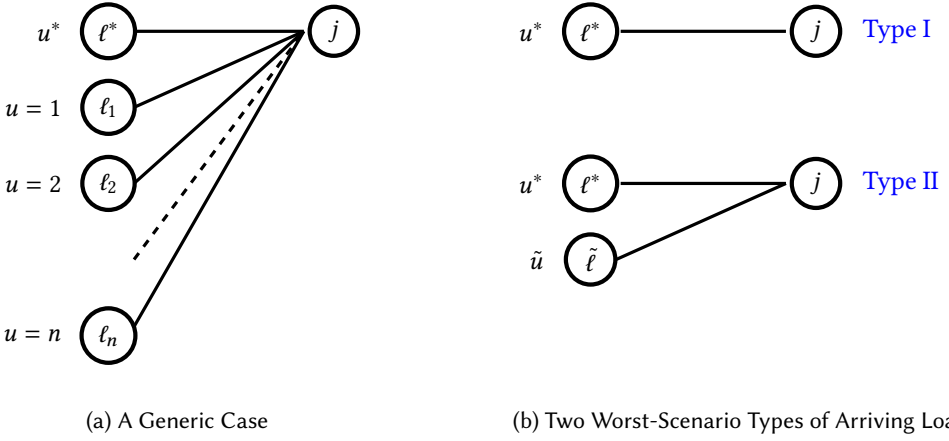
\begin{figure}[ht!]
    \centering

    \begin{subfigure}[t]{0.45\textwidth}
        \centering
        \vspace{0pt}
        \begin{tikzpicture}
            % Nodes (all have the same size)

            \draw (-0.5,0) node[left] {$u^*$};
            \draw (-0.5,-1) node[left] {$u=1$};
            \draw (-0.5,-2) node[left] {$u=2$};
            \draw (-0.5,-4) node[left] {$u=n$};
            \draw (0,0) node[minimum size=6mm,draw,circle, ultra thick] {$\ell^*$};
            \draw (3,0) node[minimum size=6mm,draw,circle, ultra thick] {$j$};
            \draw (0,-1) node[minimum size=6mm,draw,circle, ultra thick] {\scalebox{1}{$\ell_1$}};
            \draw (0,-2) node[minimum size=6mm,draw,circle, ultra thick] {\scalebox{1}{$\ell_2$}};
            \draw (0,-4) node[minimum size=6mm,draw,circle, ultra thick] {\scalebox{1}{$\ell_n$}};

            % Edges
            \draw[-, very thick] (0.34,0) -- (2.6,0);
            \draw[-, very thick] (0.34,-1) -- (2.6,0);
            \draw[-, very thick] (0.34,-2) -- (2.6,0);
            \draw[dashed, very thick] (0.34,-3) -- (2.6,0);
            \draw[-, very thick] (0.33,-3.9) -- (2.6,0);

            % pad bounding box (keeps heights comparable)
            \node at (0,-4.5) {\phantom{X}};
        \end{tikzpicture}
        \caption{A Generic Case}
        \label{fig:ger}
    \end{subfigure}
    \hfill
    \begin{subfigure}[t]{0.45\textwidth}
        \centering
        \vspace{0pt}
        \begin{tikzpicture}
            % Nodes
            \draw (0,0) node[minimum size=6mm,draw,circle, ultra thick] {$\ell^*$};
            \draw (3,0) node[minimum size=6mm,draw,circle, ultra thick] {$j$};
            \draw (-0.5,0) node[left] {$u^*$};
            \draw (3.5,0) node[right] {\bluee{Type I}};
            \draw (3.5,-2) node[right] {\bluee{Type II}};

            % Edges
            \draw[-, very thick] (0.4,0) -- (2.6,0);

            % Nodes (Type II)
            \draw (0,-2) node[minimum size=6mm,draw,circle, ultra thick] {$\ell^*$};
            \draw (0,-3) node[minimum size=6mm,draw,circle, ultra thick] {$\tell$};
            \draw (3,-2) node[minimum size=6mm,draw,circle, ultra thick] {$j$};
            \draw (-0.5,-2) node[left] {$u^*$};
            \draw (-0.5,-3) node[left] {$\tilde{u}$};

            % Edges (Type II)
            \draw[-, very thick] (0.4,-2) -- (2.6,-2);
            \draw[-, very thick] (0.34,-3) -- (2.6,-2);

            % pad bounding box (keeps heights comparable)
            \node at (0,-4.5) {\phantom{X}};
        \end{tikzpicture}
        \caption{Two Worst-Case Scenario Types of Arriving Loads}
        \label{fig:ws-a2}
    \end{subfigure}

    \caption{(Left) A generic case for an arriving load $v \in N_{u^*}$, where the target node $u^*$ has a pre-load of $\ell^* \geq 0$, and each offline node $i$ has a pre-load of $\ell_u \geq 0$. (Right) Two worst-case scenario (WS) types of arriving loads $v \in N_{u^*}$, where the target node $u^*$ has a pre-load of $\ell^* \geq 0$. Specifically, \tbf{Type I} has no offline neighbor other than $u^*$, while \tbf{Type II} has one additional offline neighbor, $\tilde{u}$, such that $\tilde{u}$ is \emph{available} upon $v$'s arrival with a weight of $\tw$ and a load level $\tell$ satisfying $\tilde{w} \cdot p \cdot \sbp{1 - f(\tell)} = w^* \cdot p \cdot \sbp{1 - f(\ell^*)}$.}
    \label{fig:ws-a}
\end{figure}
\subsubsection{Two WS Types of Arriving Loads Tightening Inequality~\eqref{ineq:ws-a}}  Now, we show that the adversary can always make Inequality~\eqref{ineq:ws-a} tight for any target node $u^*$ with any pre-load $\ell^* \ge 0$. Define two types of arriving load, as illustrated in Figure~\eqref{fig:ws-a2}, as follows. 

\begin{itemize}
\item \textbf{Type I}:  The arriving load $v$ has a single offline neighbor, which is $u^*$. In this case, we have
\begin{align*}
\E[\Del \alp_{u^*}+\beta_v] =w^* \cdot p \cdot e^{-\ell^*}.
\end{align*}

\item \textbf{Type II}: The arriving load $v$ has a single offline neighbor $\tilde{u}$ other than $u^*$ such that:  
(1) $\tilde{u}$ is \emph{available} upon the arrival of $v$;\footnote{See the second remark on \tbf{Type II} for details on how the adversary manipulates the availability status of $\tilde{u}$.}   and  
(2) $\tilde{u}$ has a weight of $\tilde{w}$ and a load of $\tell < \Theta_{\tilde{u}}$ satisfying  
\[
\tilde{w} \cdot p \cdot (1 - f(\tell)) = w^* \cdot p \cdot \sbp{1 - f(\ell^*)}.
\]  
This ensures that \sto will assign the arriving load $v$ to $\tilde{u}$, regardless of whether $u^*$ is available or not. (Note that $u^*$ has the lowest priority when \sto breaks ties.) In this case, we have $\Del \alp_{u^*} = 0$, and thus:  
\begin{align*}
\Del \alp_{u^*} + \beta_v = \beta_v = w^* \cdot p \cdot \sbp{1 - f(\ell^*)}.
\end{align*}

\end{itemize}

\xhdr{Remarks on the Two WS Types of Arriving Loads}:

(1) \tbf{Difference in  the Impact on the Target Node $u^*$ Between the Two Types}:  \tbf{Type I} adds a pre-load of $p$ to $u^*$ and contributes an \emph{expected value} of $w^* \cdot p \cdot e^{-\ell^*}$ to $(\Del \alp_{u^*}+\beta_v)$. In contrast, \tbf{Type II} adds no pre-load to   $u^*$ and contributes a \emph{deterministic} value of $w^* \cdot p \cdot (1 - f(\ell^*))$ to $(\Del \alp_{u^*}+\beta_v)$.

(2) For \tbf{Type II}: Note that the availability of $\tilde{u}$ is jointly determined by its pre-load $\tell$ and the realization of the threshold $\Theta_{\tilde{u}}$, where $\tell$ is controlled by the adversary, whereas $\Theta_{\tilde{u}}$ is not.  This implies that the adversary cannot fully manipulate the availability of $\tilde{u}$. However, the adversary can effectively achieve this as follows:  By creating a large number $M$ of copies of $\tilde{u}$, each connected to $j$, we find that with probability $1 - (1 - e^{-\tell})^M \approx 1$ when $M \gg 1$, at least one copy of $\tilde{u}$ is available with a pre-load of $\tell$ upon the arrival of $v$. Consequently, \sto would almost surely assign the arriving load $v$ to an available copy of $\tilde{u}$, regardless of whether $u^*$ is available or not.

(3) Though there always exists a WS type, either \textbf{Type I} or \textbf{Type II}, that tightens Inequality~\eqref{ineq:ws-a}, this does not necessarily mean that the adversary will always choose the one that does so at any time. Specifically, it is possible that the adversary's optimal strategy is to arrange a \textbf{Type I} arriving load even when it results in a larger value of \( \mathbb{E}[\Delta \alpha_{u^*}+\beta_v] \), i.e., \( w^* \cdot p \cdot e^{-\ell^*} \ge w^* \cdot p \cdot (1 - f(\ell^*)) \). This is because \textbf{Type I} holds an additional advantage over \textbf{Type II}, as it adds a pre-load \( p \) to the target node \( u^* \), as outlined in the first point, whereas \textbf{Type II} does not. Consequently, the gain in \( \mathbb{E}[\Delta \alpha_{u^*}+\beta_v] \) decreases exponentially for consecutive \textbf{Type I} arriving loads, while it remains constant for consecutive \textbf{Type II} arriving loads.

% 
%the impact of any of its optimal strategy at any time
%Characterization of an Optimal Adversary's Strategy to  minimize the value $\E[\alp_{u^*}+ \sum_{v \in S} \beta_v]$
%The proof of Lemma~\ref{lem:ws-a} suggests that for any adversary aiming to minimize the value $\E[\alp_{u^*}+ \sum_{v \in S} \beta_v]$, the projection of its optimal strategy to any time can be captured as  a certain  randomization in selecting one of the two WS types of arriving loads, \ie~\tbf{Type I and Type II}, as suggested in Inequality~\eqref{ineq:8-a}. 
%\E[\Del \alp_{u^*}+\beta_v]=\E[\beta_v] =
%\xhdr{Another Interpretation of the Proof of Lemma~\ref{lem:ws-a}}.  
%as suggested by Inequality~\eqref{ineq:8-a}.  

\subsubsection{Insights from the Proof of Lemma~\ref{lem:ws-a} into the Adversary's Optimal Strategy}  

The proof of Lemma~\ref{lem:ws-a} provides valuable insights into how the adversary orchestrates an optimal strategy to minimize the value $\E[\alp_{u^*}+ \sum_{v \in S} \beta_v]$ for a given $|S|$.  Specifically, we have:

\begin{lemma}\label{lem:10-a}
The projection of any adversary's optimal strategy  at any time can be captured as a certain randomization in selecting one of the two WS types of arriving loads, \ie~\tbf{Type I} and \tbf{Type II}.
\end{lemma}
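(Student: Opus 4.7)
The plan is to reduce, one arrival at a time, any adversarial configuration of neighbors of an online load $v \in N_{u^*}$ to a randomized choice between \textbf{Type I} and \textbf{Type II} that (i) matches the lower bound of Lemma~\ref{lem:ws-a} with equality and (ii) leaves the marginal law of $u^*$'s load evolution unchanged. Iterating this reduction establishes that, at every arrival, the projection of the adversary's optimal behavior onto the pair $\bp{\Del\alp_{u^*}+\beta_v,\,\Del\ell_{u^*}}$ is exactly a randomization over the two WS types.

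First I would fix any adversarial strategy and focus on an arbitrary arrival as in Figure~\ref{fig:ger}: $u^*$ has pre-load $\ell^*$ and $v$ has additional neighbors with pre-loads $(\ell_1,\ldots,\ell_n)$. Using the notation in the proof of Lemma~\ref{lem:ws-a}, let $q_1 = e^{-\ell^*}$ and $q_2 = \Pr[E_2]$, where $E_2$ is the event that no available neighbor of $v$ other than $u^*$ has score exceeding $w^*\cdot p\cdot(1-f(\ell^*))$. Inequality~\eqref{ineq:8-a} already gives
\[
\E[\Del\alp_{u^*}+\beta_v] \;\ge\; w^*\cdot p\cdot\bp{e^{-\ell^*}\,q_2 + (1-f(\ell^*))\,(1-q_2)}.
\]

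Next I would introduce the candidate substitution: replace the current arrival by a \textbf{Type I} arrival with probability $q_2$, and by a \textbf{Type II} arrival (with the auxiliary-copy trick of the second remark on \textbf{Type II}) with probability $1-q_2$. A direct calculation shows two facts. First, the expected contribution to $\Del\alp_{u^*}+\beta_v$ becomes $q_2\cdot(w^*p\,e^{-\ell^*}) + (1-q_2)\cdot(w^*p\,(1-f(\ell^*)))$, matching the right-hand side above with equality. Second, the induced law of the load increment on $u^*$ agrees with the original: under both the original configuration and the substitution, $u^*$'s load rises by $p$ exactly on the event $E_1\wedge E_2$ (probability $q_1 q_2=e^{-\ell^*}q_2$) and is unchanged otherwise.

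The crux of the argument, which I anticipate being the main obstacle, is step three: coupling the substitution across all arrivals without spoiling the downstream analysis. The point is that the adversary's remaining strategy interacts with the already-processed step only through $u^*$'s current load (the pre-loads of all other nodes, the arrival sequence, and the event $E_2$ at later steps are controlled by the adversary and, up to the auxiliary-copy device, independent of what happened at $u^*$). Because the substitution preserves the full marginal distribution of $u^*$'s load trajectory while weakly decreasing each per-arrival contribution, an inductive exchange argument over arrivals gives a substituted strategy with value no greater than the original in the inner minimization of Program~\eqref{eqn:Lf-c}. Applying this to an optimal strategy yields the claim: at every time, the projection onto the effect on $u^*$ is a randomization between \textbf{Type I} and \textbf{Type II} with mixing weight $q_2$ determined by the adversary's ambient configuration.
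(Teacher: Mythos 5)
Your proposal is correct and takes essentially the same approach as the paper's proof: partition on the event $E_2$ versus $\neg E_2$, observe that under $E_2$ the arrival's impact on $u^*$ coincides with \textbf{Type I} while under $\neg E_2$ it is dominated (from the adversary's perspective) by \textbf{Type II}, and conclude that the optimal action is captured by the $(q_2, 1-q_2)$ mixture. Your write-up makes the coupling (preserving the marginal law of $u^*$'s load increment) and the arrival-by-arrival exchange argument more explicit, but these are precisely the points the paper's proof relies on implicitly.
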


\begin{proof}
Consider the same generic setting of an arriving load, as illustrated in Figure~\eqref{fig:ger}, which represents a typical strategy employed by the adversary. Recall that $L$, as defined in~\eqref{ineq:9-a}, represents the maximum value of $w_u \cdot p \cdot (1 - f(\ell_u))$ among all available offline neighbors $u \in [n]$ other than $u^*$. We observe that any arriving load $v$ arranged by the adversary leads to one of two possible random outcomes: either event $E_2$ occurs or it does not. Specifically,  recall that

- (\tbf{Event $E_2$}): Defined as $L < w^* \cdot p \cdot (1 - f(\ell^*))$. In this case, $v$ will be assigned to the target node $u^*$ by \sto \emph{whenever} $u^*$ is available at that time.\footnote{Note that the value $w^* \cdot p \cdot (1 - f(\ell^*))$ is fixed upon the arrival of $v$. The event $E_2$ can be further expressed as \\ $\{\Theta_u \leq \ell_u \mid u \in [n], w_u \cdot p \cdot (1 - f(\ell_u)) \geq w^* \cdot p \cdot (1 - f(\ell^*))\}$, meaning that none of the offline neighbors $u \in [n]$ satisfying $w_u \cdot p \cdot (1 - f(\ell_u)) \geq w^* \cdot p \cdot (1 - f(\ell^*))$ is available at that moment. Thus, \( E_2 \) is independent of \( \Theta_{u^*} \) and does not incorporate any information about whether \( u^* \) is available.
}  

  The impact of the arriving load $v$ on $u^*$ in this case is identical to that of \tbf{Type I}: Either a pre-load $p$ is added to $u^*$ if $u^*$ is unavailable at that time, or it contributes a total value of $\Del \alp_{u^*}+\beta_v = w^* \cdot p$ otherwise.

- (\tbf{Event $\neg E_2$}): Defined as $L \geq w^* \cdot p \cdot (1 - f(\ell^*))$. In this case, $v$ is assigned to some available offline neighbor $u \in [n]$ other than $u^*$ by \sto. From the adversary's perspective, the impact of the arriving load $v$ on $u^*$ in this case is never more favorable than that of \tbf{Type II}: Both add no pre-load to $u^*$, but the former contributes a total value at least as large as the latter, since  
\[
\Del \alp_{u^*}+\beta_v = \beta_v = L \geq w^* \cdot p \cdot (1 - f(\ell^*)).
\]  

 Summarizing the above analysis, we conclude that the adversary's strategy for selecting an arriving load $v$ always exerts an effect no better than selecting an arriving load of \tbf{Type I} with probability $\Pr[E_2]$ and an arriving load of \tbf{Type II} with probability $1 - \Pr[E_2]$.\footnote{This claim is actually reflected in Inequality~\eqref{ineq:8-a}.}  
\end{proof}

% %Consider a collection of \emph{deterministic} strategies, denoted by $\cD$, defined as follows: 
%Consider a give $\bfq \in \{0,1\}^{|S|}$ satisfying that  there exist some $v \in S$ with $q_v=0$ and $q_{v+1}=1$. In plain language, this means that the strategy $\pi(*, \bfq)$ will select an arrival load of \tbf{Type II} at some time, which is immediately  followed by another one of \tbf{Type I}. 

\subsection{Characterization of the Adversary's Optimal Strategy}  

Recall that the adversary aims to arrange a total of $|S|$ consecutively arriving loads for the target node $u^*$ to minimize the value  
$\E[\alp_{u^*}+\sum_{v \in S} \beta_v]$,  
given that the size of $|S|$ is fixed. 

Consider an adversary's strategy $\pi$, which is characterized by the following two parameters:  
(1) $\ell \geq 0$, representing the pre-load assigned by $\pi$ to $u^*$ upon the first arriving load in $S$; and  
(2) a binary vector $\bfq \in \{0,1\}^{|S|}$, where $\pi$ selects \tbf{Type I} for arrival load $v \in S$ if $q_v = 1$ and \tbf{Type II} otherwise. Note that this strategy, denoted by $\pi(\ell, \bfq)$, is deterministic.

% Note that the strategy described above, denoted by $\pi(\ell, \bfq)$, is deterministic.

% Let $\cD$ be the collection of all such deterministic strategies, each defined by a specific pair $(\ell, \bfq)$ with $\ell \geq 0$ and $\bfq \in \{0,1\}^{|S|}$. By Lemma~\ref{lem:10-a}, we find  that \emph{any adversary's optimal strategy can alternatively be viewed as a randomization over all (deterministic) strategies in $\cD$}.  

%For each given pair $(\ell, \bfq)$, let $\kap(\ell, \bfq)$ denote the resulting value of $\E[\alp_{u^*}+\sum_{v \in S} \beta_v]$ when the adversary implements the specific deterministic strategy $\pi(\ell, \bfq)$. Let $\kap^*$ denote the value of $\E[\alp_{u^*}+\sum_{v \in S} \beta_v]$ corresponding to the adversary's optimal strategy (with $|S|$ fixed).  

Let $\cQ \subset \{0,1\}^{|S|}$ be a subset of binary vectors of size $|S|$, where each $\bfq \in \cQ$ has the form:  
\[
\bfq = (\underbrace{1,1,\dots,1}_{m}, \underbrace{0, \dots,0}_{|S|-m}), \quad 0 \leq m \leq |S|.
\]  
In other words, $\cQ$ consists of binary vectors of size $|S|$, each containing a contiguous sequence of ones followed by a contiguous sequence of zeros.  

From a strategic perspective, any strategy $\pi(*, \bfq)$ with $\bfq \in \cQ$ selects a certain number of \tbf{Type I} arriving loads before switching to \tbf{Type II} for all remaining loads. The lemma below states that the adversary's optimal strategy can always be realized as some $\pi(\ell, \bfq)$ with $\bfq \in \cQ$.  

\begin{lemma}\label{lem:12-a}
For each $\ell \geq 0$ and $\bfq \in \{0,1\}^{|S|}$, let $\kap(\ell, \bfq)$ denote the resulting value of $\E[\alp_{u^*}+\sum_{v \in S} \beta_v]$ corresponding to the deterministic strategy $\pi(\ell, \bfq)$. Let $\kap^*$ denote the value of $\E[\alp_{u^*}+\sum_{v \in S} \beta_v]$ corresponding to the adversary's optimal strategy, with $|S|$ fixed. Then, we have:
\begin{align}
\kap^* = \min_{\ell \geq 0, \bfq \in \{0,1\}^{|S|}} \kap(\ell, \bfq) = \min_{\ell \geq 0, \bfq \in \cQ} \kap(\ell, \bfq).
\end{align}
\end{lemma}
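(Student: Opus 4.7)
The plan is to prove the two equalities in turn: first, that the adversary's true optimum $\kap^*$ equals $\min_{\ell \ge 0,\,\bfq \in \{0,1\}^{|S|}} \kap(\ell,\bfq)$, i.e.\ that the adversary's value is attained by some deterministic WS-type sequence; second, that this minimum is already attained inside $\cQ$, via a local exchange argument exploiting the monotonicity of $f$.

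For the first equality, the bound $\kap^* \le \min_{\ell,\bfq} \kap(\ell,\bfq)$ is immediate since each $\pi(\ell,\bfq)$ is a feasible adversary strategy: the initial pre-load $\ell \ge 0$ can be realized by pre-appending enough Type~I arrivals outside $S$, and then the $|S|$ arrivals in $S$ are chosen as Type~I or Type~II according to $\bfq$. For the reverse inequality I will invoke Lemma~\ref{lem:10-a}: at each arriving load in $S$, the adversary's configuration is no better (from a minimization standpoint) than a randomization over Type~I and Type~II with some weight $q_2 \in [0,1]$, and crucially the induced transition of the intended pre-load on $u^*$ is coupled to the same Type~I/II label (Type~I adds $p$, Type~II leaves it unchanged). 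The adversary's problem therefore reduces to an MDP whose state is the intended pre-load of $u^*$, whose pure actions are Type~I and Type~II, whose per-step expected cost is linear in the action distribution, and whose transitions are deterministic under each pure action. Standard Bellman optimality then yields a deterministic optimal policy, equivalent to some $\pi(\ell^*,\bfq^*)$ with $\bfq^* \in \{0,1\}^{|S|}$, giving $\kap^* \ge \min_{\ell,\bfq} \kap(\ell,\bfq)$.

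For the second equality, I will perform a pairwise exchange. Writing $L_i := \ell + p \sum_{j < i} q_j$ for the intended pre-load just before the $i$-th arrival, the expected contribution of arrival $i$ to $\E[\alp_{u^*}+\sum_{v \in S}\beta_v]$ equals $w^* p \cdot e^{-L_i}$ when $q_i = 1$ and $w^* p \cdot (1 - f(L_i))$ when $q_i = 0$. If $\bfq \notin \cQ$, pick any index $i$ with $q_i = 0$ and $q_{i+1} = 1$, and let $\bfq'$ be $\bfq$ with positions $i, i{+}1$ swapped. The intended pre-load just after position $i{+}1$ equals $L_i + p$ under both $\bfq$ and $\bfq'$, so all contributions at positions $\ge i+2$ are unchanged. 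The local change is
\begin{align*}
\kap(\ell,\bfq') - \kap(\ell,\bfq) = w^* p \cdot \big( f(L_i) - f(L_i + p) \big) \le 0
\end{align*}
by the monotonicity of $f$. Iterating the swap (each step moves a $1$ strictly to the left) terminates at some $\bfq'' \in \cQ$ with $\kap(\ell,\bfq'') \le \kap(\ell,\bfq)$, which gives the second equality.

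The main obstacle I expect is the derandomization in the first equality. Lemma~\ref{lem:10-a} controls a single per-step expected contribution together with its coupled state transition, but the adversary's overall strategy spans all $|S|$ arrivals in $S$ and is constrained to be oblivious (non-adaptive to the realizations of $\{\Theta_u\}$). The care point is verifying that the per-step dominance by a Type~I/II randomization composes cleanly across steps, so that the adversary's problem is genuinely an MDP with pure-action state transitions whose optimum is attained by a deterministic policy. I plan to formalize this by induction on $|S|$, combining Lemma~\ref{lem:10-a} with a Bellman step at each state; once this MDP reduction is in hand, the exchange argument above finishes the proof.
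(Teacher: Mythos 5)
Your proof follows essentially the same two-step route as the paper's: the first equality is argued by invoking Lemma~\ref{lem:10-a} to reduce the adversary's optimal strategy to (a randomization over, hence a best) deterministic Type~I/Type~II sequences, and the second equality is established by the same adjacent $(0,1)\to(1,0)$ swap, with the identical local computation showing the change is $w^*p\bigl(f(L_i)-f(L_i+p)\bigr)\le 0$ by monotonicity of $f$ and the pre-load after position $i{+}1$ matching under both orderings. Your MDP/Bellman framing of the first equality is a somewhat more explicit formalization of what the paper states in one sentence (``the adversary will eventually resort to the best possible deterministic strategy''), and your observation that the per-step cost and the pre-load transition are coupled to the same $E_2$-event is exactly the content that makes the paper's reduction go through; otherwise the arguments coincide.
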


\begin{proof}
For the first equality: By Lemma~\ref{lem:10-a}, we find that \emph{any adversary's optimal strategy can alternatively be viewed as a randomization over all possible deterministic strategies}, each defined by a specific pair $(\ell, \bfq)$ with $\ell \geq 0$ and $\bfq \in \{0,1\}^{|S|}$. Consequently, the adversary will eventually resort to the best possible deterministic strategy that minimizes the value $\kap(\ell, \bfq)$.

Now, we prove the second equality. Consider a given $\bfq \in \{0,1\}^{|S|} \setminus \cQ$. By definition, $\bfq$ must contain a contiguous pair $(0,1)$, as illustrated in~\eqref{str:10-a}. Consider a modified version of $\bfq$, denoted $\bfq'$, as shown in~\eqref{str:10-b}, which swaps the positions of $0$ and $1$ while keeping all other entries unchanged.  
\begin{align}
\bfq &= (*,\ldots,*, \underbrace{0}_{v},\underbrace{1}_{v+1},*,\ldots,*), \label{str:10-a} \\
\bfq' &= (*,\ldots,*, \underbrace{1}_{v},\underbrace{0}_{v+1},*,\ldots,*).\label{str:10-b}
\end{align}
We show that for any $\ell \geq 0$, it holds that $\kap(\ell, \bfq) \geq \kap(\ell, \bfq')$. Let $\alp$ be the pre-load assigned to $u^*$ upon the arrival of \anhai{the} load $j$. By definition, the contribution of arriving load $v$ of \tbf{Type II} (since $q_v=0$) and that of $j+1$ of \tbf{Type I} (since $q_{v+1}=1$) in the strategy $\pi(\ell,\bfq)$ is given by  
\begin{align*}
\underbrace{\E[\Del \alp_{u^*}+\beta_v]}_{\text{due to $v$ of \tbf{Type II}}} + \underbrace{\E[\Del \alp_{u^*}+\beta_{j+1}]}_{\text{due to $v+1$ of \tbf{Type I}}} = w^* \cdot p \cdot \sbp{1 - f(\alp)} + w^* \cdot p \cdot e^{-\alp}.
\end{align*}
In contrast, the contribution of arriving load $v$ of \tbf{Type I} (since $q'_j=1$) and that of $j+1$ of \tbf{Type II} (since $q'_{j+1}=0$) in the strategy $\pi(\ell,\bfq')$ is given by  
\begin{align*}
\underbrace{\E[\Del \alp_{u^*}+\beta_v]}_{\text{due to $v$ of \tbf{Type I}}} + \underbrace{\E[\Del \alp_{u^*}+\beta_{j+1}]}_{\text{due to $v+1$ of \tbf{Type II}}} &= w^* \cdot p \cdot e^{-\alp} + w^* \cdot p \cdot \sbp{1 - f(\alp + p)} \\
&\leq w^* \cdot p \cdot e^{-\alp} + w^* \cdot p \cdot \sbp{1 - f(\alp)},
\end{align*}
since $f \in \mathcal{C}_{\uparrow}([0,\infty), [0,1])$ is a non-decreasing function.\footnote{This is part of the reason why we claim that the assumption of $f$ being non-decreasing is needed to simplify the analysis.}  

This implies that the contribution of the arriving loads $v$ and $v+1$ in the strategy $\pi(\ell,\bfq)$ is at least as large as that in $\pi(\ell,\bfq')$. Note that the contributions to $\E[\alp_{u^*}+\sum_{v \in S} \beta_v]$ from arriving loads before $j$ and after $j+1$ remain the same for both strategies. Thus, we conclude that the final value $\E[\alp_{u^*}+\sum_{v \in S} \beta_v]$ resulting from the strategy $\pi(\ell, \bfq)$ is at least as large as that from $\pi(\ell, \bfq')$, i.e., $\kap(\ell, \bfq) \geq \kap(\ell, \bfq')$.

By repeating the above analysis, we claim that for any $\ell \geq 0$ and $\bfq \in \{0,1\}^{|S|} \setminus \cQ$, we can always identify an appropriate $\bfq' \in \cQ$ such that $\kap(\ell, \bfq) \geq \kap(\ell, \bfq')$. This establishes the second equality.
\end{proof}

\subsection{Characterization of the Adversary's Optimal Performance}  
Consider a given total load $\psi:=|S| \cdot p \in [0,1]$. By Lemma~\ref{lem:12-a}, we find that any adversary's optimal \anhai{strategy} can be characterized as follows: a pre-load $\ell \geq 0$ assigned to $u^*$ upon the first arrival load in $S$, a batch of arriving loads of \tbf{Type I} with a total load of $\tpsi \in [0, \psi]$, and another batch of arriving loads of \tbf{Type II} with a total load of $\psi-\tpsi$. Slightly abusing the notations, we use $\pi(\ell, \psi, \tpsi)$ to refer to the above strategy, and $\kap(\ell, \psi, \tpsi)$  to denote the resulting value of $\E[\alp_{u^*}+\sum_{v \in S} \beta_v]$. Note that when $p \to 0$, we have:
\begin{align}
\kap(\ell, \psi, \tpsi)&=\E\bb{\alp_{u^*}+\sum_{v \in S} \beta_v} \nonumber\\
&=w^* \cdot \bP{\int_0^\ell \sfe^{-z}\cdot f(z)~\sd z+\int_\ell^{\ell+\tpsi}\sfe^{-z}~\sd z+ (\psi-\tpsi) \cdot \sbp{1-f(\ell+\tpsi)}},
\end{align}
where the first part, $w^* \cdot\int_0^\ell e^{-z} \cdot f(z) ~\sd z$, denotes the value of $\E[\alp_{u^*}]$ due to the pre-load of $\ell$ upon the first arrival load in $S$; the second part, $w^* \cdot \int_\ell^{\ell+\tpsi}\sfe^{-z}~\sd z$, denotes the cumulative sum of $\E[\Del \alp+\beta_v]$ due to the first batch of \tbf{Type I} arriving loads with a total load of $\tpsi$; and the third part, $w^* \cdot \bp{(\psi-\tpsi) \cdot \sbp{1-f(\ell+\tpsi)}}$, denotes the cumulative sum of $\E[\Del \alp+\beta_v]$ due to the second batch of \tbf{Type II} arriving loads with a total load of $\psi-\tpsi$. 

As a result, the minimization Program~\eqref{eqn:Lf-c} can be updated as:
\begin{align}\label{eqn:Lf-d}
\cL[f]=\min_{\ell \geq 0, \psi \in [0,1], \tpsi \in [0, \psi]} \bP{\frac{\kap(\ell, \psi, \tpsi)}{w^* \cdot \psi}=\frac{\int_0^\ell \sfe^{-z}\cdot f(z)~\sd z+\int_\ell^{\ell+\tpsi}\sfe^{-z}~\sd z+ (\psi-\tpsi) \cdot \sbp{1-f(\ell+\tpsi)}}{\psi}}.
\end{align}

%%%%%%%%%%%%%%%%%%%%%%%%%%%%%%%%%%%%%%%%
\begin{comment}
\section{Solving the Maximization Program in~(\ref{eqn:Lf}): A Discretization-based Approximation Framework with Guaranteed Bounded Optimality Gaps}

In this section, we aim to solve the maximization program~\eqref{eqn:max_f} under the following simplifying assumption on $f$ in addition to $f\in \mathcal{C}_{\uparrow}([0,\infty), [0,1])$:
\end{comment}
%%%%%%%%%%%%%%%%%%%%%%%%%%%%%%%%%%%%%%%%%%%%%%%%%%%%%%%%%%%%

%
\section{Bounding the Optimal Value of Program~(\ref{eqn:Lf}) with $\cF=\cF_3$}\label{sec:lb}

In this section, we establish the lower- and upper-bound results for Program~\eqref{eqn:Lf} when $\cF = \cF_3$, as restated in~\eqref{ass}. Specifically, we show that the optimal value of Program~\eqref{eqn:Lf}, $\Gamma(\cF_3)$, satisfies $\Gamma(\cF_3) \in [0.5796, 0.5810]$.
\refstepcounter{tbox}
\begin{tcolorbox}
\begin{align}\label{ass}
\cF_3=\left\{ f \mid f \in  \mathcal{C}_{\uparrow}([0,\infty), [0,1]), ~1-f(z) \le e^{-z}, \forall z \in [0,1], f(z)=1-1/e, \forall z \ge 1. \right\}
\end{align}
\end{tcolorbox}
We show that for any $f \in \cF_3$, the expression of $\cL[f]$ in~\eqref{eqn:Lf-d} can be simplified as follows:

\begin{proposition}[Appendix~\ref{app:17-b}]\label{pro:2-17-a}
For any $f \in \cF_3$, we claim that $\cL[f]$ in~\eqref{eqn:Lf-d} is equal to  
$\cL[f]=\min \sbp{W_1, W_2}$, where
\begin{align}
W_1&=\int_0^1 e^{-z} \cdot f(z)~\sd z+e^{-1}(1-e^{-1}), \label{int:23-a}\\
W_2 &=\min_{\ell, \tpsi \in [0,1], \ell+\tpsi \le 1} \bP{ \int_0^\ell \sfe^{-z}\cdot f(z)~\sd z+\int_\ell^{\ell+\tpsi}\sfe^{-z}~\sd z+ (1-\tpsi) \cdot \bp{1-f(\ell+\tpsi)}}.\label{int:23-b}
\end{align}
\end{proposition}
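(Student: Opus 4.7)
The plan is to start from the three-variable minimization characterization of $\cL[f]$ in~\eqref{eqn:Lf-d} and reduce it to the two regimes embodied by $W_1$ and $W_2$ via two reductions in sequence: (i) the adversary's optimal choice of the total load is $\psi = 1$; and (ii) conditioning on $\psi = 1$, the remaining minimization splits naturally according to whether $\ell + \tpsi \le 1$, directly yielding the expression in $W_2$, or $\ell + \tpsi > 1$, yielding $W_1$ after a further one-dimensional optimization. Throughout, the constraints defining $\cF_3$, namely $1 - f(z) \le e^{-z}$ on $[0,1]$ together with $f(z) = 1 - 1/e$ on $[1,\infty)$, are used in essential ways.

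For step (i), first write the numerator $N(\ell, \psi, \tpsi)$ in~\eqref{eqn:Lf-d} as affine in $\psi$: $N = A + B\,\psi$ with $B = 1 - f(\ell+\tpsi)$ and
\[
A = \int_0^\ell e^{-z} f(z)\,\sd z + \int_\ell^{\ell+\tpsi} e^{-z}\,\sd z - \tpsi\bp{1-f(\ell+\tpsi)}.
\]
Then $N/\psi = A/\psi + B$ is non-increasing in $\psi$ on $(0,1]$ whenever $A \ge 0$, so the adversary pushes $\psi$ to $1$. To verify $A \ge 0$ under $\cF_3$, I would split on whether $\ell + \tpsi \le 1$. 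In the former case, the defining bound $1 - f(z) \le e^{-z}$ applies at $z = \ell + \tpsi$; dropping the nonnegative integral $\int_0^\ell e^{-z}f(z)\,\sd z$ yields $A \ge e^{-(\ell+\tpsi)}(e^{\tpsi} - 1 - \tpsi) \ge 0$ via the elementary inequality $e^{\tpsi} \ge 1+\tpsi$. In the opposite case $\ell+\tpsi>1$, substitute $1 - f(\ell+\tpsi) = 1/e$ and lower bound $\int_0^\ell e^{-z}f(z)\,\sd z$ via $f(z) \ge 1 - e^{-z}$ on $[0, \min(\ell,1)]$ combined with $f(z) = 1 - 1/e$ on $[1,\ell]$ (if $\ell \ge 1$), then verify the scalar inequality for $A$.

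For step (ii), fix $\psi = 1$. When $\ell + \tpsi \le 1$, $N(\ell, 1, \tpsi)$ is exactly the expression inside the minimum in~\eqref{int:23-b}, so $N(\ell, 1, \tpsi) \ge W_2$ by definition. When $\ell + \tpsi > 1$, $1 - f(\ell+\tpsi) = 1/e$ gives $\partial N/\partial \tpsi = e^{-(\ell+\tpsi)} - 1/e \le 0$, so the minimum over $\tpsi$ is attained at $\tpsi = 1$, yielding
\[
N(\ell, 1, 1) = \int_0^\ell e^{-z}f(z)\,\sd z + e^{-\ell}(1 - e^{-1}).
\]
Its derivative in $\ell$ is $e^{-\ell}\bp{f(\ell) - (1 - 1/e)}$, which is nonpositive on $[0,1]$ since $f$ is non-decreasing with $f(1) = 1 - 1/e$, and identically zero on $[1,\infty)$ since $f$ is constant there; hence the minimum is attained on $[1,\infty)$, and direct substitution of $f(z) = 1 - 1/e$ on $[1, \ell]$ collapses $N(\ell, 1, 1)$ to $W_1$. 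Combining the two sub-regimes gives $N(\ell, 1, \tpsi) \ge \min(W_1, W_2)$ for all admissible $(\ell, \tpsi)$; together with step (i) this yields $\cL[f] \ge \min(W_1, W_2)$, and the reverse inequality follows by specializing to the parameter configurations realizing $W_1$ and $W_2$.

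The main obstacle I anticipate is the second subcase of step (i), when $\ell + \tpsi > 1$ with $\ell \le 1$: the pointwise bound $1 - f(z) \le e^{-z}$ cannot be applied at $z = \ell + \tpsi$, so the quick exponential absorption used in the first subcase is unavailable. I would instead integrate $f(z) \ge 1 - e^{-z}$ over $[0, \ell]$ to obtain the quadratic lower bound $\int_0^\ell e^{-z}f(z)\,\sd z \ge \tfrac{1}{2}(1 - e^{-\ell})^2$, reducing the verification of $A \ge 0$ to the inequality $\tfrac{1}{2} + \tfrac{1}{2}e^{-2\ell} - e^{-(\ell+\tpsi)} - \tpsi/e \ge 0$ over the region $\ell \in [0,1]$, $\tpsi \in [\max(0,1-\ell), 1]$. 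This lower bound is non-increasing in $\tpsi$ on $\ell+\tpsi \ge 1$ (its $\tpsi$-derivative is $e^{-(\ell+\tpsi)} - 1/e \le 0$), so the binding case is $\tpsi = 1$, and the remaining function $h(\ell) = \tfrac{1}{2} + \tfrac{1}{2}e^{-2\ell} - e^{-(\ell+1)} - 1/e$ is strictly decreasing on $[0,1]$ (as $h'(\ell) = e^{-(\ell+1)}(1 - e^{1-\ell}) \le 0$) with $h(1) = \tfrac{1}{2} - \tfrac{1}{2}e^{-2} - 1/e > 0$, closing the argument. The remaining calculus steps are routine.
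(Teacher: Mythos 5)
Your proof is correct and follows essentially the same route as the paper's: step (i) is Lemma~\ref{lem:2-16-a} combined with Proposition~\ref{pro:2-16-b} (your condition $A \ge 0$ is exactly the criterion $\psi\,\dot{g}(\psi) \le g(\psi)$, verified by casing on whether $\ell+\tpsi$ and $\ell$ exceed $1$), and step (ii) matches the paper's proof of Proposition~\ref{pro:2-17-a}, which drives $\tpsi \to 1$ on the region $\ell+\tpsi \ge 1$. The one refinement on your side is making explicit the monotonicity in $\ell$ that collapses $g(\ell,1)$ to $W_1$ at $\ell \ge 1$ (the paper writes $g(\ell,1)=W_1$, which strictly holds only for $\ell\ge 1$ and otherwise needs the extra $\ell$-argument you supply); conversely, your $\ell>1$ sub-subcase of step (i) is only sketched, whereas the paper's Case~2 in the proof of Lemma~\ref{lem:2-16-a} carries it out in full.
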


By substituting $\cL[f]$ in Program~(\ref{eqn:Lf}) with that in Proposition~\ref{pro:2-17-a}, Program~(\ref{eqn:Lf}) with $\cF=\cF_3$ is updated to
\begin{align}\label{eqn:max_f}
\max_{f \in \cF_3} \bP{\cL[f]=\min \bp{W_1, W_2} } 
\end{align}
\xhdr{Remarks on Program~\eqref{eqn:max_f}}. {The function space of $\cF_3$~\eqref{ass} is introduced primarily to simplify the analysis. \emph{In fact, we can directly apply our techniques, as shown in Section~\ref{sec:disc}, to obtain valid lower and upper bounds for Program~\eqref{eqn:Lf} with $f \in \cF_0$ and $\cL[f]$ in~\eqref{eqn:Lf-d}}. The main idea is to discretize $\psi$ as $\psi = k/n$ for all $k = 0,1,\ldots,n$, in the same way as we do for $\ell$ and $\tpsi$ in \textbf{AUG-LP}~\eqref{lp:obj}. The tradeoff is that, in doing so, we may obtain tighter lower and upper bounds on the optimal value of Program~\eqref{eqn:Lf} at the cost of requiring a much larger auxiliary LP, where the number of constraints increases by nearly a factor of $n$, which in turn requires more time and potentially more advanced hardware to solve optimally.}

 % It is clear that the optimal value of Program~\eqref{eqn:max_f} serves as a valid lower bound on that of Program~\eqref{eqn:Lf}, as the former considers only a sub-collection of functions that satisfy \textbf{ASMP}~\eqref{ass}. Consequently, any lower bound on the optimal value of Program~\eqref{eqn:max_f} continues to serve as  a valid lower bound on that of Program~\eqref{eqn:Lf}, which captures the optimal performance of the randomized primal-dual method in~\eqref{box:pd}. This is exact our idea used to prove the lower-bound result in Theorem~\ref{thm:main-0} in Section~\ref{sec:lb}.\footnote

%%%%
\subsection{An Auxiliary-LP-Based Approximation Framework for Bounding Program~\eqref{eqn:max_f}}\label{sec:disc}
Consider a given $n \gg 1$. Suppose we discretize $f(z)$ over $z \in [0,1]$ as follows: let $x_t:=f(t/n)$ for each $t \in (n):=\{0,1,2,\ldots,n\}$, where each $x_t \in [0,1]$ is a variable. Consider the following auxiliary LP for Program~\eqref{eqn:max_f}:
\smallskip

 \begin{tcolorbox}[title=(\textbf{AUG-LP}) Auxiliary Linear Program (LP) for Program~\eqref{eqn:max_f}:,]
 \begingroup
\allowdisplaybreaks
 \begin{align}
&\max y \label{lp:obj} \\
& 1-e^{-t/n} \le x_t \le x_{t+1}, &&\forall t \in (n-1)  \label{lp:cons-1}\\
&x_{n}= 1-e^{-1},  \label{lp:cons-2}\\
& y \le  \frac{1}{n}\sum_{t=1}^{n}x_t \cdot \sfe^{-(t/n)} +\sfe^{-1}(1-1/\sfe),  \label{lp:cons-3}
\\
& y \le   \frac{1}{n}\sum_{t=1}^{i}x_t \cdot \sfe^{-(t/n)} + \frac{1}{n}\sum_{t=i+1}^{i+j} e^{-(t/n)}+(1-j/n) \cdot (1-x_{i+j}), && \forall i \in (n), j\in (n-i).  \label{lp:cons-4}
\end{align}
\endgroup 
\end{tcolorbox}

\xhdr{Remarks on~\textbf{AUG-LP}~\eqref{lp:obj}}.  

(i) There are a total of $n+2$ variables and $\Theta(n^2)$ constraints in \textbf{AUG-LP}~\eqref{lp:obj} parameterized by an integer $n$. As shown in the proof of Proposition~\ref{pro:2-18-a}, the LP is always feasible and admits a bounded optimal objective value between $0$ and $1-1/e$.

%\smallskip

(ii) As shown in Lemma~\ref{lem:2-23-a}, the optimal value of Program~\eqref{eqn:max_f} is equal to that of \textbf{AUG-LP}~\eqref{lp:obj} when the discretization step $n \to \infty$. Informally, we can view the expressions on the right-hand side of Constraints~\eqref{lp:cons-3} and~\eqref{lp:cons-4} as discretized versions of $W_1$ and $W_2$ with $\ell = i/n$ and $\tpsi = j/n$.

(iii) There are multiple ways to formulate the auxiliary LP for Program~\eqref{eqn:max_f}, depending on how the integral is approximated, though the choice may impact the results in the upper and lower bound gaps. In our case, we choose to approximate each integral in $W_1$ and $W_2$, as defined in~\eqref{int:23-a} and~\eqref{int:23-b}, using the function value at the right endpoint of each segment. Additionally, we can slightly refine the current form of \textbf{AUG-LP}~\eqref{lp:obj}, \eg by removing the variable $x_0$ and the constraint~\eqref{lp:cons-4} associated with $(i=0,j=0)$ and $(i=n,j=0)$.\footnote{Observe that the set of all constraints involving the variable $x_0$ in \textbf{AUG-LP}~\eqref{lp:obj} is as follows: $0 \le x_0 \le x_1$ and $y \le 1-x_0$. Thus, we can safely assume that $x_0=0$ in any optimal solution of LP~\eqref{lp:obj}, justifying that it is a dummy variable. However, we still need $x_0$ since it appears in Constraint~\eqref{lp:cons-4} with $(i=0,j=0)$. Additionally, Constraint~\eqref{lp:cons-4} with $(i=n, j=0)$ is redundant since it is strictly dominated by Constraint~\eqref{lp:cons-3} due to $x_n \le 1-1/e$.}
That being said, we choose to retain the current form for better presentation.

\begin{proposition}\label{pro:2-18-a}
Let $\eta(n)$ denote the optimal value of~\textbf{AUG-LP}~\eqref{lp:obj} parameterized by a positive integer $n \in \mathbb{N}_{\geq 1}$.  We claim that $\eta(\infty):=\lim_{n \to \infty} \eta(n)$ exists, which satisfies that
\begin{align}\label{ineq:2-25-1}
\Big| \eta(n)-\eta(\infty) \Big| \le \frac{1-1/e}{n},\quad \forall n\in \mathbb{N}_{\geq 1}.
\end{align}
\end{proposition}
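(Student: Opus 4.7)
The plan is to establish the two-sided comparison $|\eta(km) - \eta(m)| \le (1-1/e)/m$ for every positive integers $m$ and $k \ge 1$, from which both the existence of $\eta(\infty)$ and the bound~\eqref{ineq:2-25-1} will follow by fixing $m$ and letting $k \to \infty$. I would first confirm that $\eta(n) \in [0, 1-1/e]$ for every $n \ge 1$: feasibility is immediate by choosing $x_t := 1 - e^{-t/n}$ (which satisfies~\eqref{lp:cons-1} and~\eqref{lp:cons-2}) together with any sufficiently small $y \ge 0$, while the upper bound follows from Constraint~\eqref{lp:cons-3} combined with $x_t \le x_n = 1-1/e$ and the fact that $\tfrac{1}{n}\sum_{t=1}^n e^{-t/n}$ is a right Riemann sum of the decreasing integrand $e^{-z}$, hence at most $\int_0^1 e^{-z}\,\sd z = 1 - 1/e$.

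Next, fix $m$ and $k \ge 1$, set $n = km$, and introduce two natural maps between the feasible sets: the \emph{lift} $x^{(m)} \mapsto \tilde{x}^{(n)}$ defined by $\tilde{x}^{(n)}_s := x^{(m)}_{\lceil s/k \rceil}$, and the \emph{projection} $\tilde{x}^{(n)} \mapsto x^{(m)}$ defined by $x^{(m)}_t := \tilde{x}^{(n)}_{tk}$. Monotonicity, the endpoint condition $x_{\text{last}} = 1 - 1/e$, and the lower bound $1 - e^{-t/\cdot} \le x_t$ are routinely checked to be preserved by each map, so both images are feasible. I would then argue that the lift carries any admissible $y$ of $\widetilde{\LP}(m)$ to an admissible $\tilde{y} \ge y - (1-1/e)/m$ of $\widetilde{\LP}(n)$, and that the projection does the symmetric thing, which together give the desired two-sided comparison.

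The workhorse is the per-block estimate
\begin{equation*}
0 \;\le\; \frac{1}{n}\sum_{s=(t-1)k+1}^{tk} e^{-s/n} - \frac{1}{m} e^{-t/m} \;\le\; \frac{1}{m}\bigl(e^{-(t-1)/m} - e^{-t/m}\bigr), \qquad t = 1, \dots, m,
\end{equation*}
together with its $x$-weighted analogue (obtained using the monotonicity of $x$). Telescoping these across $t$ and folding in the uniform bound $x \le 1-1/e$, the aggregate discretization error on the right-hand sides of Constraints~\eqref{lp:cons-3} and~\eqref{lp:cons-4} is at most $(1-1/e)/m$. For~\eqref{lp:cons-4} at $k$-aligned indices $(i',j') = (ik, jk)$ this comparison is direct. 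The main obstacle I anticipate is the \emph{non-aligned} case: for a general $(i',j')$ in $\widetilde{\LP}(n)$ there is no exact matching $(i,j)$ in $\widetilde{\LP}(m)$, so I would round to $(i,j) := (\lfloor i'/k \rfloor, \lceil j'/k \rceil)$ (adjusted to keep $i+j \le m$) and absorb the resulting fractional-block discrepancy using $\tilde{x} \in [0, 1-1/e]$ and the Lipschitz estimate $|e^{-a} - e^{-b}| \le |a-b|$; each contribution is $O(1/n)$ and thus fits comfortably within the $(1-1/e)/m$ budget.

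Once $|\eta(km) - \eta(m)| \le (1-1/e)/m$ is established, the sequence $\{\eta(n)\}_{n \ge 1}$ is Cauchy: for any $m_1, m_2$, the common refinement $M = m_1 m_2$ gives $|\eta(m_1) - \eta(m_2)| \le (1-1/e)(1/m_1 + 1/m_2)$ by the triangle inequality. Hence $\eta(\infty) := \lim_n \eta(n)$ exists; letting $k \to \infty$ in $|\eta(m) - \eta(km)| \le (1-1/e)/m$ (and using $\eta(km) \to \eta(\infty)$) yields $|\eta(m) - \eta(\infty)| \le (1-1/e)/m$ for every $m$, which is exactly~\eqref{ineq:2-25-1}.
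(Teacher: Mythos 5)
Your plan is correct in broad structure and its limit argument is cleaner than the paper's, but it takes a genuinely different route at the refinement step, and the hardest part is only sketched. The paper (Lemmas~\ref{lem:20-lb} and~\ref{lem:20-ub}) compares $\LP(n)$ with $\LP(2n)$ only, using a duplication lift $\tx_{2t-1}=\tx_{2t}=x_t$ and an \emph{averaging} projection $x_t=(\tx_{2t}+\tx_{2t+1})/2$, obtains $|\eta(2n)-\eta(n)|\le(1-1/e)/(2n)$, and then sums a geometric series to pass to $\eta(\infty)$. You instead compare $\LP(m)$ with $\LP(km)$ in one shot via a block-constant lift and a \emph{sampling} projection $x_t=\tx_{tk}$, prove $|\eta(km)-\eta(m)|\le(1-1/e)/m$, and then let $k\to\infty$ directly — no geometric series needed. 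The sampling projection does work: monotonicity of $\tbx$ together with the per-block Riemann-sum bound gives exactly the $(1-1/e)/m$ budget, and your observation that the two Riemann-sum pieces of Constraint~\eqref{lp:cons-4} telescope against each other (the $x$-weighted piece on $[0,i/m]$ costing $(1-1/e)(1-e^{-i/m})/m$ and the unweighted piece on $[i/m,(i+j)/m]$ costing $(e^{-i/m}-e^{-(i+j)/m})/m \le (1-1/e)e^{-i/m}/m$) is what makes the total stay within budget. The common-refinement Cauchy argument at the end is correct.

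The genuine gap is the non-aligned lift case, which you flag as "the main obstacle" but describe as fitting "comfortably" within the $(1-1/e)/m$ budget — it does not fit comfortably. Rounding $(i',j')$ in $\widetilde{\LP}(km)$ to a $k$-aligned pair requires shifting an index by up to $k-1$ positions, and each one-step shift costs up to $(1-1/e)/n$ (this is exactly what the paper's eight cases in Appendix~\ref{app:20-lb} establish for a single-step shift, via the cancellation $e^{-(i+1)/(2n)} - e^{-(i+j+1)/(2n)} \ge 0$). With $k-1$ shifts this is up to $(k-1)(1-1/e)/n$, which is just barely under $(1-1/e)/m$, and you have no slack left because the aligned part of the lift already exhausts nothing (it gives $W_2(ik,jk,n)\ge W_2(i,j,m)$ with no margin). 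So this step is not a $O(1/n)$-per-contribution afterthought — it is the entire content of the one-shot bound and must be proved by an argument analogous to (and essentially as long as) the paper's case analysis, with the added bookkeeping of iterating it $k-1$ times. You should either carry out that bookkeeping explicitly, or fall back to the paper's $k=2$ doubling with the geometric-series finish, which confines all rounding to a single-step shift.
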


\begin{lemma}\label{lem:2-23-a}[Appendix~\ref{app:pro-2}]
The optimal value of Program~\eqref{eqn:max_f} is equal to $\eta(\infty)$.\end{lemma}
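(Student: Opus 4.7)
The plan is to establish $\Gamma(\cF_3) = \eta(\infty)$ by sandwiching the two quantities: I will prove $\eta(n) \ge \Gamma(\cF_3) - O(1/n)$ and $\Gamma(\cF_3) \ge \eta(n) - O(1/n)$ via explicit round-trip translations between candidate functions $f \in \cF_3$ and feasible solutions of \textbf{AUG-LP}~\eqref{lp:obj}, then invoke Proposition~\ref{pro:2-18-a} to pass to the limit $n \to \infty$.

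For the upper bound $\eta(\infty) \ge \Gamma(\cF_3)$, given any $f \in \cF_3$ I set $x_t := f(t/n)$ for $t \in (n)$ and check that the defining properties of $\cF_3$ translate directly into the constraints of \textbf{AUG-LP}~\eqref{lp:obj}: monotonicity of $f$ gives $x_t \le x_{t+1}$; the pointwise bound $1-f(z) \le e^{-z}$ on $[0,1]$ gives $x_t \ge 1-e^{-t/n}$; and $f(1) = 1-1/e$ gives $x_n = 1-1/e$. The right-hand side of Constraint~\eqref{lp:cons-3} at this $\bx$ is then a right-endpoint Riemann sum for $W_1[f]$, and the right-hand side of Constraint~\eqref{lp:cons-4} indexed by $(i,j)$ mirrors the integrand of $W_2$ evaluated at $(\ell, \tpsi) = (i/n, j/n)$, up to the same type of Riemann-sum error. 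Since $f$ is monotone with range in $[0,1]$ and $e^{-z}$ is $1$-Lipschitz on $[0,1]$, each such discrepancy is a uniform $O(1/n)$; moreover $W_2[f]$, being the infimum over continuous $(\ell, \tpsi)$, is dominated by any grid-point evaluation. Taking the minimum over all RHS terms yields $\textnormal{LP}(\bx) \ge \cL[f] - O(1/n)$, so $\eta(n) \ge \sup_{f \in \cF_3}\cL[f] - O(1/n) = \Gamma(\cF_3) - O(1/n)$.

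For the reverse bound, I take an optimal solution $(\bx^*, y^* = \eta(n))$ of \textbf{AUG-LP} and define $f_n : [0,\infty) \to [0,1]$ by piecewise-linear interpolation of $(x_t^*)$ on $[0,1]$, extended by the constant $1-1/e$ on $[1,\infty)$. Each LP constraint translates back to a defining property of $\cF_3$: monotonicity of $(x_t^*)$ yields monotonicity and continuity of $f_n$; $x_n^* = 1-1/e$ secures both continuity at $z=1$ and the correct tail; and $x_t^* \ge 1-e^{-t/n}$ combined with monotonicity of $1-e^{-z}$ yields $1-f_n(z) \le e^{-z}$ on $[0,1]$. The bound $W_1[f_n] \ge \eta(n) - O(1/n)$ then follows by telescoping $\int_0^1 f_n(z) e^{-z}\,\sd z$ into sub-integrals on $[(t-1)/n, t/n]$ and using that $e^{-(t-1)/n} - e^{-t/n} \ge (1/n) e^{-t/n}$, with the accumulated error controlled by the total variation $x_n^* - x_0^* \le 1-1/e$. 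I expect the main obstacle to be the corresponding bound $W_2[f_n] \ge \eta(n) - O(1/n)$, because $W_2$ is an infimum over the continuous parameters $(\ell, \tpsi)$ whereas \textbf{AUG-LP} only enforces~\eqref{lp:cons-4} on the grid. For an arbitrary $(\ell, \tpsi)$ with $\ell + \tpsi \le 1$, my plan is to round to a carefully chosen pair $(i/n, j/n)$ with $i = \lfloor n\ell \rfloor$ and $i+j = \lfloor n(\ell + \tpsi) \rfloor$, invoke~\eqref{lp:cons-4} at $(i,j)$, and separately control each of the three terms (the two integrals and the boundary term $(1-\tpsi)(1-f_n(\ell+\tpsi))$) under this rounding. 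Each term shifts by at most $O(1/n)$ because $f_n$ and $e^{-z}$ are bounded in $[0,1]$, $e^{-z}$ is $1$-Lipschitz, and $f_n$ has total variation at most $1-1/e$. Combining the $W_1$ and $W_2$ estimates yields $\cL[f_n] \ge \eta(n) - O(1/n)$, hence $\Gamma(\cF_3) \ge \eta(n) - O(1/n)$. Letting $n \to \infty$ in both directions and applying Proposition~\ref{pro:2-18-a} closes the sandwich.
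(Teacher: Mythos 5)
Your approach is genuinely different from the paper's. The paper does not sandwich via Riemann sums at all. Instead, it observes that the integrals in $W_1$ and $W_2$ were themselves obtained by taking $p \to 0$ in the original stochastic-rewards model; for finite $p$ with $1/p$ an integer, the entire Program~\eqref{eqn:max_f} can be written \emph{exactly} as a finite LP in the sequence $\bfa = (f(kp))_k$, and that LP differs from \textbf{AUG-LP} at $n = 1/p$ only in a single coefficient $e^{-1}(1-1/e)$ versus $\sum_{k=1}^{1/p}e^{-(1+kp)}p$, an $O(p)$ discrepancy. The lemma then follows by passing $p \to 0$. This route sidesteps any need to manufacture a continuous function from a discrete vector. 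Your sandwich argument is more generic and would generalize to settings where the continuous program is not literally the $p\to 0$ limit of a discrete one, but it is also technically heavier and, as written, has a concrete flaw.

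The gap is in the reverse direction. You claim that $x_t^* \ge 1 - e^{-t/n}$ together with monotonicity of $1-e^{-z}$ implies $1 - f_n(z) \le e^{-z}$ on $[0,1]$ for the piecewise-linear interpolant $f_n$. This is false: $1-e^{-z}$ is \emph{concave}, so the chord joining $\bigl((t-1)/n,\, 1-e^{-(t-1)/n}\bigr)$ and $\bigl(t/n,\, 1-e^{-t/n}\bigr)$ lies strictly \emph{below} $1-e^{-z}$ on the interior of the cell, and hence the grid-point inequalities do not propagate to the interpolant. Concretely, with $n=1$, $x_0=0$, $x_1 = 1-1/e$, the interpolant $f_1(z) = (1-1/e)z$ gives $f_1(1/2) \approx 0.316$ whereas $1-e^{-1/2} \approx 0.393$, so $f_1 \notin \cF_3$. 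Consequently you cannot conclude $\Gamma(\cF_3) \ge \cL[f_n]$ from the interpolant alone. A repair is to replace $f_n$ by $\hat f_n(z) := \max\bigl(f_n(z),\, 1-e^{-z}\bigr)$ on $[0,1]$ (and $1-1/e$ on $[1,\infty)$): this is continuous, non-decreasing, satisfies $1-\hat f_n(z) \le e^{-z}$ and $\hat f_n(1)=1-1/e$, and since $f_n$ and $1-e^{-z}$ agree within $O(1/n)$ at the grid points and $1-e^{-z}$ is $1$-Lipschitz, $\|\hat f_n - f_n\|_\infty = O(1/n)$. You would then need to re-run the $W_1$, $W_2$ error estimates for $\hat f_n$; the same $O(1/n)$ bookkeeping should close, but note that the rounding you describe for the $W_2$ term (rounding $\ell+\tpsi$ \emph{down} to $(i+j)/n$) makes the boundary term $(1-\tpsi)(1-\hat f_n(\ell+\tpsi))$ potentially \emph{smaller} than $(1-j/n)(1-x^*_{i+j})$ by an amount proportional to the local increment $x^*_{i+j+1}-x^*_{i+j}$, which is not a priori $O(1/n)$; a more careful choice of rounding (or a compensating estimate against the integral terms) is needed there as well.
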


The proof of Proposition~\ref{pro:2-18-a} relies on the two lemmas below:
\begin{lemma}\label{lem:20-lb}[Appendix~\ref{app:20-lb}]
Suppose  \LP~\eqref{lp:obj} parameterized by some $n\in \mathbb{N}_{\geq 1}$ admits a feasible solution $(y, \x)$ with $\x=(x_t)_{t \in (n)}$.  Then   \LP~\eqref{lp:obj} parameterized by $2n$ must admit the following feasible solution $(\ty, \tbx)$, where
\begin{align}
\ty =y-\frac{1}{2n}\cdot (1-1/e), \quad \tx_0=x_0,  \quad \tx_{2t-1}=\tx_{2t}=x_t, \forall 1 \le t \le n. \label{def:1}
\end{align} 
\end{lemma}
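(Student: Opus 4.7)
The plan is to verify that $(\ty, \tbx)$ satisfies every constraint of the LP at size $2n$. Constraints~\eqref{lp:cons-1}--\eqref{lp:cons-2} follow by inspection: within each pair $\tx_{2t-1} = \tx_{2t}$ monotonicity is an equality, and across pairs it reduces to $x_t \le x_{t+1}$; the lower bound uses $\tx_s = x_{\lceil s/2 \rceil} \ge 1 - e^{-\lceil s/2\rceil /n} \ge 1 - e^{-s/(2n)}$; and $\tx_{2n} = x_n = 1 - 1/e$. Constraint~\eqref{lp:cons-3} at size $2n$ is no tighter than its size-$n$ analogue, because
\begin{align*}
\frac{1}{2n}\sum_{s=1}^{2n} \tx_s\, e^{-s/(2n)} = \frac{1}{2n}\sum_{t=1}^n x_t\bigl(e^{-(2t-1)/(2n)} + e^{-t/n}\bigr) \ge \frac{1}{n}\sum_{t=1}^n x_t\, e^{-t/n};
\end{align*}
so $\ty \le y$ already satisfies it, without consuming any of the slack $(1-1/e)/(2n)$.

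For constraint family~\eqref{lp:cons-4}, fix $(a, b)$ with $a \in (2n), b \in (2n-a)$, and let $R^{(2n)}_{a,b}$ and $R^{(n)}_{i,j}$ denote the corresponding right-hand sides at sizes $2n$ and $n$. I will pick a matching parameter $(i, j)$ with $i + j = \lceil (a+b)/2 \rceil$ (so that $\tx_{a+b} = x_{i+j}$): namely $(a/2, b/2)$ in the even--even case, $(a/2, (b+1)/2)$ in the even--odd case, $((a+1)/2, b/2)$ in the odd--even case, and $((a-1)/2, (b+1)/2)$ in the odd--odd case. The target inequality is
\begin{align*}
R^{(2n)}_{a,b} - R^{(n)}_{i,j} \ge -\frac{1 - 1/e}{2n},
\end{align*}
which combined with $y \le R^{(n)}_{i,j}$ yields $\ty = y - (1-1/e)/(2n) \le R^{(2n)}_{a,b}$. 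Expanding term by term, the difference decomposes into non-negative Riemann-refinement contributions of the form $\frac{1}{2n} x_t(e^{-(2t-1)/(2n)} - e^{-t/n})$ and $\frac{1}{2n}(e^{-(2t-1)/(2n)} - e^{-t/n})$, arising because each coarse coefficient $\frac{1}{n}e^{-t/n}$ is refined into $\frac{1}{2n}(e^{-(2t-1)/(2n)} + e^{-t/n})$ with $(2t-1)/(2n) < t/n$, together with one parity-dependent residue: when $a$ is odd, an overhang term $\frac{1}{2n} x_{i+1} e^{-(i+1)/n}(e^{1/(2n)} - 2)$; when $b$ has mismatched parity (Cases~2 and~4), a pairing between a positive boundary term $\frac{1}{2n}(1 - x_{i+j})$ from the last factor of~\eqref{lp:cons-4} and a negative refinement residue of magnitude at most $\frac{1}{2n}$.

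The main obstacle is showing that each residue is bounded by precisely the slack $(1-1/e)/(2n)$. Two sharp estimates are essential. First, $2 - e^{1/(2n)} \le 1$ since $e^{1/(2n)} \ge 1$; combined with $x_{i+1} \le 1 - 1/e$ and $e^{-(i+1)/n} \le 1$, this controls the odd-$a$ overhang by $(1-1/e)/(2n)$. Second, for the boundary pairings in Cases~2 and~4, the naive bound $|1 - x_{i+j}| \le 1$ would lose a factor of roughly $e/(e-1)$; the fix is to invoke the cap $x_k \le x_n = 1 - 1/e$ enforced by~\eqref{lp:cons-1}--\eqref{lp:cons-2}, which gives $1 - x_k \ge 1/e$. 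Then the positive contribution $\frac{1}{2n}(1 - x_{i+j}) \ge \frac{1}{2ne}$ absorbs the paired negative term (either $-\frac{1}{2n} e^{-(i+1)/n}$ in Case~4, or $\frac{1}{2n} e^{-u}(e^{1/(2n)} - 2)$ with $u = (i+j+1)/n$ in Case~2), leaving a residual $\ge \frac{1}{2n}(1/e - 1) = -(1-1/e)/(2n)$, exactly matching the slack in the definition of $\ty$.
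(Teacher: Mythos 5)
Your proof is correct, and it takes a genuinely different route from the paper's. The paper proves the claim for Constraint family~\eqref{lp:cons-4} by a two-step chain: starting from an arbitrary pair $(i,j)$ at size $2n$, it first shifts within $\LP(2n)$ to a neighboring ``both-even'' pair (using identities like $W_2(i,j,2n) = W_2(i{+}1,j,2n) + (\text{correction})$, or $W_2(i,j,2n) = W_2(i{-}1,j{+}1,2n) + (\text{correction})$, each correction absorbed by $\tx \le 1 - 1/e$), and only then compares to $\LP(n)$ via the pure Riemann-refinement inequality $e^{-(2k-1)/(2n)} + e^{-2k/(2n)} \ge 2 e^{-k/n}$; this forces four separate parity cases plus four corner cases for $i=0$ or $j=0$. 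You instead pick, for every $(a,b)$ at size $2n$, a \emph{single} partner $(i,j)$ at size $n$ with $i+j = \lceil(a{+}b)/2\rceil$ (so $\tx_{a+b} = x_{i+j}$), and bound $R^{(2n)}_{a,b} - R^{(n)}_{i,j}$ directly, splitting the slack between an odd-$a$ overhang of the form $\tfrac{1}{2n}\,x\,e^{-\ast}(e^{1/(2n)}-2)$ and an odd-$b$ pairing where the boundary gain $\tfrac{1}{2n}(1-x_{i+j}) \ge \tfrac{1}{2ne}$ cancels a refinement deficit of size at most $\tfrac{1}{2n}$. This is cleaner in two ways: it unifies the corner cases $i=0,\,j=0$ with the general parity cases, and it makes transparent exactly where the two critical estimates $x_k \le 1-1/e$ and $1 - x_k \ge 1/e$ are each used (respectively, in the odd-$a$ and odd-$b$ residues), whereas the paper deploys essentially the same estimates but buried inside the chain corrections. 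A couple of minor index slips in your sketch are harmless: in the even-odd case the residue exponent is $u=(i{+}j)/n$ rather than $(i{+}j{+}1)/n$; and the odd-$a$ overhang in your odd-even case reads $\tfrac{1}{2n}x_i e^{-i/n}(e^{1/(2n)}-2)$ under the indexing $i = (a{+}1)/2$ you chose, not $x_{i+1}$ — but the bound $\ge -(1-1/e)/(2n)$ is unaffected either way.
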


\begin{lemma}\label{lem:20-ub}[Appendix~\ref{app:20-ub}]
Suppose  \LP~\eqref{lp:obj} parameterized by some $2n$ with $n \in \mathbb{N}_{\geq 1}$ admits a feasible solution $(\ty, \tbx)$ with $\tbx=(\tx_t)_{t \in (2n)}$.  Then   \LP~\eqref{lp:obj} parameterized by $n$ must admit the following feasible solution $(y, \x)$, where
\begin{align}
y =\ty-\frac{1}{2n}\cdot (1-1/e),  \quad x_0=\tx_0, \quad x_t=\sbp{\tx_{2t}+\tx_{2t+1}}/2, \forall 1 \le t \le n-1, \quad x_n=\tx_{2n}. \label{def:2}
\end{align} 
\end{lemma}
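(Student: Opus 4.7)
The plan is dual to Lemma~\ref{lem:20-lb}: starting from a feasible $(\ty,\tbx)$ of \LP~\eqref{lp:obj} at parameter $2n$, I would define $(y,\x)$ via~\eqref{def:2} and verify every constraint of \LP~\eqref{lp:obj} at parameter $n$. The slack $\ty - y = (1-1/e)/(2n)$ is inserted to absorb the worst-case discretization loss incurred when coarsening the finer $2n$-partition into the coarser $n$-partition. Unlike Lemma~\ref{lem:20-lb}, where refining the partition can only help, here we are losing resolution, so the verification of the coarser constraints is genuinely more delicate.

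First I would dispatch Constraints~\eqref{lp:cons-1} and~\eqref{lp:cons-2}: monotonicity of $\x$ and the boundary identity $x_n = 1-1/e$ are inherited from $\tbx$ via averaging, while the pointwise lower bound $x_t \ge 1-e^{-t/n}$ follows from $x_t \ge \tx_{2t} \ge 1-e^{-2t/(2n)}$. For Constraint~\eqref{lp:cons-3}, I would invoke the analogous $2n$-LP constraint on $\ty$ and split $\sum_{s=1}^{2n}\tx_s e^{-s/(2n)}$ by parity. After telescoping, the discretization gap reduces to $\tfrac{1}{2n}\bb{\tx_1 e^{-1/(2n)} - \tx_{2n}e^{-1} + \sum_{t=1}^{n-1}\tx_{2t+1}(e^{-(2t+1)/(2n)}-e^{-t/n})}$, which is bounded by $\tx_1 e^{-1/(2n)}/(2n) \le (1-1/e)/(2n)$ since each inner summand is non-positive and $\tx_1 \le \tx_{2n} = 1-1/e$; this is exactly the slack absorbed by $y$.

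The core of the proof is Step~3: Constraint~\eqref{lp:cons-4} at every $(i,j)$ with $i \in (n)$ and $j \in (n-i)$. For each such $(i,j)$, I need to exhibit a matching $(i',j')$ in the $2n$-LP whose RHS $W_2^{2n}(i',j',\tbx)$ bounds $\ty$ tightly enough that $y = \ty - (1-1/e)/(2n) \le W_2^n(i,j,\x)$. The natural first attempt is $(i',j')=(2i,2j)$, for which the discretization gap decomposes as $A+B+C$: the term $A$ telescopes as in Step~2 to a bound $\le \tx_1 e^{-1/(2n)}/(2n)$; the term $B$ evaluates in closed form to $\tfrac{e^{-(2i+1)/(2n)}(1-e^{-j/n})}{2n(1+e^{-1/(2n)})}$ via geometric summation; and $C$ vanishes at $i+j=n$ and equals $(1-j/n)(\tx_{2(i+j)+1}-\tx_{2(i+j)})/2$ otherwise. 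In the saturated regime $\tbx \equiv 1-1/e$, the identity $\tfrac{e^{-1/(2n)}}{1+e^{-1/(2n)}} \le 1/2$ reduces $A+B+C \le (1-1/e)/(2n)$ to the elementary tautology $a(1-u)+u(1-v) \le 2a$ with $a = 1-1/e$, $u = e^{-i/n}$, $v = e^{-j/n}$.

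The main obstacle I anticipate is that the naive matching $(i,j)\to(2i,2j)$ can fail for pathological $\tbx$ exhibiting a sharp jump $\tx_{2(i+j)+1}-\tx_{2(i+j)}$ at an interior index $i+j<n$: then $C$ alone may already exceed $(1-1/e)/(2n)$. In such cases a shifted matching such as $(i',j')=(2i,2j+1)$ is needed, which absorbs the jump value $\tx_{2(i+j)+1}$ into the Riemann-sum term rather than the boundary term; using $\tx_{2(i+j)+1} \ge 1 - e^{-(2(i+j)+1)/(2n)}$, the shift reduces $W_2^{2n}$ by roughly $(1-j/n)(\tx_{2(i+j)+1}-\tx_{2(i+j)})$ minus a controlled non-negative term, tightening the $2n$-LP bound on $\ty$. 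The proof thus proceeds by case-splitting on the magnitude of the interior jump---choosing $(2i,2j)$ when the jump is small and $(2i,2j+1)$ (or similar) when it is large---and invoking the envelope $\tx_s \ge 1-e^{-s/(2n)}$ together with monotonicity to guarantee that in every regime \emph{some} matching closes the inequality $W_2^{2n}(i',j',\tbx) \le W_2^n(i,j,\x) + (1-1/e)/(2n)$.
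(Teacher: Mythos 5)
Your plan is directionally aware of the key difficulty (the boundary term involves $x_{i+j}=(\tx_{2(i+j)}+\tx_{2(i+j)+1})/2$, and a large jump $\tx_{2(i+j)+1}-\tx_{2(i+j)}$ can make the naive matching $(i',j')=(2i,2j)$ fail), but the case-split you propose is not how the paper resolves it, and I do not think your alternative branch actually closes the bound.

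The paper uses a \emph{single uniform} matching $(i,j)\mapsto(i',j')=(2i+1,\,2j)$ for every interior pair (with $i+j<n$), with no case analysis. The point is that the jump is absorbed for free by the one-sided estimate $x_{i+j}\le\tx_{2(i+j)+1}$, which gives $(1-j/n)(1-x_{i+j})\ge(1-j/n)(1-\tx_{2(i+j)+1})$ directly; and shifting the \emph{first} index from $2i$ to $2i+1$ (rather than the second index from $2j$ to $2j+1$) re-indexes the sums so that every $\tx_{2t+1}e^{-(2t+1)/(2n)}$ produced by expanding $\frac{1}{n}\sum_{t\le i} x_t e^{-t/n}$ lands exactly where $W_2(2i+1,2j,2n)$ needs it, leaving only a single correction term $-\tfrac{1}{2n}\tx_1 e^{-1/(2n)}\ge-\tfrac{1-1/e}{2n}$. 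No Riemann-sum error term from the middle block survives, because $\tfrac{1}{n}e^{-t/n}=\tfrac{1}{2n}\bigl(e^{-2t/(2n)}+e^{-2t/(2n)}\bigr)\ge\tfrac{1}{2n}\bigl(e^{-2t/(2n)}+e^{-(2t+1)/(2n)}\bigr)$ covers exactly the indices $\{2i+2,\dots,2i+2j+1\}$ that $W_2(2i+1,2j,2n)$'s middle sum runs over.

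Your proposed fallback $(i',j')=(2i,\,2j+1)$ does not work, at least not with the stated slack. Keeping $i'=2i$ means the first sum in $W_2(2i,2j+1,2n)$ runs only up to $s=2i$, so the term $\tx_{2i+1}e^{-(2i+1)/(2n)}$ generated by expanding $x_i$ is not matched; and the middle sum of $W_2(2i,2j+1,2n)$ runs from $s=2i+1$, so it contains an extra $\tfrac{1}{2n}e^{-(2i+1)/(2n)}$ that the $n$-side middle sum cannot cover (it only covers $s\ge 2i+2$). Net, you lose $\tfrac{1}{2n}e^{-(2i+1)/(2n)}(1-\tx_{2i+1})$ in addition to the $\tfrac{1}{2n}\tx_1 e^{-1/(2n)}$ term; for $i=0$ the first loss alone is $\approx\tfrac{1}{2n}$, already exceeding the budget $\tfrac{1-1/e}{2n}$. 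So this branch of your case analysis is broken, and the repair is precisely to increment $i'$ instead of $j'$: the pair $(2i+1,2j)$ simultaneously recaptures $\tx_{2i+1}e^{-(2i+1)/(2n)}$ into the first sum and shifts the middle block to start at $2i+2$, eliminating both losses at once.

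Two smaller remarks. First, the corner case $i+j=n$ must be handled separately since then $x_{i+j}=x_n=\tx_{2n}$ exactly (there is no averaging and no jump), and the paper notes this briefly; your Step 3 does not address it. Second, your Step 2 sketch of Constraint~\eqref{lp:cons-3} is fine and matches the paper's telescoping, though the paper's displayed derivation is slightly cleaner: it writes $W_1(n)\ge W_1(2n)+\tfrac{1}{2n}\bigl(\tx_{2n}e^{-1}-\tx_1 e^{-1/(2n)}\bigr)$ and then drops the nonnegative combination, rather than bounding the inner sum term-by-term.
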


%We defer the proofs of the two lemmas above to Appendices~\ref{app:20-lb} and~\ref{app:20-ub}.

 %\pan{add a comment that $1-1/e$ can be replaced by some parameter $\gam$.}
\begin{proof}[\textbf{Proof of Proposition~\ref{pro:2-18-a}}.]
Note that for any positive integer $n \in \mathbb{N}_{\geq 1}$:  
(i) \textbf{AUG-LP}~\eqref{lp:obj} is feasible since it always admits at least one feasible solution, namely, $x_t=1-e^{-t/n}$ for all $t \in (n)$;  
(ii) \textbf{AUG-LP}~\eqref{lp:obj} has a bounded optimal objective value, which is at least zero and no more than $1-1/e \approx 0.632$. The second point is justified as follows: By Constraints~\eqref{lp:cons-1} and~\eqref{lp:cons-2}, we have $x_t \le x_n= 1-1/e$ for all $t \in (n)$. Thus, by Constraint~\eqref{lp:cons-3}, we obtain
\begin{align*}
y &\le  \frac{1}{n}\sum_{t=1}^{n}x_t \cdot e^{-(t/n)} +\sfe^{-1}(1-1/\sfe) 
\le (1-1/e) \cdot  \frac{1}{n}\sum_{t=1}^{n} e^{-(t/n)} +\sfe^{-1}(1-1/\sfe) \\
&\le  (1-1/e) \cdot  \int_0^1  e^{-z}~\sd z+\sfe^{-1}(1-1/\sfe)=1-1/e.
\end{align*}

By Lemmas~\ref{lem:20-lb} and~\ref{lem:20-ub}, we claim that for any $n\in \mathbb{N}_{\geq 1}$, 
\begin{align*}
\eta(2n)  \ge \eta (n)-\frac{1}{2n} \cdot(1-1/e), \quad \eta(n)  \ge \eta (2n)-\frac{1}{2n} \cdot(1-1/e),
\end{align*}
which implies that $|\eta(2n)-\eta(n)| \le (1-1/e)/(2n)$. This suggests that $(\eta(n))_n$ is a Cauchy sequence and thus converges.  

Note that by the inequality $\eta(2n)  \ge \eta (n)-\frac{1}{2n} \cdot(1-1/e)$, we find that 
\begingroup
\allowdisplaybreaks
\begin{align*}
\eta \sbp{2^k \cdot n} & \ge \eta \sbp{2^{k-1} \cdot n}-\frac{1}{2^k \cdot n} \cdot(1-1/e)  \ge  \eta \sbp{2^{k-2} \cdot n}-\bP{\frac{1}{2^k \cdot n}+\frac{1}{2^{k-1} \cdot n} }\cdot(1-1/e) \\
&\ge  \eta \sbp{n}- \frac{1-1/e}{n} \cdot \bp{2^{-k}+2^{-(k-1)}+\cdots+2^{-1}}.
\end{align*}
\endgroup
By taking $k \to \infty$, we obtain  
\begin{align*}
\eta(\infty) \ge \eta(n)-\frac{1-1/e}{n}.
\end{align*}
Applying the same analysis as above to the inequality $\eta(n)  \ge \eta (2n)-\frac{1}{2n} \cdot(1-1/e)$, we have $\eta(\infty) \le \eta(n)-\frac{1-1/e}{n}$. Thus, we establish the claim.
\end{proof}

\xhdr{Remarks on Proposition~\ref{pro:2-18-a}}.

(i) The proofs of Lemmas~\ref{lem:20-lb} and~\ref{lem:20-ub} in Appendices~\ref{app:20-lb} and~\ref{app:20-ub}, along with the above proof of Proposition~\ref{pro:2-18-a}, actually yield a stronger version of Inequality~\eqref{ineq:2-25-1}, as follows:
\begin{align}\label{ineq:2-25-2}
\Big| \eta(n)-\eta(\infty) \Big| \le \frac{\tau}{n},\quad \forall n\in \mathbb{N}_{\geq 1},
\end{align}
where $\tau \in [0,1]$ is any \emph{feasible} upper bound that can be imposed on $x_n$. In other words, if we replace Constraint~\eqref{lp:cons-2} with $x_n = \tau$ for any given $\tau \in [1-1/e, 1]$, Inequality~\eqref{ineq:2-25-2} continues to hold for \textbf{AUG-LP}~\eqref{lp:obj}.  

%We will use this result later in the proof of the upper-bound result in Theorem~\ref{thm:main-0}.

(ii) By Lemma~\ref{lem:2-23-a}, we find that the optimal value of Program~\eqref{eqn:max_f} is equal to $\eta(\infty)$. From the numerical results for \textbf{AUG-LP}~\eqref{lp:obj}, summarized in Table~\ref{table:nume}, we see that $\eta(n=1000) = 0.5803$, which suggests that $\eta(\infty) \ge 0.5803 - (1-1/e)/1000 \geq 0.5796$ and $\eta(\infty) \le 0.5803 + (1-1/e)/1000 \leq 0.5810$. This \anhai{establishes} the lower and upper bounds on the optimal value of Program~\eqref{eqn:Lf} with $\cF=\cF_3$, \ie 
$\Gamma(\cF_3) \in [0.5796, 0.5810]$.

\section{Upper Bounding Program~(\ref{eqn:Lf}) over $\cF_0$ and $\cF_1$}\label{sec:ub}

We can significantly simplify and customize our approach from Section~\ref{sec:lb} if our goal is solely to determine upper bounds for Program~\eqref{eqn:Lf} when $\cF$ takes different function spaces. This is because we do not need to obtain the exact expression of $\cL[f]$; rather, any valid upper bound on it suffices.  

Observe that for any $f\in \mathcal{C}([0,\infty), [0,1])$ without additional assumptions, we claim that  
\begin{align*}
\cL[f] \le &W_1, &&W_1=\int_0^1 e^{-z} \cdot f(z)~\sd z+e^{-1}(1-e^{-1}), \tag{\ref{int:23-a}}\\
\cL[f] \le &W_2, &&W_2 =\min_{\ell, \tpsi \in [0,1], \ell+\tpsi \le 1} \bP{ \int_0^\ell \sfe^{-z}\cdot f(z)~\sd z+\int_\ell^{\ell+\tpsi}\sfe^{-z}~\sd z+ (1-\tpsi) \cdot \bp{1-f(\ell+\tpsi)}}.\tag{\ref{int:23-b}}
\end{align*}
This holds because $W_1$ represents the special case of $\kap(\ell, \psi, \tpsi)/(w^* \cdot \psi)$ when $\ell=1$ and $\psi=\tpsi=1$, while $W_2$ represents the special case of $\kap(\ell, \psi, \tpsi)/(w^* \cdot \psi)$ when $\ell, \tpsi \in [0,1]$, $\ell+\tpsi \le 1$, and $\psi=1$. Consequently, we have $\cL[f] \le \min \bp{W_1, W_2}$ for any $f\in \mathcal{C}([0,\infty), [0,1])$.\footnote{Note that obtaining this inequality does not require $f \in  \cF_0=\mathcal{C}_{\uparrow}([0,\infty), [0,1])$, which additionally requires $f$ to be non-decreasing. The non-decreasing property is only needed to justify the expression of $\cL[f]$ in~\eqref{eqn:Lf-d}. This implies that we can apply our approach to obtain an upper bound for Program~\eqref{eqn:Lf} even when $\cF$ belongs to a larger function space of $\mathcal{C}([0,\infty), [0,1])$ than $\cF_0$.}  

This suggests that the optimal values of Program~\eqref{eqn:Lf} with $\cF=\cF_0$ and $\cF=\cF_1$, denoted by $\Gamma(\cF_0)$ and $\Gamma(\cF_1)$, are upper bounded as follows:
\begin{align}
\Gamma(\cF_0) &\le \max_{f \in \cF_0 } \bB{\min \bp{W_1, W_2}}, &&  \cF_0=\mathcal{C}_{\uparrow}([0,\infty), [0,1]); \label{prog:2-25-2}\\
\Gamma(\cF_1) &\le \max_{f \in \cF_1 } \bB{\min \bp{W_1, W_2}}, &&  \cF_1=\mathcal{C}_{\uparrow}([0,\infty), [0,1]) \cap \{f \mid f(1) \le 1-1/e\}. \label{prog:2-25-7}
\end{align}

Observe that $W_1$ in~\eqref{int:23-a} and $W_2$ in~\eqref{int:23-b} involve only function values of $f$ over $[0,1]$. Similar to before, let $x_t := f(t/n)$ for each $t \in (n) := \{0,1,2,\ldots, n\}$, where each $x_t \in [0,1]$ is a variable. We propose the following auxiliary LPs:

% LP to approximate the optimal value of Program~\eqref{prog:2-25-2}:

%\text{~where $W_1$ is defined in~\eqref{int:23-a} and $W_2$ is defined in~\eqref{int:23-b}}.
%Observe that for any given $f\in \mathcal{C}_{\uparrow}([0,\infty), [0,1])$, by definition of $\cL[f]$ in~\eqref{eqn:Lf-d}, we have
%\begin{align*}
%\cL[f]& \le \frac{\kap(\ell=1,\psi=1, \tpsi=1)}{w^* \cdot (\psi=1)}
%= \int_0^1 e^{-z} \cdot f(z)~\sd z+e^{-1}(1-e^{-1})=W_1;\\
%\cL[f]& \le \frac{\kap(\ell=1,\psi=1, \tpsi=1)}{w^* \cdot (\psi=1)}
%=
%\end{align*}

\smallskip
 \begin{tcolorbox}[title=(\textbf{AUG-UB-LP}) Auxiliary Upper-Bound Linear Program (LP) for Program~\eqref{prog:2-25-2}:,]
 \begingroup
\allowdisplaybreaks
 \begin{align}
&\max y \label{lp3:obj} \\
&0 \le  x_t \le x_{t+1} \le 1, &&\forall t \in (n-1)  \label{lp3:cons-1}\\
& y \le  \frac{1}{n}\sum_{t=1}^{n}x_t \cdot \sfe^{-(t/n)} +\sfe^{-1}(1-1/\sfe),  \label{lp3:cons-3}
\\
& y \le   \frac{1}{n}\sum_{t=1}^{i}x_t \cdot \sfe^{-(t/n)} + \frac{1}{n}\sum_{t=i+1}^{i+j} e^{-(t/n)}+(1-j/n) \cdot (1-x_{i+j}), && \forall i \in (n), j\in (n-i).  \label{lp3:cons-4}
\end{align}
\endgroup 
\end{tcolorbox}
Similarly, we can propose an auxiliary upper-bound LP for Program~\eqref{prog:2-25-7}, which has the same structure as  \textbf{AUG-UB-LP}~\eqref{lp3:obj} except that Constraints~\eqref{lp3:cons-1} are replaced with $0 \leq x_t \leq x_{t+1} \leq 1 - 1/e$ for all $t \in (n-1)$.  Note that, unlike~\textbf{AUG-LP}~\eqref{lp:obj}, which is used to derive both lower and upper bounds for $\Gamma(\cF_3)$, \textbf{AUG-UB-LP}~\eqref{lp3:obj} is designed solely for obtaining upper bounds on $\Gamma(\cF_0)$.

\begin{proposition}\label{prop:2-25-1}
$\Gamma(\cF_0) \le 0.5841$ and $\Gamma(\cF_1) \le 0.5810$.
\end{proposition}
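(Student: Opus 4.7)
The plan is to import the discretization/approximation scheme behind \textbf{AUG-LP}~\eqref{lp:obj} into the one-sided upper-bound setting. Let $\zeta_0(n)$ and $\zeta_1(n)$ denote the optimal values of \textbf{AUG-UB-LP}~\eqref{lp3:obj} and its $\cF_1$-variant (in which Constraint~\eqref{lp3:cons-1} is strengthened to $0 \le x_t \le x_{t+1} \le 1-1/e$), respectively. The central claim to establish, for each $k \in \{0,1\}$, is the quantitative bound $\Gamma(\cF_k) \le \zeta_k(n) + C/n$ for an explicit absolute constant $C$ and every $n \in \mathbb{N}_{\geq 1}$. Feeding in $n = 1000$ and the numerically computed $\zeta_k(1000)$ then yields the claimed $0.5841$ and $0.5810$.

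Since Programs~\eqref{prog:2-25-2} and~\eqref{prog:2-25-7} already give $\Gamma(\cF_k) \le \max_{f \in \cF_k} \min(W_1(f), W_2(f))$, it suffices to show that for any such $f$, the substitution $x_t := f(t/n)$ together with $y := \min(W_1(f), W_2(f)) - C/n$ produces a feasible LP solution. The monotonicity and range of $f$ make~\eqref{lp3:cons-1} immediate (with the extra cap $f(1) \le 1-1/e$ handling the $\cF_1$-variant), so the real work is in verifying~\eqref{lp3:cons-3} and~\eqref{lp3:cons-4}, which reduces to bounding the right-endpoint Riemann approximation errors for the integrals $\int_0^{i/n} e^{-z} f(z) \, \mathrm{d}z$ and $\int_{i/n}^{(i+j)/n} e^{-z} \, \mathrm{d}z$. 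Using the identity $e^{-(t-1)/n} - e^{-t/n} = e^{-t/n}(e^{1/n}-1)$ together with the pointwise bound $f(z) \le x_t$ on each subinterval $[(t-1)/n, t/n]$, the per-subinterval error is $O(1/n^2)$, and summing across $t$ yields a total $O(1/n)$ discrepancy that can be absorbed by the $C/n$ slack in $y$.

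The main obstacle is making $C$ an absolute constant that works uniformly across the $\Theta(n^2)$ instances of~\eqref{lp3:cons-4} indexed by $(i,j)$; the monotonicity assumption $f \in \mathcal{C}_{\uparrow}$ is crucial here, since any cumulative drift telescopes to $x_n - x_0 \le 1$, decoupling $C$ from both $f$ and the choice of $(i,j)$. Optionally, a doubling/halving argument parallel to Lemmas~\ref{lem:20-lb}--\ref{lem:20-ub} would additionally show that $\zeta_k(\infty) := \lim_n \zeta_k(n)$ exists with $|\zeta_k(n) - \zeta_k(\infty)| \le C/n$, framing the numerical upper bound as approaching a well-defined limit, but this is not required for Proposition~\ref{prop:2-25-1} itself. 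Substituting $n = 1000$ and the corresponding LP value into $\zeta_k(n) + C/n$ then delivers the stated bounds $0.5841$ and $0.5810$, closing the proof.
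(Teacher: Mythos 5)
Your proof takes a genuinely different route from the paper's. The paper reuses its earlier machinery: it invokes the doubling/halving Lemmas~\ref{lem:20-lb} and~\ref{lem:20-ub} and the Cauchy-sequence argument of Proposition~\ref{pro:2-18-a} to conclude $|\zeta(n) - \zeta(\infty)| \le \tau/n$ for \textbf{AUG-UB-LP}, combines this with a Lemma~\ref{lem:2-23-a}-style argument to identify $\zeta(\infty)$ with the value of Program~\eqref{prog:2-25-2}, and then plugs in $n=1000$. For $\cF_1$ it additionally relies on the empirical observation (end of Section~\ref{sec:num}) that the $\cF_1$-variant LP has the same optimal value as \textbf{AUG-LP}, so that $\tau = 1-1/e$ applies. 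Your approach instead \emph{bypasses the limit entirely}: for any $f \in \cF_k$ you take $x_t := f(t/n)$ and $y := \min(W_1, W_2) - C/n$, show the resulting point is feasible by bounding the right-endpoint Riemann errors for $\int e^{-z} f(z)\,\mathrm{d}z$ and $\int e^{-z}\,\mathrm{d}z$, and conclude $\Gamma(\cF_k) \le \zeta_k(n) + C/n$ directly without ever asserting $\zeta_k(\infty)$ equals the program value. This is more elementary and self-contained; in particular it sidesteps the paper's somewhat informal re-verification of Lemmas~\ref{lem:20-lb}--\ref{lem:20-ub} for the modified LP. What the paper's route buys is the stronger statement $\zeta(\infty) = \Gamma(\cF_k)$ (not needed here).

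Two details worth tightening. First, the absolute constant $C$ matters: for $\cF_1$ you need $\zeta_1(1000) + C/1000 \le 0.5810$ with $\zeta_1(1000) = 0.5803$, which forces $C \le 0.7$; a crude ``$C = 1$'' would overshoot to $0.5813$. The asymptotic error estimate $(e^{1/n}-1-1/n)\sum_{t}e^{-t/n} \to 1/2$ per unit $n^{-1}$ (or the paper's $\tau = 1-1/e$, obtained by using $x_t \le 1-1/e$ in the first Riemann sum) both suffice, but you should make the numerics explicit rather than leaving $C$ implicit. Second, you implicitly use $\zeta_1(1000) = \eta(1000) = 0.5803$, which is not obvious from the LP descriptions alone; the paper reports it as an empirical observation, and you should either cite that observation or state that you recomputed $\zeta_1(1000)$ directly.

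One small conceptual note on your ``telescoping'' intuition: the bound on the Riemann error $\sum_t x_t\bigl(e^{-(t-1)/n}-e^{-t/n}-\tfrac{1}{n}e^{-t/n}\bigr)$ is most cleanly obtained via $x_t \le \tau$ and the telescoping of the $e^{-z}$-increments, rather than via $\sum_t(x_t - x_{t-1}) = x_n - x_0 \le 1$. Monotonicity of $f$ is indeed essential, but mainly because it lets you use $f(z) \le x_t$ on each subinterval $[(t-1)/n,\,t/n]$; the uniform bound on $C$ then comes from $x_t \le \tau$ together with the geometric telescoping of the exponential weights.
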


\begin{proof}
Consider the first inequality. Let $\zeta(n)$ denote the optimal objective value of~\textbf{AUG-UB-LP}~(\ref{lp3:obj}). Following a similar proof to that of Lemma~\ref{lem:2-23-a} in Appendix~\ref{app:pro-2}, we find that $\zeta(\infty) := \lim_{n \to \infty} \zeta(n)$ is equal to the optimal value of the maximization Program defined in~\eqref{prog:2-25-2}.  The proofs of Lemmas~\ref{lem:20-lb} and~\ref{lem:20-ub}, along with that of Proposition~\ref{pro:2-18-a}, suggest that the same inequality~\eqref{ineq:2-25-2} continues to hold for~\textbf{AUG-UB-LP}~(\ref{lp3:obj}), which has fewer constraints than~\textbf{AUG-LP}~(\ref{lp:obj}). As a result,  
\begin{align}\label{ineq:2-26-1}
\Big| \zeta(n)-\zeta(\infty) \Big| \le \frac{\tau}{n},\quad \forall n\in \mathbb{N}_{\geq 1},
\end{align}
where $\tau \in [0,1]$ is any feasible upper bound that can be imposed on $x_n$ in~\textbf{AUG-UB-LP}~(\ref{lp3:obj}). In the context of~\textbf{AUG-UB-LP}~(\ref{lp3:obj}), we have $\tau = 1$ since no specific upper bound is imposed on $x_n$ other than the default condition that $f \in [0,1]$.  

Thus, $\zeta(\infty) \le \zeta(n) + 1/n$ for any $n \in \mathbb{N}_{\geq 1}$. From the results in Table~\ref{table:nume}, we have  
\begin{align*}
\zeta(\infty) \le \zeta(n=1000) + 1/1000 \le 0.5841.
\end{align*}
This establishes the first claim. Following the same analysis, we find that the auxiliary upper-bound LP proposed for Program~\eqref{prog:2-25-7} shares the same optimal value as \textbf{AUG-LP}~(\ref{lp:obj}). This confirms the second claim.
\end{proof}

\section{Numerical Results for \textbf{AUG-LP}~(\ref{lp:obj})~ and~\textbf{AUG-UB-LP}~(\ref{lp3:obj})}\label{sec:num}

\renewcommand{\arraystretch}{1.5}
\begin{table}[ht!]
\caption{Numerical results for \textbf{AUG-LP}~\eqref{lp:obj} and \textbf{AUG-UB-LP}~\eqref{lp3:obj}, where $\eta(n)$ and $\zeta(n)$ represent the optimal objective values of the former and latter LPs, respectively, each parameterized by a positive integer $n$. All fractional values are rounded \emph{down} to the fourth decimal place for $\eta(n)$ and $\underline{\eta}(\infty)$, and rounded \emph{up} to the fourth decimal place for $\overline{\eta}(\infty)$, $\zeta(n)$, and $\overline{\zeta}(\infty)$.}
\label{table:nume}
\begin{center}
\begin{tabular}{cc@{\hspace{0.4cm}}>{\columncolor{blue!20}}c@{\hspace{0.4cm}}>{\columncolor{yellow!20}}c@{\hspace{0.4cm}}c@{\hspace{0.4cm}}>{\columncolor{red!20}}cc } 
 \thickhline 
$n$ & $\eta(n)$ & $\underline{\eta}(\infty):=\eta(n)-\frac{1-1/e}{n}$ &$\overline{\eta}(\infty):=\eta(n)+\frac{1-1/e}{n}$ &   $\zeta(n)$ & $\overline{\zeta}(\infty):=\zeta(n)+\frac{1}{n}$ \\  [0.2 cm]
\thickhline
$10$ & $0.5713$ & $0.5080$ & $0.6346$ & $0.5736$ & $0.6736$  \\
$100$ & $0.5795$ & $0.5731$ & $0.5859$ & $0.5823$ & $0.5923$ \\
$500$ & $0.5802$ & $0.5789$ & $0.5815$ & $0.5830$ & $0.5850$  \\
$1000$ & $0.5803$ & $0.5796$ & $0.5810$  & $0.5831$ & $0.5841$ \\
 \thickhline
\end{tabular}
\end{center}
\end{table}

We solve \textbf{AUG-LP}~\eqref{lp:obj} and~\textbf{AUG-UB-LP}~(\ref{lp3:obj}) using MATLAB R2024b, 64-bit (Mac i64), and summarize the numerical results in Table~\ref{table:nume}.\footnote{All MATLAB code is available at the following link: \\ \href{https://drive.google.com/drive/folders/1ZuRxkxKS-7HyNTcOYBQq8b80ORtnO3Dc?usp=sharing}{https://drive.google.com/drive/folders/1ZuRxkxKS-7HyNTcOYBQq8b80ORtnO3Dc?usp=sharing}.}

\xhdr{Remark on the Results in Table~\ref{table:nume}}.  
We observe that the optimal value of \textbf{AUG-LP}~\eqref{lp:obj} remains unchanged after removing Constraint~\eqref{lp:cons-2} and replacing Constraint~\eqref{lp:cons-1} with $0 \leq x_t \leq x_{t+1} \leq 1 - 1/e$ for all $t \in (n-1)$. This implies that the optimal value of \textbf{AUG-LP}~\eqref{lp:obj} is identical to that of the auxiliary upper-bound LP proposed for Program~\eqref{prog:2-25-7}.

 %%%%
\section{Conclusions and Future Directions}\label{sec:con}

In this paper, we proposed an auxiliary-LP-based framework to approximate the optimal performance of randomized primal-dual methods, effectively addressing challenges related to both lower and upper bounding. Using the vertex-weighted online matching problem with stochastic rewards and the Stochastic Balance algorithm, introduced by Huang and Zhang~\cite{huang2020online}, as a representative example, we established a new lower bound of $0.5796$ on its competitive ratio. This result improves upon the previously best-known bound of $0.576$. Additionally, we derived an upper bound of $0.5810$ within a function space strictly more general than that considered by~\cite{huang2020online}, thus providing clearer insights into the remaining gap between known lower and upper bounds. Our framework stands out due to its general structure, offering greater flexibility and ease of adaptation compared to existing discretization-based LP methods, which typically rely on carefully designed strengthened or relaxed constraints and objectives.

Our work suggests several promising avenues for future research. First, one natural direction is to apply our framework to reanalyze the Ranking and Stochastic Balance algorithms under more general conditions, such as settings with non-uniform edge existence probabilities or using the stochastic benchmark (S-OPT), as explored recently by~\cite{huang2023online}. Second, exploring further applications of our auxiliary-LP-based techniques to other problems is another valuable direction. Specifically, applying our framework to solve factor-revealing LPs encountered in other contexts~\cite{mahdian2011online} could lead to further methodological improvements and deeper theoretical insights.

%Given the widespread applications of the online randomized primal-dual method, we anticipate that our approach will be useful in analyzing other online algorithms and refining factor-revealing LP techniques. Future directions include extending our framework to the same model but with general edge existence probabilities and applying our techniques to broader online algorithms ad

\newpage
\bibliographystyle{alpha} 
\bibliography{EC_21}

\newcommand{\etalchar}[1]{$^{#1}$}
\begin{thebibliography}{JMM{\etalchar{+}}03}

\bibitem[AS25]{albers2025online}
Susanne Albers and Sebastian Schubert.
\newblock Online b-matching with stochastic rewards.
\newblock In {\em International Conference on Current Trends in Theory and
  Practice of Computer Science}, pages 37--50. Springer, 2025.

\bibitem[BJN07]{buchbinder2007online}
Niv Buchbinder, Kamal Jain, and Joseph Naor.
\newblock Online primal-dual algorithms for maximizing ad-auctions revenue.
\newblock In {\em European Symposium on Algorithms}, pages 253--264. Springer,
  2007.

\bibitem[BN{\etalchar{+}}09]{buchbinder2009design}
Niv Buchbinder, Joseph~Seffi Naor, et~al.
\newblock The design of competitive online algorithms via a primal--dual
  approach.
\newblock {\em Foundations and Trends{\textregistered} in Theoretical Computer
  Science}, 3(2--3):93--263, 2009.

\bibitem[DJK13]{devanur2013randomized}
Nikhil~R Devanur, Kamal Jain, and Robert~D Kleinberg.
\newblock Randomized primal-dual analysis of ranking for online bipartite
  matching.
\newblock In {\em Proceedings of the twenty-fourth annual ACM-SIAM symposium on
  Discrete algorithms}, pages 101--107. SIAM, 2013.

\bibitem[DRSY25]{derakhshan2025}
Mahsa Derakhshan, Mohammad Roghani, Mohammad Saneian, and Tao Yu.
\newblock Improved approximation for ranking on general graphs, 2025.

\bibitem[GU23]{goyal2023online}
Vineet Goyal and Rajan Udwani.
\newblock Online matching with stochastic rewards: Optimal competitive ratio
  via path-based formulation.
\newblock {\em Operations Research}, 71(2):563--580, 2023.

\bibitem[HJS{\etalchar{+}}23]{huang2023online}
Zhiyi Huang, Hanrui Jiang, Aocheng Shen, Junkai Song, Zhiang Wu, and Qiankun
  Zhang.
\newblock Online matching with stochastic rewards: Advanced analyses using
  configuration linear programs.
\newblock In {\em International Conference on Web and Internet Economics},
  pages 384--401. Springer, 2023.

\bibitem[HTW24]{huang2024online}
Zhiyi Huang, Zhihao~Gavin Tang, and David Wajc.
\newblock Online matching: A brief survey.
\newblock {\em ACM SIGecom Exchanges}, 22(1):135--158, 2024.

\bibitem[HTWZ19]{huang2019online}
Zhiyi Huang, Zhihao~Gavin Tang, Xiaowei Wu, and Yuhao Zhang.
\newblock Online vertex-weighted bipartite matching: Beating $1-1/e$ with
  random arrivals.
\newblock {\em ACM Transactions on Algorithms (TALG)}, 15(3):1--15, 2019.

\bibitem[HTWZ20]{huang2020fully}
Zhiyi Huang, Zhihao~Gavin Tang, Xiaowei Wu, and Yuhao Zhang.
\newblock Fully online matching ii: Beating ranking and water-filling.
\newblock In {\em 2020 IEEE 61st Annual Symposium on Foundations of Computer
  Science (FOCS)}, pages 1380--1391. IEEE, 2020.

\bibitem[HZ24]{huang2020online}
Zhiyi Huang and Qiankun Zhang.
\newblock Online primal dual meets online matching with stochastic rewards:
  Configuration lp to the rescue.
\newblock {\em SIAM Journal on Computing}, 53(5):1217--1256, 2024.

\bibitem[JMM{\etalchar{+}}03]{jain2003greedy}
Kamal Jain, Mohammad Mahdian, Evangelos Markakis, Amin Saberi, and Vijay~V
  Vazirani.
\newblock Greedy facility location algorithms analyzed using dual fitting with
  factor-revealing lp.
\newblock {\em Journal of the ACM (JACM)}, 50(6):795--824, 2003.

\bibitem[JW21]{jin2021improved}
Billy Jin and David~P Williamson.
\newblock Improved analysis of ranking for online vertex-weighted bipartite
  matching in the random order model.
\newblock In {\em International Conference on Web and Internet Economics},
  pages 207--225. Springer, 2021.

\bibitem[Meh13]{mehta2012online}
Aranyak Mehta.
\newblock Online matching and ad allocation.
\newblock {\em Foundations and Trends in Theoretical Computer Science},
  8(4):265--368, 2013.

\bibitem[MP12]{mehta2013online}
Aranyak Mehta and Debmalya Panigrahi.
\newblock Online matching with stochastic rewards.
\newblock In {\em 2012 IEEE 53rd annual symposium on foundations of computer
  science}, pages 728--737. IEEE, 2012.

\bibitem[MSVV07]{mehta2007adwords}
Aranyak Mehta, Amin Saberi, Umesh Vazirani, and Vijay Vazirani.
\newblock Adwords and generalized online matching.
\newblock {\em Journal of the ACM (JACM)}, 54(5):22--es, 2007.

\bibitem[MY11]{mahdian2011online}
Mohammad Mahdian and Qiqi Yan.
\newblock Online bipartite matching with random arrivals: an approach based on
  strongly factor-revealing lps.
\newblock In {\em Proceedings of the forty-third annual ACM symposium on Theory
  of computing}, pages 597--606. ACM, 2011.

\bibitem[PT25]{peng2025revisiting}
Bo~Peng and Zhihao~Gavin Tang.
\newblock Revisiting ranking for online bipartite matching with random
  arrivals: the primal-dual analysis.
\newblock {\em arXiv preprint arXiv:2503.04196}, 2025.

\bibitem[ZSZ{\etalchar{+}}24]{zhang2024online}
Qiankun Zhang, Aocheng Shen, Boyu Zhang, Hanrui Jiang, and Bingqian Du.
\newblock Online matching with stochastic rewards: provable better bound via
  adversarial reinforcement learning.
\newblock In {\em Forty-first International Conference on Machine Learning},
  2024.

\end{thebibliography}
 
\clearpage
\appendix

\section{Proof of Proposition~\ref{pro:2-17-a}}\label{app:17-b}

Our proof consists of two parts: we first show that the optimal solution in the minimization program~\eqref{eqn:Lf-d} always occurs at $\psi=1$ for any given $\ell \ge 0$ and any given $\tpsi \in [0,1]$, as suggested in Lemma~\ref{lem:2-16-a}, and then further demonstrate that the optimal solution always occurs either at some $(\ell,\tpsi)$ with $0 \le \ell+\tpsi \le 1$ or at $\ell=\tpsi=1$.

\begin{lemma}\label{lem:2-16-a}
For any $f \in \cF_3$~\eqref{ass}, we claim that $\cL[f]$ in~\eqref{eqn:Lf-d} can be  simplified as follows:
\begin{align}
\cL[f]=\min_{\ell \geq 0, \tpsi \in [0,1]} \left\{\int_0^\ell \sfe^{-z}\cdot f(z)~\sd z+\int_\ell^{\ell+\tpsi}\sfe^{-z}~\sd z+ (1-\tpsi) \cdot \left(1-f(\ell+\tpsi)\right)\right\}.
\end{align}
\end{lemma}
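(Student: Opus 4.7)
The plan is to show that in the inner minimization~\eqref{eqn:Lf-d} defining $\cL[f]$, the variable $\psi$ can always be pushed to its upper endpoint $\psi=1$ for every fixed $(\ell, \tpsi)$; once this is established, the claim follows since the constraint $\tpsi \leq \psi$ collapses to $\tpsi \leq 1$. Write the numerator as
\begin{align*}
N(\ell,\psi,\tpsi) := \int_0^\ell e^{-z} f(z)\,\sd z + \int_\ell^{\ell+\tpsi} e^{-z}\,\sd z + (\psi-\tpsi)\bp{1-f(\ell+\tpsi)}.
\end{align*}
Since $N$ is affine in $\psi$, the quotient $N/\psi$ takes the form $A(\ell,\tpsi)/\psi + B(\ell,\tpsi)$, with $B := 1 - f(\ell+\tpsi) \geq 0$ and
\begin{align*}
A(\ell,\tpsi) := \int_0^\ell e^{-z} f(z)\,\sd z + \int_\ell^{\ell+\tpsi} e^{-z}\,\sd z - \tpsi\bp{1 - f(\ell+\tpsi)}.
\end{align*}
Hence $\psi \mapsto N(\ell,\psi,\tpsi)/\psi$ is non-increasing on $[\tpsi,1]$ if and only if $A(\ell,\tpsi) \geq 0$, and the entire lemma reduces to verifying
\begin{align*}
(\star) \qquad A(\ell,\tpsi) \geq 0 \qquad \forall\, f \in \cF_3,\ \ell \geq 0,\ \tpsi \in [0,1].
\end{align*}

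To establish $(\star)$, I would split according to whether $\ell + \tpsi \leq 1$ or $\ell + \tpsi > 1$, invoking the two defining constraints of $\cF_3$ in each regime. In the first regime, $1 - f(\ell+\tpsi) \leq e^{-(\ell+\tpsi)}$; dropping the nonnegative first integral and evaluating $\int_\ell^{\ell+\tpsi} e^{-z}\,\sd z = e^{-(\ell+\tpsi)}(e^{\tpsi}-1)$ reduces $(\star)$ to the elementary fact $e^{\tpsi} - 1 \geq \tpsi$.

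In the second regime, $f(\ell+\tpsi) = 1 - 1/e$, so $1 - f(\ell+\tpsi) = 1/e$ and a direct differentiation yields $\partial A/\partial \tpsi = e^{-(\ell+\tpsi)} - 1/e < 0$. Thus $A(\ell, \cdot)$ is strictly decreasing in $\tpsi$ throughout this regime, and it suffices to verify $A(\ell, 1) \geq 0$. Here I would use the pointwise lower bound $f(z) \geq 1 - e^{-z}$ on $[0, \min(\ell,1)]$ together with $f(z) = 1 - 1/e$ on $[\min(\ell,1), \ell]$; the pieces of $\int_0^\ell e^{-z} f(z)\,\sd z$ beyond $z = 1$ telescope cleanly with $\int_\ell^{\ell+1} e^{-z}\,\sd z$, so the resulting lower bound on $A(\ell,1)$ depends only on $\min(\ell,1)$. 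Substituting $a := e^{-\min(\ell,1)} \in [1/e, 1]$, the inequality $A(\ell,1) \geq 0$ becomes the quadratic condition $e a^2 - 2a + (e - 2) \geq 0$; since its discriminant $4(1 + 2e - e^2)$ is negative and the leading coefficient $e$ is positive, the inequality holds for every real $a$, and the lemma follows.

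The main obstacle I anticipate is keeping the case analysis in the second regime tidy, since the usable pointwise lower bound on $f$ changes at $z = 1$ and one must confirm that the sub-cases $\ell \leq 1$ and $\ell > 1$ both collapse to the same one-variable quadratic. The telescoping observation above is the key simplification that makes this merger transparent and lets the final step reduce to a single discriminant check.
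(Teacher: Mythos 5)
Your proposal is correct and shares the paper's core reduction: rewrite the quotient in the form $A(\ell,\tpsi)/\psi + B(\ell,\tpsi)$ with $B \ge 0$, observe that pushing $\psi$ to $1$ is justified iff $A(\ell,\tpsi) \ge 0$, and then verify that nonnegativity by case analysis over $(\ell,\tpsi)$. The difference is in how the two halves are finished. You skip the paper's auxiliary monotone-quotient proposition (Proposition~\ref{pro:2-16-b}) by directly exploiting the fact that the numerator is affine in $\psi$, which is a cleaner reduction, and you organize the case analysis as two regimes determined by whether $\ell+\tpsi$ exceeds $1$, whereas the paper uses three (depending separately on whether $\ell \ge 1$). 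In the second regime the arguments diverge more substantively: the paper performs a chain of monotonicity reductions (first in $\ell$, then in $\tpsi$, or vice versa in its Case~3) to collapse everything to the single point $(\ell,\tpsi)=(1,1)$ and then evaluates; you instead reduce only to $\tpsi=1$ via $\partial A/\partial\tpsi < 0$, observe that the lower bound on $A(\ell,1)$ telescopes cleanly so the sub-cases $\ell\le 1$ and $\ell>1$ both produce the single quadratic $ea^2 - 2a + (e-2)$ in $a = e^{-\min(\ell,1)}$, and dispatch it with a one-line discriminant check ($4(1+2e-e^2) < 0$). This unification is genuinely neater than the paper's version and removes one layer of monotonicity argument; the trade-off is that the paper's monotonicity chain is arguably more mechanical and harder to get wrong when one is not sure a clean merger exists.

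One cosmetic caution for a write-up: in the first regime your inequality $e^{\tpsi}-1\ge\tpsi$ is exactly the paper's observation that $e^{-z} \ge e^{-(\ell+\tpsi)}$ on $[\ell,\ell+\tpsi]$, just after evaluating the integral, so you may as well say so. Also make sure to state explicitly that $a\in[1/e,1]$ only matters for locating the minimum, not for the validity of the inequality, since your discriminant argument shows the quadratic is positive for all real $a$.
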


We leave the proof of Lemma~\ref{lem:2-16-a} to Appendix~\ref{app:17-a}. Now, we prove Proposition~\ref{pro:2-17-a} based on the result in Lemma~\ref{lem:2-16-a}.

\begin{proof}[\textbf{Proof of Proposition~\ref{pro:2-17-a}}.]
We begin with the expression in Lemma~\ref{lem:2-16-a}.  
Define
\begin{align}\label{eqn:18-a}
g(\ell,\tpsi)
:= \int_0^\ell e^{-z} f(z)\, \sd z
   \;+\; \int_\ell^{\ell+\tpsi} e^{-z}\, \sd z
   \;+\; (1-\tpsi)\,\bigl(1 - f(\ell+\tpsi)\bigr).
\end{align}
Then
\[
\cL[f]
= \min_{\ell \ge 0,\;\tpsi \in [0,1]} g(\ell,\tpsi)
= \min\biggl\{
     \underbrace{\min_{\ell,\,\tpsi\in[0,1],\,\ell+\tpsi\le 1} g(\ell,\tpsi)}_{=W_2},
     \;\;
     \underbrace{\min_{\ell\ge 0,\,\tpsi\in[0,1],\,\ell+\tpsi\ge 1} g(\ell,\tpsi)}_{\text{should equal }W_1}
   \biggr\}.
\]
Thus, to prove Proposition~\ref{pro:2-17-a}, it suffices to show that
\[
W_1
= \min_{\ell \ge 0,\,\tpsi \in [0,1],\,\ell+\tpsi \ge 1} g(\ell,\tpsi).
\]

\medskip
\noindent\textbf{Step 1: Optimize over $\tpsi$ when $\ell+\tpsi \ge 1$.}
When $\ell+\tpsi \ge 1$, we have $f(\ell+\tpsi)=1-1/e$.  Hence
\begin{align*}
g(\ell,\tpsi)
&= \int_0^\ell e^{-z} f(z)\, \sd z
   + \int_\ell^{\ell+\tpsi} e^{-z}\, \sd z
   + (1-\tpsi)\cdot 1/e,\\[4pt]
\frac{\partial g(\ell,\tpsi)}{\partial \tpsi}
&= e^{-(\ell+\tpsi)} - 1/e \;\le\; 0.
\end{align*}
Thus, for any fixed $\ell$ with $\ell+\tpsi\ge 1$, the function 
$\tpsi \mapsto g(\ell,\tpsi)$ is non-increasing over the feasible interval
$\{\tpsi \in[0,1]: \ell+\tpsi\ge 1\}$.  
The smallest feasible value of $\tpsi$ is $\tpsi=1$, so
\[
g(\ell,\tpsi) \;\ge\; g(\ell,1).
\]

\medskip
\noindent\textbf{Step 2: Optimize $g(\ell,1)$ over $\ell \ge 0$.}
Observe that
\[
g(\ell,1)
= \int_0^\ell e^{-z} f(z)\, \sd z
  + e^{-\ell}\,(1-e^{-1}).
\]
Differentiating with respect to $\ell$ gives
\[
\frac{\partial g(\ell,1)}{\partial \ell}
= e^{-\ell}\,\bigl(f(\ell) - (1 - 1/e)\bigr).
\]
For all $f \in \cF_3$, we have $f(\ell) \le f(1) = 1 - 1/e$, and equality holds for all
$\ell \ge 1$.  
Thus, $g(\ell,1)$ is non-increasing on $[0,1]$, and   $g(\ell,1)$ is constant for all $\ell \ge 1$.

Therefore the minimum over $\ell \ge 0$ is achieved at $\ell=1$:
\[
\min_{\ell \ge 0} g(\ell,1) = g(1,1).
\]

\medskip
\noindent\textbf{Step 3: Conclude the value of the second branch.}
Combining Steps~1 and~2,
\[
\min_{\ell \ge 0,\, \tpsi \in [0,1],\, \ell+\tpsi \ge 1} g(\ell,\tpsi)
= \min_{\ell \ge 0} g(\ell,1)
= g(1,1)
= W_1.
\]
This completes the proof.
\end{proof}

%Assuming $\ell+\tpsi \ge 1$, we claim that for any given $\ell$, $g(\ell,\tpsi)$~\eqref{eqn:18-a} is concave and non-increasing over $\tpsi$ with $\max\sbp{0, 1-\ell} \le \tpsi \le 1$.\footnote{Observe that the fact that $g(\ell,\tpsi)$ is concave over $\tpsi \in [0,1]$ holds for any given $\ell \ge 1$ and for any constant value of $\gam$, i.e., we do not need the extra condition $\gam=1/e$ here.}

%\medskip
%\textbf{Case 2}.  $\tpsi, \ell  \in [0,1]$ and $\ell+\tpsi \ge 1$.

\subsection{Proof of Lemma~\ref{lem:2-16-a}}\label{app:17-a}
Before proving Lemma~\ref{lem:2-16-a}, we need to establish the following proposition.  
\begin{proposition}\label{pro:2-16-b}  
Let $g(\psi) \ge 0$ be a non-decreasing function over $\psi \in [0,1]$ with a continuous first derivative. Then $g(\psi)/\psi$ is non-increasing if and only if $\psi \cdot \dot{g}(\psi) \le g(\psi)$ for all $\psi \in [0,1]$.  
\end{proposition}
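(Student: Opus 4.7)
The plan is to reduce the claim to a direct application of the quotient rule, since the equivalence is essentially a one-line calculus fact once the right derivative is computed. I will define the auxiliary function $h(\psi) := g(\psi)/\psi$ on $(0,1]$ and use the fact that $h$ is non-increasing on $(0,1]$ if and only if its derivative is non-positive everywhere on that interval (this uses that $g$ has a continuous first derivative, so $h$ is continuously differentiable on $(0,1]$).

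First I would compute
\[
\dot{h}(\psi) = \frac{\dot{g}(\psi)\cdot \psi - g(\psi)}{\psi^2}, \qquad \psi \in (0,1].
\]
Since $\psi^2 > 0$, the sign of $\dot{h}(\psi)$ is the sign of $\psi\cdot\dot{g}(\psi) - g(\psi)$. Hence $\dot{h}(\psi) \le 0$ on $(0,1]$ is equivalent to $\psi\cdot\dot{g}(\psi) \le g(\psi)$ on $(0,1]$. The forward direction then follows from the standard fact that a continuously differentiable function with non-positive derivative is non-increasing; the reverse direction follows by integrating the inequality $\dot{h}(\psi)\le 0$ over any sub-interval of $(0,1]$.

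The only subtlety is the behavior at the endpoint $\psi = 0$, where $h$ is not directly defined. I would handle this by noting that since $g(0) \ge 0$ and $g$ is non-decreasing, the quotient $g(\psi)/\psi$ either blows up (if $g(0) > 0$) or has a well-defined limit as $\psi \to 0^+$ equal to $\dot{g}(0)$ (if $g(0)=0$), so monotonicity on $(0,1]$ already encodes the full content of the statement at $\psi=0$. In the ``$\psi\cdot\dot{g}(\psi)\le g(\psi)$ at $\psi=0$'' direction the inequality degenerates to $0 \le g(0)$, which holds by hypothesis, so no additional content is lost. Overall I expect the main ``obstacle'' to be merely a careful reading of what is meant by ``non-increasing on $[0,1]$'' when $h$ is defined only on $(0,1]$; once that is clarified, the proof is a two-line quotient-rule computation.
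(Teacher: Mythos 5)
Your proof is correct and follows essentially the same route as the paper: both compute $\frac{d}{d\psi}(g(\psi)/\psi)$ via the quotient rule and observe that its sign is governed by $\psi\dot g(\psi)-g(\psi)$. Your additional care about the endpoint $\psi=0$ is a reasonable refinement, but the core argument is identical to the paper's one-line derivation.
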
  

\begin{proof}  
Observe that $g(\psi)/\psi$ is non-increasing if and only if $\frac{d}{d\psi}\left(\frac{g(\psi)}{\psi}\right) \le 0$.  
\begin{align*}  
\frac{d}{d\psi}\left(\frac{g(\psi)}{\psi}\right) \le 0 &\Leftrightarrow \frac{\dot{g}(\psi) \cdot \psi - g(\psi)}{\psi^2} \le 0 \Leftrightarrow \dot{g}(\psi) \cdot \psi \le g(\psi).  
\end{align*}  
\end{proof}  

 \begin{proof}[\textbf{Proof of Lemma~\ref{lem:2-16-a}}.]
Consider a given $\ell \ge 0$ and a given $\tpsi \in [0,1]$, and view 
\[
g(\psi):=\int_0^\ell \sfe^{-z}\cdot f(z)~\sd z+\int_\ell^{\ell+\tpsi}\sfe^{-z}~\sd z+ (\psi-\tpsi) \cdot \sbp{1-f(\ell+\tpsi)}
\]
as a function of $\psi \in [\tpsi,1]$. By Proposition~\ref{pro:2-16-b}, we have that $g(\psi)/\psi$ is non-increasing over $\psi \in [\tpsi,1]$ if and only if $\psi \cdot \dot{g}(\psi) \le g(\psi)$ for all $\psi \in [\tpsi,1]$. We verify the latter condition as follows. Note that
\[
\dot{g}(\psi)=1-f(\ell+\tpsi).
\]
As a result, 
\begin{align}
\psi \cdot \dot{g}(\psi) \le g(\psi) &\Leftrightarrow \psi \cdot \sbp{1-f(\ell+\tpsi)} \le \int_0^\ell \sfe^{-z}\cdot f(z)~\sd z+\int_\ell^{\ell+\tpsi}\sfe^{-z}~\sd z+ (\psi-\tpsi) \cdot \sbp{1-f(\ell+\tpsi)} \nonumber\\
&\Leftrightarrow \int_0^\ell \sfe^{-z}\cdot f(z)~\sd z+\int_\ell^{\ell+\tpsi}\sfe^{-z}~\sd z-\tpsi \cdot \sbp{1-f(\ell+\tpsi)} \ge 0. \label{eqn:2-16-a}
\end{align}
Now, we verify that Inequality~\eqref{eqn:2-16-a} holds for any $\ell \ge 0$ and any given $\tpsi \in [0,1]$. Consider the following cases.

\textbf{Case 1}. $\ell+\tpsi \in [0,1]$. By Assumption~\eqref{ass}:
\begin{align*}
& \int_0^\ell \sfe^{-z}\cdot f(z)~\sd z+\int_\ell^{\ell+\tpsi}\sfe^{-z}~\sd z-\tpsi \cdot \sbp{1-f(\ell+\tpsi)} \\
&\ge  \int_0^\ell \sfe^{-z}\cdot f(z)~\sd z+\int_\ell^{\ell+\tpsi}\sfe^{-z}~\sd z-\tpsi \cdot e^{-(\ell+\tpsi)} \quad\text{\sbp{since $1-f(\ell+\tpsi) \le e^{-(\ell+\tpsi)}$}}\\
&=  \int_0^\ell \sfe^{-z}\cdot f(z)~\sd z+\int_\ell^{\ell+\tpsi}\bp{\sfe^{-z}-e^{-(\ell+\tpsi)}}~\sd z \ge 0.
\end{align*}

\medskip

\begingroup
\allowdisplaybreaks
\textbf{Case 2}. $\ell\ge 1$. 
\begin{align}
& \int_0^\ell \sfe^{-z}\cdot f(z)~\sd z+\int_\ell^{\ell+\tpsi}\sfe^{-z}~\sd z-\tpsi \cdot \sbp{1-f(\ell+\tpsi)} \nonumber\\
&=\int_0^\ell \sfe^{-z}\cdot f(z)~\sd z+\int_\ell^{\ell+\tpsi}\sfe^{-z}~\sd z-\tpsi \cdot e^{-1}\nonumber \\
&\ge \int_0^1 \sfe^{-z}\cdot f(z)~\sd z+\int_1^{1+\tpsi} \sfe^{-z}~\sd z-\tpsi \cdot e^{-1}=\int_0^1 \sfe^{-z}\cdot f(z)~\sd z+e^{-1}-e^{-(1+\tpsi)}-\tpsi \cdot e^{-1} \label{ineq:2-16-c}\\
&\ge \int_0^1 \sfe^{-z}\cdot (1-e^{-z})~\sd z+e^{-1}-e^{-(1+\tpsi)}-\tpsi \cdot e^{-1}\nonumber\\
&=\frac{1}{2}\bp{1+e^{-2}}-e^{-(1+\tpsi)}-\tpsi \cdot e^{-1} \nonumber\\
&=\frac{1}{2}\bp{1+e^{-2}}-\bp{e^{-(1+\tpsi)}+\tpsi \cdot e^{-1}} \ge \frac{1}{2}\bp{1+e^{-2}}-\sbp{e^{-2}+e^{-1}}>0.       \nonumber
\end{align}
\endgroup
Inequality~\eqref{ineq:2-16-c} follows from the fact that the function 
\[
h(\ell):=\int_0^\ell \sfe^{-z}\cdot f(z)~\sd z+\int_\ell^{\ell+\tpsi}\sfe^{-z}~\sd z-\tpsi \cdot e^{-1}
\]
is non-decreasing over $\ell \ge 1$ for any given $\tpsi \in [0,1]$ since it has a non-negative first derivative over the range, as demonstrated as follows:
\[
\dot{h}(\ell)=e^{-\ell} \cdot f(\ell)+e^{-(\ell+\tpsi)}-e^{-\ell}=e^{-\ell} \cdot \sbp{1-1/e}+e^{-(\ell+\tpsi)}-e^{-\ell}=-e^{-(\ell+1)}+e^{-(\ell+\tpsi)} \ge 0.
\]

\medskip
\textbf{Case 3}. $\ell \in [0,1], \tpsi \in [0,1], \ell+\tpsi \ge 1$. 

\begin{align}
& \int_0^\ell \sfe^{-z}\cdot f(z)~\sd z+\int_\ell^{\ell+\tpsi}\sfe^{-z}~\sd z-\tpsi \cdot \sbp{1-f(\ell+\tpsi)}    \nonumber\\
&= \int_0^\ell \sfe^{-z}\cdot f(z)~\sd z+\int_\ell^{\ell+\tpsi}\sfe^{-z}~\sd z-\tpsi \cdot e^{-1}    \nonumber \\
&\ge \int_0^\ell \sfe^{-z}\cdot \sbp{1-e^{-z}}~\sd z+\int_\ell^{\ell+\tpsi}\sfe^{-z}~\sd z-\tpsi \cdot e^{-1}   \nonumber \\
& \ge  \int_0^\ell \sfe^{-z}\cdot \sbp{1-e^{-z}}~\sd z+\int_\ell^{\ell+1}\sfe^{-z}~\sd z- e^{-1} \label{ineq:2-16-1}\\
&\ge  \int_0^1 \sfe^{-z}\cdot \sbp{1-e^{-z}}~\sd z+\int_1^{2}\sfe^{-z}~\sd z- e^{-1}\label{ineq:2-16-2}\\
&=\frac{1}{2}\bp{1-2 e^{-1}-e^{-2}}>0.   \nonumber
\end{align}

Inequality~\eqref{ineq:2-16-1} follows from the fact that the function 
\[
h(\tpsi):=\int_0^\ell \sfe^{-z}\cdot \sbp{1-e^{-z}}~\sd z+\int_\ell^{\ell+\tpsi}\sfe^{-z}~\sd z-\tpsi \cdot e^{-1} 
\]
is non-increasing over $\tpsi \in [1-\ell, 1]$ for any given $\ell \in [0,1]$ since it has a non-positive first derivative over the range, as demonstrated as follows:
\[
\dot{h}(\tpsi)=e^{-(\ell+\tpsi)}-e^{-1} \leq 0.
\]

Inequality~\eqref{ineq:2-16-2} follows from the fact that the function 
\[
\tilde{h}(\ell):= \int_0^\ell \sfe^{-z}\cdot \sbp{1-e^{-z}}~\sd z+\int_\ell^{\ell+1}\sfe^{-z}~\sd z- e^{-1}
\]
is non-increasing over $\ell\in [0, 1]$ since it has a non-positive first derivative over the range, as shown below:
\[
\frac{\sd \tilde{h}}{\sd \ell}=e^{-\ell}-e^{-2\ell}+e^{-(\ell+1)}-e^{-\ell}=e^{-(\ell+1)}-e^{-2\ell} \leq 0.
\]
\end{proof}

%%%%
\section{An Analytical Solution of Optimally Solving Program~(\ref{eqn:Lf}) with $\cF=\cF_4$}\label{app:analy}

Recall that when $\cF=\cF_3$, Program~(\ref{eqn:Lf}) simplifies as follows:

\begin{align}\tag{\ref{eqn:max_f}}
\max_{f \in \cF_3} \bP{\cL[f]=\min \bp{W_1, W_2} } 
\end{align}
where 
\begin{align*}
W_1&=\int_0^1 e^{-z} \cdot f(z)~\sd z+e^{-1}(1-e^{-1}), \tag{\ref{int:23-a}}\\
W_2 &=\min_{\ell, \tpsi \in [0,1], \ell+\tpsi \le 1} \bP{ \int_0^\ell \sfe^{-z}\cdot f(z)~\sd z+\int_\ell^{\ell+\tpsi}\sfe^{-z}~\sd z+ (1-\tpsi) \cdot \bp{1-f(\ell+\tpsi)}}.\tag{\ref{int:23-b}}
\end{align*}

We emphasize that conditions in $\cF_3$~\eqref{ass} are still not strong enough to further simplify $W_2$. This is why finding an optimal analytical solution remains challenging, even for the simplified version of Program~\eqref{eqn:max_f}.  

Focusing on the domain of $W_2$, which is $\ell, \tpsi \in [0,1]$ with $\ell+\tpsi \leq 1$, let  
\begin{align*}
g(\ell,\tpsi) &:= \int_0^\ell \sfe^{-z} \cdot f(z)~\sd z + \int_\ell^{\ell+\tpsi} \sfe^{-z}~\sd z + (1-\tpsi) \cdot \sbp{1 - f(\ell+\tpsi)}.
\end{align*}
We find that  
\begin{align*}
\frac{\partial g(\ell,\tpsi)}{\partial \tpsi} &= \sfe^{-(\ell+\tpsi)} + (1 - \tpsi) \bp{-\df(\ell+\tpsi)} - \bp{1 - f(\ell+\tpsi)}\\
&= \sfe^{-(\ell+\tpsi)} - \bp{1 - f(\ell+\tpsi)} - (1 - \tpsi) \cdot \df(\ell+\tpsi).
\end{align*}

Suppose we require that $g(\ell,\tpsi)$ is non-decreasing over $\tpsi \in [0,1-\ell]$ for any given $\ell \in [0,1]$. This condition can be ensured by that  
\begin{align*}
 &\quad \frac{\partial g(\ell,\tpsi)}{\partial \tpsi} \ge 0 \quad \forall \tpsi \in [0,1-\ell]  \\
\Leftrightarrow  &\quad  \sfe^{-(\ell+\tpsi)} - \bp{1-f(\ell+\tpsi)} \ge \bp{1-\tpsi} \cdot \df(\ell+\tpsi), \quad \forall \tpsi \in [0,1-\ell], \\
 \Leftarrow &\quad \sfe^{-z} - (1 - f(z)) \ge \df(z), \quad \forall z \in [\ell, 1]. \quad \text{\bp{Observe that $\df \geq 0$ if it exists.}}
\end{align*}

Formally, suppose we consider the following function space:
\smallskip
\refstepcounter{tbox}
\begin{tcolorbox}
\begin{align}\tag{\ref{ass-2}}
\cF_4= \mathcal{C}_{\uparrow}^{1 - 1/e~\forall z \geq 1}([0,\infty), [0,1])  \cap \{ f \mid \sfe^{-z} - (1 - f(z)) \ge \df(z) \ge 0,  \forall z \in [0,1]\}
\end{align}
\end{tcolorbox}

%\textbf{ASMP-2}: \quad  \sfe^{-z} - (1 - f(z)) \ge \df(z) \ge 0,  \forall z \in [0,1], \quad f(z)=1-1/e, \forall z \ge 1.

We claim that $\cF_4$~\eqref{ass-2} is \emph{strictly smaller} than $\cF_3$~\eqref{ass}. For each $f \in \cF_4$,  we have that for any $\ell \in [0,1]$ and any $\tpsi \in [0,1-\ell]$,  
\begin{align*}
 \frac{\partial g(\ell,\tpsi)}{\partial \tpsi} \ge 0
\Rightarrow  \quad  g(\ell,\tpsi)  \ge g(\ell,0) = \int_0^\ell \sfe^{-z} \cdot f(z)~\sd z + 1 - f(\ell).
\end{align*}

For each $f \in \cF_4$, we can further obtain that  
\begin{align*}
 \frac{\sd g(\ell,0)}{\sd \ell} = \sfe^{-\ell} \cdot f(\ell) - \df (\ell) 
 \ge \sfe^{-\ell} \cdot f(\ell) - \bp{\sfe^{-\ell} - (1 - f(\ell))}
 = \bp{1 - f(\ell)} \cdot \bp{1 - \sfe^{-\ell}} \ge 0.
\end{align*}

As a result, for any $\ell \in [0,1]$ and any $\tpsi \in [0,1-\ell]$, we have  
\[
g(\ell,\tpsi) \ge g(\ell,0) \ge g(0,0) = 1 - f(0).
\]

Wrapping up all analyses above, we claim Program~\eqref{eqn:max_f} with $\cF_3$ being replaced by $\cF_4$ can be further simplified as
\begin{align}\label{eqn:max_ff}
\max_{f \in \cF_4} & \quad \bP{\cL[f]=\min \bP{W_1=\int_0^1 e^{-z} \cdot f(z)~\sd z+e^{-1}(1-e^{-1}), \quad W_2=1-f(0)}}. 
%& \quad  \sfe^{-z} - (1 - f(z)) \ge \df(z) \ge 0,  \forall z \in [0,1]; \quad f(z)=1-1/e, \forall z \ge 1. \nonumber 
\end{align}
We can solve an analytical optimal solution for the above Program~\eqref{eqn:max_ff}, which states that
\begin{align}\tag{\ref{eqn:2-28-a}}
f(z)= \begin{cases} 
1-1/e, & \text{if } z \geq 1, \\
1-\frac{e^{-z}+e^{z-2}}{2}, & \text{if } z  \in [0,1].
\end{cases} 
\end{align}
This suggests that Program~\eqref{eqn:max_ff} has an optimal value of $(1+\sfe^{-2})/2\approx 0.5676$, which recovers the result in~\cite{mehta2013online}.\footnote{We observe that our analytical optimal solution~\eqref{eqn:2-28-a} for Program~\eqref{eqn:max_ff} coincidentally matches the solution proposed in~\cite{huang2020online}, but for a different problem—stochastic matching with \emph{unequal} vanishing probabilities. Moreover, we derive the same solution through a different analytical approach, specifically by lower-bounding the performance of a distinct randomized primal-dual framework.}

%different problem—stochastic matching with \emph{unequal} vanishing probabilities
%% a new section
\section{Proof of Lemma~\ref{lem:20-lb}}\label{app:20-lb}

%We prove a slightly stronger claim as
%%% proof of lower bound of \eta(\infty)
\begin{proof}%[\textbf{Proof of Lemma~\ref{lem:20-lb}}.] 
Let $\LP(n)$ and $\LP(2n)$ denote  \LP~\eqref{lp:obj} parameterized by $n$ and $2n$, respectively. 
Based on the feasibility of $(y,\x)$ to  $\LP(n)$, we verify the feasibility of the solution $(\ty, \tbx)$ to  $\LP(2n)$ as follows.

\medskip
\textbf{For Constraints~\eqref{lp:cons-1} and~\eqref{lp:cons-2}}: By definition~\eqref{def:1}, we find that $\tbx=(\tx)$ is clearly non-decreasing with $\tx_0=x_0 \ge 1-e^{-0/n}=0$ and $\tx_{2n}=x_n=1-1/e$. The only remaining issue is to show that $\tx_t \ge 1-e^{-t/(2n)}$ for any $1 \le t \le 2n-1$. Observe that  
\begin{align*}
\tx_t &= x_{t/2} \geq 1-e^{-\frac{t}{2n}}, && \text{\sbp{given $t$ is even}}, \\ 
\tx_t &= x_{(t+1)/2} \geq 1-e^{-\frac{t+1}{2n}} > 1-e^{-\frac{t}{2n}}, && \text{\sbp{given $t$ is odd}}, 
\end{align*}
which establishes the claim that $\tx_t \ge 1-e^{-\frac{t}{2n}}$ for all $1 \le t \le 2n-1$.

%%for the second constraints
\medskip
\textbf{For Constraints~\eqref{lp:cons-3}}: Let $W_1(2n)$ and $W_1(n)$ represent the value of the right-hand side of Constraint~\eqref{lp:cons-3} in $\LP(2n)$ under the solution $\tbx$ and in $\LP(n)$ under $\x$, respectively. Observe that $W_1(n) \geq y$ due to the feasibility of $(y, \x)$ in $\LP(n)$. We find that
%\footnote{For better verification of the correctness of our technical analysis, we highlight some terms in our analysis.}
\begingroup
\allowdisplaybreaks
\begin{align*}
W_1(2n) &= \frac{1}{2n}\sum_{t=1}^{2n}\tx_t \cdot \sfe^{-\frac{t}{2n}} +e^{-1} (1-1/e)\\
& = \frac{1}{2n}\sum_{k=1}^{n} x_{k} \cdot  \bp{\sfe^{-\frac{2k-1}{2n}}+ \sfe^{-\frac{2k}{2n}}} +e^{-1} (1-1/e)\\
& \geq  \frac{1}{2n} \sum_{k=1}^{n}x_k \cdot {2 \cdot \sfe^{-\frac{2k}{2n}}}+e^{-1} (1-1/e)
\\
&=\frac{1}{n} \sum_{k=1}^{n}x_k { \cdot \sfe^{-\frac{k}{n}}}+e^{-1} (1-1/e)=W_1(n) \ge y >\ty.
\end{align*}  
\endgroup

%% the last constraint
\medskip
\textbf{For Constraints~\eqref{lp:cons-4}}: For each given $i \in (2n)$ and $j \in (2n-i)$, let $W_2(i,j,2n)$ represent the value of the right-hand side of Constraint~\eqref{lp:cons-4} associated with $(i,j)$ in $\LP(2n)$ under the solution $\tbx$. Similarly, for each given $\hi \in (n)$ and $\hj \in (n-\hi)$, let $W_2(\hi,\hj, n)$ represent the value of the right-hand side of Constraint~\eqref{lp:cons-4} associated with $(\hi,\hj)$ in $\LP(n)$ under the solution $\x$. Note that $W_2(\hi, \hj,n) \geq y$ due to the feasibility of $(y, \x)$ in $\LP(n)$ for any $\hi \in (n)$ and $\hj \in (n-\hi)$.  We split our discussion into the following cases.

\begin{enumerate}
\item[\textbf{Case 1}:] $i \ge 1$, $j \ge 1$, and both are even numbers. In this case, we find that
\begingroup
\allowdisplaybreaks
\begin{align*}
W_2(i,j,2n) &= \frac{1}{2n}\sum_{t=1}^{i}\tx_t \cdot \sfe^{-\frac{t}{2n}} + \frac{1}{2n}\sum_{t=i+1}^{i+j} \sfe^{-\frac{t}{2n}}+\bp{1-\frac{j}{2n}} \cdot \bp{1-\tx_{i+j}}\\
&= \frac{1}{2n} \sum_{k=1}^{i/2}x_{k} \cdot  \bp{\sfe^{-\frac{2k-1}{2n}}+ \sfe^{-\frac{2k}{2n}}}+\frac{1}{2n} \sum_{k=i/2+1}^{i/2+j/2} \bp{\sfe^{-\frac{k-1/2}{n}}+\sfe^{-\frac{k}{n}}}+\bp{1-\frac{j/2}{n}} \cdot \bp{1-x_{(i+j)/2}} \\
&\ge  \frac{1}{2n} \sum_{k=1}^{i/2}x_k \cdot {2 \cdot \sfe^{-\frac{2k}{2n}}}+\frac{1}{2n} \sum_{k=i/2+1}^{i/2+j/2}  {2 \cdot \sfe^{-\frac{k}{n}}}+\bp{1-\frac{j/2}{n}} \cdot \sbp{1-x_{(i+j)/2}} \\
& \quad \text{\bp{since $\sfe^{-\frac{2k-1}{2n}}+ \sfe^{-\frac{2k}{2n}} \ge 2 \cdot \sfe^{-\frac{2k}{2n}}$, and $\sfe^{-\frac{k-1/2}{n}}+\sfe^{-\frac{k}{n}} \ge 2 \cdot \sfe^{-\frac{k}{n}}$}}\\
&= \frac{1}{n} \sum_{k=1}^{i/2}x_k \cdot  \sfe^{-\frac{k}{n}}+\frac{1}{n} \sum_{k=i/2+1}^{i/2+j/2}  \sfe^{-\frac{k}{n}}+\bp{1-\frac{j/2}{n}} \cdot \bp{1-x_{(i+j)/2}}\\
&=W_2(i/2, j/2, n) \ge y >\ty.
\end{align*}  
\endgroup

\emph{Note that, for all the remaining cases, we will recursively invoke the result above}.

\medskip
%%the second case
\item[\textbf{Case 2}:]  $i \ge 1$, $j \ge 1$, $i$ is odd while $j$ is even. In this case, we find that
\begingroup
\allowdisplaybreaks
\begin{align*}
W_2(i,j,2n)&= \frac{1}{2n}\sum_{t=1}^{i}\tx_t \cdot \sfe^{-\frac{t}{2n}} + \frac{1}{2n}\sum_{t=i+1}^{i+j} \sfe^{-\frac{t}{2n}}+\bp{1-\frac{j}{2n}} \cdot \bp{1-\tx_{i+j}} \\
&=\frac{1}{2n}\sum_{t=1}^{i+1}\tx_t \cdot \sfe^{-\frac{t}{2n}}-\frac{1}{2n} \cdot \tx_{i+1}\cdot  \sfe^{-\frac{i+1}{2n}}+\frac{1}{2n}\sum_{t=(i+1)+1}^{(i+1)+j} \sfe^{-\frac{t}{2n}}+\frac{1}{2n} \cdot \bp{\sfe^{-\frac{i+1}{2n}}-\sfe^{-\frac{i+j+1}{2n}}}\\
&+ {\bp{1-\frac{j}{2n}} \cdot \bp{1-\tx_{i+j+1}}}\\
&=W_2(i+1,j,2n)-\frac{1}{2n} \cdot \tx_{i+1}\cdot  \sfe^{-\frac{i+1}{2n}}+\frac{1}{2n} \cdot \bp{\sfe^{-\frac{i+1}{2n}}-\sfe^{-\frac{i+j+1}{2n}}}\\
&\ge y-\frac{1}{2n} \cdot (1-1/e)=\ty.
\end{align*}
\endgroup
Note that the last inequality above follows from the result of \textbf{Case 1}, namely, $W_2(i+1,j,2n) \ge y$, since both $i+1$ and $j$ are even, and from the fact that $\tx \le 1-1/e$.

\begin{comment}
\begingroup
\allowdisplaybreaks
\begin{align*}
W_2(i,j,2n)&= \frac{1}{2n}\sum_{t=1}^{i}\tx_t \cdot \sfe^{-\frac{t}{2n}} + \frac{1}{2n}\sum_{t=i+1}^{i+j} \sfe^{-\frac{t}{2n}}+\bp{1-\frac{j}{2n}} \cdot (1-\tx_{i+j}) \\
%%
&= \frac{1}{2n}\sum_{t=1}^{i}\tx_t \cdot \sfe^{-\frac{t}{2n}} + \frac{1}{2n}\sum_{t=i+1}^{i+j} \sfe^{-\frac{t}{2n}}+\bluee{\bp{1-\frac{j/2}{n}} \cdot \sbp{1-x_{(i+1)/2+j/2}}} \\ 
%%
&\ge  \frac{1}{2n}\sum_{t=1}^{i}\tx_i \cdot \sfe^{-\frac{t}{2n}} +\bluee{\frac{1}{2n}\sum_{t=(i+1)+1}^{(i+1)+j} \sfe^{-\frac{i}{2n}}}+\bp{1-\frac{j/2}{n}} \cdot  \sbp{1-x_{(i+1)/2+j/2}} 
\\ 
%%
&\ge  \frac{1}{2n}\sum_{t=1}^{i}\tx_t \cdot \sfe^{-\frac{t}{2n}} + \bluee{\frac{1}{n}\sum_{k=(i+1)/2+1}^{(i+1)/2+j/2} \sfe^{-\frac{k}{n}}}+\bp{1-\frac{j/2}{n}} \cdot  \sbp{1-x_{(i+1)/2+j/2}} \\
%%
& \quad \text{\bp{following a similar analysis to \textbf{Case 1} when both $i$ and $j$ are even numbers.}}\\
%%
&=  \bluee{\frac{1}{2n}\sum_{t=1}^{i+1}\tx_t \cdot \sfe^{-\frac{t}{2n}}- \frac{1}{2n} \tx_{i+1} \cdot\sfe^{-\frac{i+1}{2n}}} + \frac{1}{n}\sum_{k=(i+1)/2+1}^{(i+1)/2+j/2} \sfe^{-\frac{k}{n}}+\bp{1-\frac{j/2}{n}} \cdot  \sbp{1-x_{(i+1)/2+j/2}} \\
%%
&\ge   \bluee{\frac{1}{n}\sum_{k=1}^{(i+1)/2}x_k \cdot \sfe^{-\frac{k}{n}} - \frac{1}{2n}}+  \frac{1}{n}\sum_{k=(i+1)/2+1}^{(i+1)/2+j/2} \sfe^{-\frac{k}{n}}+\bp{1-\frac{j/2}{n}} \cdot  \sbp{1-x_{(i+1)/2+j/2}} \\
%%
&=W_2\bp{(i+1)/2, j/2,n} \ge y-\frac{1}{2n}=\ty.
\end{align*}
\endgroup
\end{comment}

\medskip
%%the third case
\item[\textbf{Case 3}:]  $i \ge 1$, $j \ge 1$, and both $i$ and $j$ are odd numbers. In this case, we find that
\begingroup
\allowdisplaybreaks
\begin{align*}
W_2(i,j,2n)&= \frac{1}{2n}\sum_{t=1}^{i}\tx_t \cdot \sfe^{-\frac{t}{2n}} + \frac{1}{2n}\sum_{t=i+1}^{i+j} \sfe^{-\frac{t}{2n}}+\bp{1-\frac{j}{2n}} \cdot \bp{1-\tx_{i+j}} \\
&= \frac{1}{2n}\sum_{t=1}^{i-1}\tx_t \cdot \sfe^{-\frac{t}{2n}} +  \frac{1}{2n} \cdot \tx_i \cdot e^{-\frac{i}{2n}}+\frac{1}{2n}\sum_{t=(i-1)+1}^{(i-1)+(j+1)} \sfe^{-\frac{t}{2n}}-\frac{1}{2n}\cdot \sfe^{-\frac{i}{2n}}\\
&+\bp{1-\frac{j+1}{2n}} \cdot \bp{1-\tx_{i+j}} +\frac{1}{2n}\cdot  \bp{1-\tx_{i+j}}\\
&=W_2(i-1,j+1,2n)+ \frac{1}{2n} \cdot \tx_{i} \cdot e^{-\frac{i}{2n}}-\frac{1}{2n}\cdot \sfe^{-\frac{i}{2n}}+\frac{1}{2n}\cdot  \bp{1-\tx_{i+j}} \\
&=W_2(i-1,j+1,2n)+\frac{1}{2n}\cdot  \bp{1-\tx_{i+j}}-\frac{1}{2n}\cdot e^{-\frac{i}{2n}} \cdot \bp{1-\tx_{i}}\\
&\geq W_2(i-1,j+1,2n)+\frac{1}{2n}\cdot  \bp{\tx_{i}-\tx_{i+j}}\\
&\geq y-\frac{1}{2n}\cdot (1-1/e)=\ty.
%& \text{\bp{following the result of \textbf{Case 1} when both $i$ and $j$ are even numbers.}}\\
\end{align*}
\endgroup

%%the fourth case
\medskip
\item[\textbf{Case 4}:]  $i \ge 1$, $j \ge 1$, $i$ is even while $j$ is odd. In this case, we find that
\begingroup
\allowdisplaybreaks
\begin{align*}
&W_2(i,j,2n)= \frac{1}{2n}\sum_{t=1}^{i}\tx_t \cdot \sfe^{-\frac{t}{2n}} + \frac{1}{2n}\sum_{t=i+1}^{i+j} \sfe^{-\frac{t}{2n}}+\bp{1-\frac{j}{2n}} \cdot \bp{1-\tx_{i+j}} \\
&= \frac{1}{2n}\sum_{t=1}^{i}\tx_t \cdot \sfe^{-\frac{t}{2n}} + \frac{1}{2n}\sum_{t=i+1}^{i+j+1} \sfe^{-\frac{t}{2n}}- \frac{1}{2n} \cdot  \sfe^{-\frac{i+j+1}{2n}}+\bp{1-\frac{j+1}{2n}} \cdot \bp{1-\tx_{i+1+j}}+\frac{1}{2n}  \cdot \bp{1-\tx_{i+1+j}} \\
&=W_2(i,j+1,2n)+\frac{1}{2n}  \cdot \bp{1-\tx_{i+1+j}-\sfe^{-\frac{i+j+1}{2n}}}\\
&\geq y-\frac{1}{2n}\cdot(1-1/e)=\ty.
\end{align*}
\endgroup

%%the fifth case
\medskip
\item[\textbf{Case 5}:] Consider the corner case when $i=0$, $j$ is even. 
\begingroup
\allowdisplaybreaks
\begin{align*}
W_2(0,j,2n)&= \frac{1}{2n}\sum_{t=1}^{j} \sfe^{-\frac{t}{2n}}+\bp{1-\frac{j}{2n}} \cdot \bp{1-\tx_{j}} \\
&=\frac{1}{2n}\sum_{k=1}^{j/2} \bp{\sfe^{-\frac{2k-1}{2n}}+\sfe^{-\frac{2k}{2n}}}+\bp{1-\frac{j/2}{n}} \cdot \bp{1-\tx_{j}}  \\ %&& \text{\bp{given $j$ is even}}\\
&\geq \frac{1}{n}\sum_{k=1}^{j/2} \sfe^{-\frac{k}{n}}+\bp{1-\frac{j/2}{n}} \cdot \bp{1-x_{j/2}}=W_2(0,j/2,n) \ge y >\ty.
\end{align*}
\endgroup

%%the sixth case
\medskip
\item[\textbf{Case 6}:] Consider the corner case when $i=0$, $j$ is odd. Note that since $j \le 2n-1$, $(j+1)/2 \le n$.
\begingroup
\allowdisplaybreaks
\begin{align*}
W_2(0,j,2n)&= \frac{1}{2n}\sum_{t=1}^{j} \sfe^{-\frac{t}{2n}}+\bp{1-\frac{j}{2n}} \cdot \bp{1-\tx_{j}} \\
&= \frac{1}{2n}\sum_{t=1}^{j+1} \sfe^{-\frac{t}{2n}}+\bp{1-\frac{j+1}{2n}} \cdot \bp{1-x_{(j+1)/2}}+\frac{1}{2n} \bp{1-x_{(j+1)/2}-\sfe^{-\frac{j+1}{2n}}} \\
& \geq W_2(0, (j+1)/2, n)-\frac{1}{2n}  \cdot (1-1/e) \geq y-\frac{1}{2n}  \cdot (1-1/e)=\ty.
\end{align*}
\endgroup

%%the seven  case
\medskip
\item[\textbf{Case 7}:] Consider the corner case when $j=0$, $i$ is even. 
\begingroup
\allowdisplaybreaks
\begin{align*}
W_2(i,0,2n)&=  \frac{1}{2n}\sum_{t=1}^{i}\tx_t \cdot \sfe^{-\frac{t}{2n}}+\bp{1-\tx_{i}} \\
& \geq  \frac{1}{n}\sum_{k=1}^{i/2} x_k \cdot \sfe^{-\frac{k}{n}}+\bp{1-x_{i/2}}=W_2(i,0,n) \ge y >\ty.
\end{align*}
\endgroup

%%the eight  case
\medskip
\item[\textbf{Case 8}:] Consider the corner case when $j=0$, $i$ is odd. 
\begingroup
\allowdisplaybreaks
\begin{align*}
W_2(i,0,2n)&=  \frac{1}{2n}\sum_{t=1}^{i}\tx_t \cdot \sfe^{-\frac{t}{2n}}+\bp{1-\tx_{i}}\\
&= \frac{1}{2n}\sum_{t=1}^{i+1}\tx_t \cdot \sfe^{-\frac{t}{2n}}+\bp{1-x_{(i+1)/2}}-  \frac{1}{2n} \cdot \tx_{i+1} \cdot \sfe^{-\frac{i+1}{2n}}\\
& \geq  W_2(i,0, n)-\frac{1}{2n} \cdot (1-1/e) \ge y-\frac{1}{2n} \cdot (1-1/e) =\ty.
\end{align*}
\endgroup
\end{enumerate}
Summarizing all the analyses above, we establish the feasibility of $(\ty, \tbx)$ in $\LP(2n)$.
\end{proof}

%% a new section
\section{Proof of Lemma~\ref{lem:20-ub}}\label{app:20-ub}

%%% proof of upper bound of \eta(\infty)
\begin{proof}%[\textbf{Proof of Lemma~\ref{lem:20-ub}}.]
 Let $\LP(n)$ and $\LP(2n)$ denote  \LP~\eqref{lp:obj} parameterized by $n$ and $2n$, respectively. 
Based on the feasibility of $(\ty,\tbx)$ in  $\LP(2n)$, we verify the feasibility of the solution $(y, \x)$ in  $\LP(n)$ as follows.

\medskip
\textbf{For Constraints~\eqref{lp:cons-1} and~\eqref{lp:cons-2}}: By definition~\eqref{def:2}, we find that $\x=(x_t)$ is clearly non-decreasing with $x_0=\tx_0 \ge 1-e^{-0/(2n)}=0$ and $x_{n}=\tx_{2n}=1-1/e$. The only remaining issue is to show that $x_t \ge 1-e^{-t/n}$ for any $1 \le t \le n-1$. Observe that in that case, 
\begin{align*}
x_t &=\frac{\tx_{2t}+\tx_{2t+1}}{2} \geq \frac{1}{2}\bp{1-e^{-\frac{2t}{2n}}+1-e^{-\frac{2t+1}{2n}}}>\frac{1}{2}\bp{1-e^{-\frac{2t}{2n}}+1-e^{-\frac{2t}{2n}}}=1-e^{-\frac{t}{n}},
\end{align*}
which establishes the claim that $x_t \ge 1-e^{-\frac{t}{n}}$ for all $1 \le t \le n-1$.

\medskip
\textbf{For Constraints~\eqref{lp:cons-3}}: Let $W_1(2n)$ and $W_1(n)$ represent the value of the right-hand side of Constraint~\eqref{lp:cons-3} in $\LP(2n)$ under the solution $\tbx$ and in $\LP(n)$ under $\x$, respectively. Observe that $W_1(2n) \geq \ty$ due to the feasibility of $(\ty, \tbx)$ in $\LP(2n)$. We find that
\begingroup
\allowdisplaybreaks
\begin{align*}
W_1(n) &= \frac{1}{n}\sum_{t=1}^{n}x_t \cdot \sfe^{-\frac{t}{n}} +e^{-1} (1-1/e)\\
&= \frac{1}{n}\sum_{t=1}^{n-1} \frac{1}{2}\bp{\tx_{2t}+\tx_{2t+1}} \cdot  \sfe^{-\frac{t}{n}} + \frac{1}{n} \cdot \tx_{2n} \cdot e^{-1} +e^{-1} (1-1/e)\\
&= \frac{1}{2n} \cdot  \tx_1 \cdot  \sfe^{-\frac{1}{2n}}+\frac{1}{2n}\sum_{t=1}^{n-1}\bp{\tx_{2t} \cdot \sfe^{-\frac{2t}{2n}}+\tx_{2t+1} \cdot \sfe^{-\frac{2t}{2n}}}\\
&\quad + \frac{1}{2n} \cdot \tx_{2n} \cdot e^{-1} +\frac{1}{2n} \cdot \tx_{2n} \cdot e^{-1}+e^{-1} (1-1/e)- \frac{1}{2n} \cdot  \tx_1 \cdot  \sfe^{-\frac{1}{2n}}\\
& \ge  \frac{1}{2n} \cdot  \tx_1 \cdot  \sfe^{-\frac{1}{2n}}+\frac{1}{2n}\sum_{t=1}^{n-1}\bp{\tx_{2t} \cdot \sfe^{-\frac{2t}{2n}}+\tx_{2t+1} \cdot \sfe^{-\frac{2t+1}{2n}}}\\
&\quad + \frac{1}{2n} \cdot \tx_{2n} \cdot e^{-1} +\frac{1}{2n} \cdot \tx_{2n} \cdot e^{-1}+e^{-1} (1-1/e)- \frac{1}{2n} \cdot  \tx_1 \cdot  \sfe^{-\frac{1}{2n}}\\
& =W_1(2n)+ \frac{1}{2n} \cdot \bp{ \tx_{2n} \cdot e^{-1}-\tx_1 \cdot  \sfe^{-\frac{1}{2n}}}
\\ %%
& \geq \ty +\frac{1}{2n} \cdot \bp{ \tx_{2n} \cdot e^{-1}-\tx_1 \cdot  \sfe^{-\frac{1}{2n}}}>
\ty-\frac{1}{2n} \cdot (1-1/e)=y.
\end{align*}
\endgroup

%% the last constraint
\medskip
\textbf{For Constraints~\eqref{lp:cons-4}}: For each given $i \in (n)$ and $j \in (n-i)$, let $W_2(i,j,n)$ represent the value of the right-hand side of Constraint~\eqref{lp:cons-4} associated with $(i,j)$ in $\LP(n)$ under the solution $\x$. Similarly, for each given $\hi \in (2n)$ and $\hj \in (2n-\hi)$, let $W_2(\hi,\hj, 2n)$ represent the value of the right-hand side of Constraint~\eqref{lp:cons-4} associated with $(\hi,\hj)$ in $\LP(2n)$ under the solution $\tbx$. Note that $W_2(\hi, \hj,2n) \geq \ty$ due to the feasibility of $(\y, \tbx)$ in $\LP(2n)$ for any $\hi \in (2n)$ and $\hj \in (2n-\hi)$. 

%\textbf{Case 1}: 
Consider a given pair $(i,j)$ with $0 \le i, j \le n$ and $i+j < n$.
\begingroup
\allowdisplaybreaks
\begin{align*}
&W_1(i,j,n) = \frac{1}{n}\sum_{t=1}^{i} x_t \cdot \sfe^{-\frac{t}{n}} + \frac{1}{n}\sum_{t=i+1}^{i+j} \sfe^{-\frac{t}{n}}+\bp{1-\frac{j}{n}} \cdot \bp{1-x_{i+j}}\\
&= \frac{1}{n}\sum_{t=1}^{i} \frac{1}{2}\bp{ \tx_{2t}+\tx_{2t+1}} \cdot \sfe^{-\frac{t}{n}}+
\frac{1}{n}\sum_{t=i+1}^{i+j} \sfe^{-\frac{t}{n}}+\bp{1-\frac{2j}{2n}} \cdot \bp{1-\frac{1}{2} \bp{\tx_{2i+2j}+\tx_{2i+2j+1}}}  \\
&= \frac{1}{2n}\sum_{t=1}^{i} \bp{ \tx_{2t}\cdot \sfe^{-\frac{2t}{2n}}+\tx_{2t+1}\cdot \sfe^{-\frac{2t}{2n}}} + \frac{1}{2n}\sum_{t=i+1}^{i+j} \bp{\sfe^{-\frac{2t}{2n}}+ \sfe^{-\frac{2t}{2n}}}+\bp{1-\frac{2j}{2n}} \cdot \bp{1-\frac{1}{2} \bp{\tx_{2i+2j}+\tx_{2i+2j+1}}} 
\\ %%
&\geq  \frac{1}{2n}\sum_{t=1}^{i} \bp{ \tx_{2t}\cdot \sfe^{-\frac{2t}{2n}}+\tx_{2t+1}\cdot \sfe^{-\frac{2t+1}{2n}}} + \frac{1}{2n}\sum_{t=i+1}^{i+j} \bp{\sfe^{-\frac{2t}{2n}}+ \sfe^{-\frac{2t+1}{2n}}}+\bp{1-\frac{2j}{2n}} \cdot \bp{1-\tx_{2i+2j+1}}\\ 
&=\frac{1}{2n}\sum_{k=1}^{2i+1}  \tx_{k}\cdot \sfe^{-\frac{k}{2n}} + \frac{1}{2n}\sum_{k=2i+1+1}^{2i+1+2j} \sfe^{-\frac{k}{2n}}+\bp{1-\frac{2j}{2n}} \cdot \bp{1-\tx_{2i+2j+1}}-\frac{1}{2n} \cdot \tx_1 \cdot \sfe^{-\frac{1}{2n}}\\
&=W_2(2i+1, 2j)-\frac{1}{2n} \cdot \tx_1 \cdot \sfe^{-\frac{1}{2n}} \geq \ty -\frac{1}{2n} \cdot \tx_1 \cdot \sfe^{-\frac{1}{2n}} \geq \ty -\frac{1}{2n} \cdot (1-1/e)=y.
\end{align*}
\endgroup
For the corner case when $i+j=n$, we can verify that the same result holds by following a similar analysis as above. Summarizing all the above analyses, we establish the feasibility of $(y, \x)$ in $\LP(n)$.  
\end{proof}

%%%
\begin{comment}
\medskip
\textbf{Case 2}: For the corner case when $i+j=n$, we find that
\begingroup
\allowdisplaybreaks
\begin{align*}
&W_1(i,j,n) = \frac{1}{n}\sum_{t=1}^{i} \frac{1}{2}\bp{ \tx_{2t}+\tx_{2t+1}} \cdot \sfe^{-\frac{t}{n}}+
\frac{1}{n}\sum_{t=i+1}^{i+j} \sfe^{-\frac{t}{n}}+\bp{1-\frac{2j}{2n}} \cdot \bp{1-\tx_{2n}}
\\
\end{align*}
\endgroup
\end{comment}
%%%

%%% a new section
\section{Proof of Lemma~\ref{lem:2-23-a}}\label{app:pro-2}

There are multiple ways to prove the lemma. In this section, we provide an instance-specific proof by restating $W_1$ and $W_2$ as functions of $p$ and removing all integral terms obtained by taking $p \to 0$, in accordance with the original problem definition.

\begin{proof}
Recall that the integrals in $W_1$ and $W_2$ are all obtained under the default assumption of a vanishing existence probability on each edge, \ie $p \to 0$. For any given small value of $p \in (0,1)$, the randomized primal-dual framework, as described in Box~\eqref{box:pd}, technically does not require a continuous function $f$ but rather a sequence of discrete values, $\sbp{f(k \cdot p)}_{k \in \mathbb{N}_{\ge 1}}$, satisfying certain properties.

Specifically, for any sufficiently small value of $p$ with $1/p$ being an integer, any function $f \in \mathcal{C}_{\uparrow}([0,\infty), [0,1])$ in the randomized primal-dual framework~\eqref{box:pd} can be replaced by an infinite sequence $\bfa=(a_k)_k$, where $a_k:= f(k \cdot p)$ for each $k \in \mathbb{N}_{\ge 1}$. Additionally, the condition of $f \in \cF_3$ can be rephrased in terms of the sequence $\bfa$ as follows:
\begin{align}
&1-e^{- k\cdot p} \le a_k \le a_{k+1} \le 1-1/e, \quad \forall 1 \le k \le 1/p-1; \quad a_k=1-1/e, \quad \forall k \ge 1/p. \label{seq:23-a}
\end{align}

Consider a given sequence $\bfa$ satisfying Property~\eqref{seq:23-a}. Recall that the integral in $W_1$ represents the value of $\E\sbb{\alp_{u^*}+\sum_{v \in S} \beta_v}$ in the following scenario: the adversary assigns a total of $1/p$ arriving loads to the target node $u^*$ before any arrivals in $S$, followed by another total of $1/p$ \textbf{Type I} arriving loads assigned to $u^*$.  
Thus, for a given small value of $p$, the value of $W_1$ should be rephrased as  
\begin{align}\label{eqn:2-24-1}
W_1(p)=\sum_{k=1}^{1/p} e^{- k \cdot p} \cdot a_k \cdot p+  \sum_{k=1}^{1/p} e^{-(1+k \cdot p)} \cdot p.
\end{align}

Similarly, $W_2$ should be restated as  
\begin{align}\label{eqn:2-24-2}
W_2(p)=\min_{0\le i, j \le 1/p;~ i+j \le 1/p} \bP{ W_2(i,j, p):=\sum_{k=1}^i e^{-k \cdot p} \cdot a_k \cdot p+\sum_{k=1}^{j} e^{-(i+k) \cdot p} \cdot p+  \sbp{1-a_{i+j}} \cdot p \cdot \sbp{1/p-j}},
\end{align}
where $W_2(i,j,p)$ represents the value of $\E\sbb{\alp_{u^*}+\sum_{v \in S} \beta_v}$ in the following scenario: the adversary assigns a total of $i$ arriving loads to the target node $u^*$ before any arrivals in $S$, followed by a total of $j$ \textbf{Type I} arriving loads and another total of $1/p-j$ \textbf{Type II} arriving loads.\footnote{Note that both expressions of $W_1$ and $W_2$ should not include any terms of $w^*$, the weight on the target node $u^*$, since all such terms are canceled by the definition of $\cL[f]$ in~\eqref{eqn:Lf-d}.}

For a given small value $p \in (0,1)$, let $\cA_p$ denote the collection of all possible infinite sequences $\bfa=(a_k)$ \anhai{that satisfy} the conditions in~\eqref{seq:23-a}. As a result, we can restate Program~\eqref{eqn:max_f} as follows:
\begin{align}\label{eqn:max_f-2}
\bP{\max_{\bfa \in \cA_p} {\cL[\bfa]:=\min \bp{W_1(p), W_2(p)} }}, 
\end{align}
where $W_1(p)$ and $W_2(p)$ are defined in~\eqref{eqn:2-24-1} and~\eqref{eqn:2-24-2}, respectively. Observe that for any given $p \in (0,1)$, the maximization Program~\eqref{eqn:max_f-2} can be formulated as a linear program as follows:
\begingroup
\allowdisplaybreaks
\begin{align}
&\max y \label{lp2:obj} \\
& 1-e^{-k/n} \le a_k \le a_{k+1}, &&\forall k \in (1/p-1),  \label{lp2:cons-1}\\
&a_{1/p}= 1-e^{-1},  \label{lp2:cons-2}\\
& y \le \sum_{k=1}^{1/p} e^{- k \cdot p} \cdot a_k \cdot p+  \sum_{k=1}^{1/p} e^{-(1+k \cdot p)} \cdot p, \label{lp2:cons-3}
\\
& y \le  \sum_{k=1}^i e^{-k \cdot p} \cdot a_k \cdot p+\sum_{k=1}^{j} e^{-(i+k) \cdot p} \cdot p+  \sbp{1-a_{i+j}} \cdot p \cdot \sbp{1/p-j}, && \forall i \in (1/p), j\in (1/p-i). \label{lp2:cons-4}
\end{align}
\endgroup 
In the above LP, there are a total of $1/p+2$ variables, namely, $y$ and $a_k$ for $0 \le k \le 1/p$, along with $\Theta(p^{-2})$ constraints.\footnote{Note that $a_0$ is a dummy variable, and we can safely assume it to be $0$.}  Let $\tau(p)$ denote the optimal objective value of LP~\eqref{lp2:obj}. By definition, the optimal objective value of Program~\eqref{eqn:max_f} is equal to $\tau(0):=\lim_{p \to 0} \tau(p)$.

Note that the above LP~\eqref{lp2:obj} parameterized by $p$ is almost identical to \textbf{AUG-LP}~\eqref{lp:obj} parameterized by $n$, except for the right-hand expression of the third constraint. Specifically, the only difference is that the term $e^{-1}(1-1/e)$ on the right-hand side of Constraint~\eqref{lp:cons-3} in \textbf{AUG-LP}~\eqref{lp:obj} is replaced by $\sum_{k=1}^{1/p} e^{-(1+k \cdot p)} \cdot p$ in LP~\eqref{lp2:obj}. However, this incurs only an $O(p)$ error since
\begin{align*}
\left| \sum_{k=1}^{1/p} e^{-(1+k \cdot p)} \cdot p - e^{-1}(1 - 1/e) \right| = O(p).
\end{align*}
This implies that 
\begin{align*}
\big| \tau(p)-\eta(1/p) \big|=O(p) \Rightarrow \tau(0)=\lim_{p \to 0} \tau(p)=\lim_{n \to \infty} \eta(n)=\eta(\infty).
\end{align*}\end{proof}

\end{document}